\DeclareMathOperator{\Tr}{\mathrm{Tr}}
\newtheorem{theorem}{Theorem}
\newtheorem{lemma}[theorem]{Lemma}%
\newtheorem{remark}{Remark}%
\newtheorem{proposition}[theorem]{Propositon}%
\newcommand{\argmin}{\mathop{\rm argmin}\limits}
\newcommand{\frS}{\mathfrak{S}}
\def\mix{\mathop{\rm mix}}
\def\sgn{\mathop{\rm sgn}}
\def\SU{\mathop{\rm SU}}
\def\cH{{\mathcal{H}}}
\def\QED{\mbox{\rule[0pt]{1.5ex}{1.5ex}}}
\newcommand{\qed}{\hfill \QED}
 \newenvironment{proofof}[1]{\vspace*{5mm} \par \noindent
{\it Proof of #1:\hspace{2mm}}}{\qed
}
\def\Label#1{\label{#1}\ [\ \text{#1}\ ]\ }
\def\Label{\label}
\begin{document}

\title{Entanglement measures for detectability}

\author{Masahito Hayashi \IEEEmembership{Fellow, IEEE}
and Yuki Ito 
\thanks{The work of MH was supported 
in part by
the National Natural Science Foundation of
China (Grant No. 62171212).
}
\thanks{Masahito Hayashi is with 
School of Data Science, The Chinese University of Hong Kong,
Shenzhen, Longgang District, Shenzhen, 518172, China
International Quantum Academy, Futian District, Shenzhen 518048, China,
and
the Graduate School of Mathematics, Nagoya University, Nagoya, 464-8602, Japan
(e-mail:hmasahito@cuhk.edu.cn).
Yuki Ito is with 
the Graduate School of Mathematics, Nagoya University, Nagoya, 464-8602, Japan
(e-mail: ito.yuki.g7@s.mail.nagoya-u.ac.jp).} }
\markboth{M. Hayashi 
and Y. Ito: Entanglement measures for detectability}{}

\maketitle

\begin{abstract}
We propose new entanglement measures as the detection performance based on the hypothesis testing setting. We clarify how our measures work for detecting an entangled state by extending the quantum Sanov theorem.
Our analysis covers the finite-length setting. Exploiting this entanglement measure, we present how to derive an entanglement witness to detect the given entangled state by using the geometrical structure of this measure. We derive their calculation formulas for maximally correlated states, and propose their algorithms that work for general entangled states. In addition, we investigate how our algorithm works for solving the membership problem for separability.
Further, employing this algorithm, we propose 
a method to find entanglement witness for a given entangled state. 
\end{abstract}

\begin{IEEEkeywords} 
entanglement measure, entanglement detection, 
membership problem, hypothesis testing,
entanglement witness
\end{IEEEkeywords}


\section{Introduction}
Entanglement is a valuable resource for quantum information processing.
It is important to distinguish a given entangled state from 
separable states, i.e., non-entangled states.
However, detecting an entanglement state
has two meanings, the physical detection and the mathematical detection.
The physical detection means the experimental
detection of a generated entangled state.
Given a density matrix $\rho_{AB}$ being entangled,
when a physical system is prepared as the density matrix $\rho_{AB}$,
we verify whether the generated physical state is entangled.
This problem is a special case of quantum hypothesis testing.
In the mathematical detection,
given a density matrix $\rho_{AB}$, we study whether 
$\rho_{AB}$ is an entangled state or a separable state.
Since the mathematical detection
is often called the membership problem of separability,
in this paper, we consider that 
entanglement detection means the physical one
while this paper addresses these two different problems.
This paper aims to propose entangled measures,
and to present that
our entangled measures work for both topics.

Several practical methods to achieve 
entanglement detection have been proposed \cite{Guhne,BYE,AGD,Dimic,Saggio,Saggio2,CDKKM},
but they did not discuss the optimal performance for the detection.
On the other hand, to characterize the amount of entanglement,
many entangled measures have been proposed.
However, these proposed entangled measures have less relation with the experimental detectability of entanglement.
In fact, when Vedral et al \cite{Vedral1,Vedral2} introduced relative entropy of entanglement,
they pointed out the relation between the hypothesis testing and the relative entropy.
They employed the minimum relative entropy between the given entangled state 
and a separable state.
However, this measure expresses 
the performance of the discrimination between 
the given entangled state and the closest separable state in the sense of relative entropy.
When one hypothesis is chosen to be the set of all separable states,
this measure does not express the performance of this discrimination \cite{Berta}.
Until now, no study discussed the optimal detection efficiency 
for entanglement verification.

In this paper, to resolve this problem, 
we propose a new entanglement measure that expresses 
the optimal performance of the hypothesis testing, i.e., discrimination between 
the given entangled state and the set of separable states.
The reason is that 
quantum hypothesis testing is considered as a key topic in quantum Shannon theory
because it is related to various topics in quantum Shannon theory \cite{Q-cq-channel,H-text,WR,TH13-2,DMHB,Vazquez-Vilar,AJW,CH,BBGL}.
In this characterization, 
in the first step,
we employ the quantum Sanov theorem \cite{Bjelakovic,Notzel}
for hypothesis testing,
and consider the same minimization as 
relative entropy of entanglement by exchanging 
the first and second inputs in the relative entropy.
The proposed quantity is an entanglement monotone.
To enhance this characterization, we derive a much simpler upper bound of 
the error probability in the finite-length setting of this kind of hypothesis testing,
which can be considered another proof of the quantum Sanov theorem.
In addition, we derive the strong converse exponent of the same problem setting as the quantum Sanov theorem.
However, when we simply apply the quantum Sanov theorem,
we assume that 
the separable states are limited to the independent and identical distributed (iid) states of separable states.
To remove this assumption, 
in the second step,
we derive the optimal performance of this discrimination 
when possible separable states are allowed to 
take the form of a general separable state.

This entanglement measure works for finding an
entanglement witness to detect the target entangled state.
This relation is characterized by information geometry \cite{Amari-Nagaoka}.
Also,
we calculate the proposed entanglement measures
when a state is a maximally correlated state.
This class contains all pure states.
For this calculation, we introduce a notable relation with 
coherence measures,
and several calculation formulas for 
coherence measures.

In the next step, we propose an efficient algorithm to compute the proposed entanglement measure on the joint system $\cH_A\otimes \cH_B$.
It is quite difficult to directly calculate entanglement measures
because the calculations of entanglement measures
are directly related to the membership problem of separability, which is known as
NP-hard \cite{Gurvits1,Gurvits2,Gharibian,Ioannou}.
Similar to the paper \cite[Section 3]{Ioannou},
we employ $\delta$-nets on the second system $\cH_B$.
In the following, we assume that the first system $\cH_A$
has a larger or equal dimension.
That is, to optimize the choice of a separable state,
we fix as many states on $\cH_B$ as possible,
and optimize the choice of states on $\cH_A$.
Then, we define an approximate value of the proposed measure. 
Employing the quantum Arimoto-Blahut algorithm proposed by 
the paper \cite{RISB}, 
we propose an algorithm for the approximate value.
This algorithm is an iterative algorithm, and we show 
its convergence to the global minimum and its convergence speed.

As the third step, we apply our algorithm to the membership problem of separability.
Although it is known that this problem is NP-hard \cite{Gurvits1,Gurvits2,Gharibian,Ioannou},
several algorithms have been proposed for this problem.
As a typical method, Doherty et al. \cite{DPS1,DPS2} proposed a method to identify 
whether a given state is a separable state or not
by considering ``symmetric extensions'' and ``PPT-symmetric extensions''
in the second system $\cH_B$.
This kind of extension is called DPS hierarchy.
Later,  Navascu\'{e}s et al. \cite{MOP} evaluated its calculation complexity.
DPS hierarchy was discussed in \cite[(14)]{BCY}, \cite{HNW,Fawzi}.
The paper \cite{Ioannou} proposed another algorithm for this problem by using 
$\delta$-nets on the second system $\cH_B$.
The paper \cite{Ioannou} also discussed its calculation complexity.
It was shown that the calculation complexity of these preceding algorithms
are exponential only for the dimension of the second system $\cH_B$.
The papers \cite{MOP,Ioannou} evaluated the exponential component for 
the dimension of the second system $\cH_B$ in their algorithms.
The paper \cite[Eq. (38)]{MOP} evaluated the polynomial component for 
the dimension of the first system $\cH_A$.
This paper discussed our calculation complexity,
and showed that the calculation complexity of our algorithm
is exponential only for the dimension of the second system $\cH_B$.
In addition, we evaluated the exponential component for 
the dimension of the second system $\cH_B$ in our algorithm,
and evaluated the polynomial component for 
the dimension of the first system $\cH_A$.
As the comparison with the evaluation by \cite{MOP},
we find that our exponential component for 
the dimension of the second system $\cH_B$ 
is larger than that in the analysis \cite{MOP},
but our polynomial component for 
the dimension of the first system $\cH_A$
is smaller than that in the analysis \cite{MOP}.
This fact shows that 
our algorithm has an advantage over the algorithm based on 
``symmetric extensions'' and ``PPT-symmetric extensions''
when the dimension of the second system $\cH_B$ is fixed
and only the dimension of the first system $\cH_A$ increases.

Finally, we apply our algorithm to finding a separation hyperplane.
When we solve a conic linear programming with a general cone,
it is a key step to find a separation hyperplane for an element that does not belong to the cone \cite{GLS,Goberna,Kalantari,Nueda}.
Recently, a conic linear programming with the separable cone
has been applied to several problems in quantum information \cite{H-O,Chitambar}.
Therefore, it is desired to find a separation hyperplane for the separable cone.
In quantum information, a separation hyperplane for the separable cone 
is called an entanglement witness \cite{Bruss,Horodecki,Terhal,Lewenstein,Terhal2}.
From the viewpoint of entanglement theory, 
it is strongly desired to establish a method to find an entanglement witness
for a given state, but such a method has been established only for 
the limited cases \cite{TWKK,SCC}.
In this paper, based on our algorithm, we also propose an algorithm to find 
an entanglement witness, i.e., a separation hyperplane for an entangled state.
Although
the method based on DPS hierarchy does not directly yield an entanglement witness,
our method directly directly yields it, which is an advantage of our method.

Overall, in addition to the proposal of the new entangled measures,
the contributions of this paper are summarized as follows.
\begin{description}
\item[(A1)]
{\bf [Statistical hypothesis testing]}
We have shown that 
the proposed measures present the performance of physical detection
in the sense of hypothesis testing, which has not been pointed out before. 
(Sections \ref{S2} and \ref{S3})
\item[(A2)]
{\bf [Exponential quantum Sanov theorem]}
To support the above fact, we have derived 
the exponential version of the quantum Sanov theorem \cite{Bjelakovic,Notzel}.
This extended version clarifies the relation between
the hypothesis testing with composite hypothesis
and the minimum sandwiched relative R\'{e}nyi entropy,
which was not discussed in the existing studies
\cite{Bjelakovic,Notzel}.
Thanks to this improved evaluation,
we could derive
the performance of the hypothesis testing
when $H_0$ hypothesis is the set of all separable states over $n$ bipartite systems, as opposed to an IID $n$-fold tensor product of a separable state.
(Sections \ref{ESanov} and \ref{S3})

\item[(A3)]
{\bf [Maximally correlated states]}
We have calculated the proposed measures
when the entangled state is maximally correlated.
For this calculation, we have calculated the similar value
for the coherence.
(Sections \ref{S4} and \ref{S5})

\item[(A4)]
{\bf [Algorithm \& membership problem]}
We have shown that 
the proposed measures 
can be calculated by using the generalized 
Arimoto-Blahut algorithm
due to the good shape of the proposed measure, 
which has not been pointed out before.
This fact shows that 
the proposed measures 
can be used for solving the membership problem for separable states.
The superiority between this method and the existing method \cite{DPS1,DPS2,MOP} for the membership problem
depends on the parameters of this problem.
In addition, we have evaluated the relation between 
the calculation complexity and
the precision of one of the proposed algorithms.
We have made a numerical demonstration.
(Sections \ref{S6}, \ref{S7}, and \ref{S8})

\item[(A5)]
{\bf [Finding entanglement witness]}
We have revealed the geometrical structure 
of the proposed measure
by using the viewpoint of information geometry.
Using this geometrical structure, 
we have proposed a method to find an entanglement witness by using the proposed measure.
This fact shows that the calculation of 
the proposed measure can be used for finding 
an entanglement witness of a given state.
Applying the algorithm stated in (A4),
we have proposed an algorithm to find 
an entanglement witness and
such a method has been strongly desired 
from the viewpoint of entanglement theory \cite{TWKK,SCC}.
In addition, information-geometric characterization enables us to simplify the proof of the additivity of the proposed measure while the additivity was shown by \cite{EAP}.
(Sections \ref{S9} and \ref{S7})
\end{description}

The remainder of this paper is organized as follows.
Section \ref{S2} 
formulates the detection of an entangled state 
as a problem of hypothesis testing under the iid setting.
Section \ref{S3}
addresses this problem under the non-iid setting.
Section \ref{S9} explains 
information geometrical characterization of our proposed measure and its relation to an entanglement witness.
Section \ref{S4}
proposes and studies new coherence measures as the preparation of the calculation of our entanglement measures.
Section \ref{S5}
derives several calculation formulas for 
our entanglement measures for maximally entangled states
based on the analysis in Section \ref{S4}.
Section \ref{S6}
proposes algorithms for our entanglement measures
by using $\delta$-net.
Section \ref{S7}
applies our results to the membership problem for separability
and separation hyperplane.
Section \ref{S8}
numerically calculates our entanglement measure,
and check how our method works for the membership problem.
Section \ref{S9} explains that our method finds an entanglement witness. 
Section \ref{ESanov} presents our exponential version of 
quantum Sanov theorem.
Section \ref{S11} makes conclusions and discussions.
Appendix \ref{AP1} presents concrete constructions of $\delta$-nets.
Appendix \ref{AP2} proves the theorem for the relation 
between our entanglement measures
and our proposed coherence measure.
Appendix \ref{AP3} proves theorems for the convergence 
for our algorithms.
Appendix \ref{AP4} proves our exponential version of 
quantum Sanov theorem.
Appendix \ref{AP5} proves our exponential evaluation of 
error probabilities in our problem of hypothesis testing under the non-iid setting.

\section{Detecting entangled state with iid setting}\Label{S2}
We consider the joint system 
$\cH_A\otimes \cH_B$, and discuss how efficiently 
an entangled state $\rho_{AB}$ on $\cH_A\otimes \cH_B$
is distinguished from 
the set ${\cal S}_{AB}$ of separable states on $\cH_A\otimes \cH_B$.
This problem can be formulated as a hypothesis testing problem 
to distinguish $\rho_{AB}$ from ${\cal S}_{AB}$.
For this detection, 
we assume the $n$-fold independently and identical distributed (iid) condition.
That is, we have the following two hypotheses for the possible state as follows.
\begin{align}
H_0:& \hbox{The state is }\rho_{AB}^{\otimes n}.\Label{H0}\\
H_1:& \hbox{The state is }\sigma_{AB}^{\otimes n} \hbox{ with }\sigma_{AB} \in {\cal S}_{AB}\Label{H1}.
\end{align}
To address this problem, we focus on the following quantity.
\begin{align}
&\beta_{\epsilon,n}(\rho_{AB})\nonumber \\
:=&
\min_{0\le T\le I}
\Big\{\Tr T \rho_{AB}^{\otimes n} \Big|
\max_{\sigma_{AB}\in {\cal S}_{AB}}
\Tr (I-T)\sigma_{AB}^{\otimes n}
\le \epsilon
\Big\}.\Label{BN6}
\end{align}

In the above setting, 
we do not impose the LOCC restriction to our measurement unlike \cite{Guhne,Terhal}.
However, these existing studies did not discuss 
the optimal detection efficiency 
for the entanglement verification.
Hence, the above quantity expresses the performance of 
discrimination at the fundamental level.

Here, it is better to explain the relation with 
the entanglement verification \cite{HST,HMT,H-ent,HM,TM,ZH}.
Entanglement verification is the task of identifying a given entangled state within a certain error.
Since entanglement verification is a more difficult task,
the detection efficiency of the entanglement verification
is worse than 
the detection efficiency of the entanglement detection.
Existing studies for entanglement verification
are limited to the verification for 
several special entangled states.
No general theory covers 
the entanglement verification for general entangled states.

The application of quantum Sanov theorem \cite{Bjelakovic,Notzel} guarantees the following
\begin{align}
&\lim_{n \to \infty}-\frac{1}{n} \log \beta_{\epsilon,n}(\rho_{AB})
=E^*(\rho_{AB}) \nonumber \\
:=&\min_{\sigma_{AB} \in {\cal S}_{AB}}
D(\sigma_{AB}\|\rho_{AB}),\Label{BVD}
\end{align}
where 
$D(\sigma\|\rho):= \Tr \sigma (\log \sigma -\log \rho)$.
This value is different from relative entropy of entanglement
$E(\rho_{AB}):= \min_{\sigma_{AB} \in {\cal S}_{AB}} D(\rho_{AB}\| \sigma_{AB})$.
This fact shows that 
our measure $E^*(\rho_{AB})$ expresses 
the limit of the performance to distinguish the entangled state 
$\rho_{AB}$ from ${\cal S}_{AB}$.
In contrast, when the choice of $H_0$ and $H_1$ is exchanged,
$E(\rho_{AB})$ does not have the same meaning 
\cite{Berta}.
The proof of \eqref{BVD} is given as follows.
The part $\le$ in \eqref{BVD} follows from the strong converse part of 
quantum Stein's lemma.
The paper \cite{Bjelakovic} showed the part $\ge$ in \eqref{BVD}.
However, they did not consider the case when $\epsilon$ is exponentially small.
To enhance the meaning of the quantity 
$E^*(\rho_{AB})$,
it is needed to evaluate $\beta_{\epsilon,n}(\rho_{AB})$ in such a case.

\begin{table}[t]
\caption{List of mathematical symbols 
for detecting entangled state
from Section II onwards.}
\label{symbols}
\begin{center}
\begin{tabular}{|l|l|l|}
\hline
Symbol& Description & Eq. number  \\
\hline
\multirow{2}{*}{${\cal S}_{AB}$} &Set of separable states on  &
\\
&$\cH_A\otimes \cH_B$&\\
\hline
\multirow{3}{*}{$\beta_{\epsilon,n}(\rho_{AB})$} & 
Minimum error probability   &  \multirow{3}{*}{\eqref{BN6}}\\
&with constraint $\epsilon$ 
when 
&\\
&$H_1$ is given as \eqref{H1} &\\
\hline
$E^*(\rho_{AB})$ &Minimum relative entropy & \eqref{BVD}\\
\hline
\multirow{2}{*}{$E_\alpha^*(\rho_{AB})$} 
&Minimum relative 
R\'{e}nyi & \multirow{2}{*}{\eqref{BNT}}\\
&entropy of order $\alpha$ & \\
\hline
\multirow{2}{*}{${\cal S}_{AB}^n$} & 
Set of separable states on &\\
& $\cH_A^{\otimes n} \otimes \cH_B^{\otimes n}$ &\\
\hline
\multirow{3}{*}{$\overline{\beta}_{\epsilon,n}(\rho_{AB})$} & 
Minimum error probability   &  \multirow{3}{*}{\eqref{BN6}}\\
&with constraint $\epsilon$ 
when 
&\\
&$H_1$ is given as \eqref{BVR} &\\
\hline
\multirow{2}{*}{$\overline{E}^*(\rho_{AB})$}
&Regularized minimum  &\multirow{2}{*}{\eqref{BVD2}} \\
&relative entropy& \\
\hline
\multirow{2}{*}{$\overline{E}_\alpha^*(\rho_{AB})$}
&Regularized minimum relative 
&\multirow{2}{*}{\eqref{BVD5}} \\
&R\'{e}nyi entropy of order $\alpha$ &\\
\hline
\multirow{2}{*}{$\tilde{E}^*(\rho_{AB})$}
&The above value with 
&\multirow{2}{*}{\eqref{BNF}} \\
&$\alpha=1-0$& \\
\hline
$\sigma_{AB}^*$ &Closest separable state & \eqref{NMP8} \\
\hline
${\cal S}_D$ &Divergence sphere & \eqref{VCE}
\\
\hline
${\cal M}( \sigma_{AB}^*,\rho_{AB})$
& Separation hyperplane & \eqref{BCP}
\\
\hline
\multirow{2}{*}{$|\Phi\rangle$}&
Maximally entangled state
&\\
&$\frac{1}{\sqrt{d}}\sum_{j=0}^{d-1}|j\rangle |j\rangle$ & \\
\hline
\multirow{2}{*}{$E_{{\cal D}}^*(\rho_{AB})$}
&Minimum relative entropy & 
\multirow{2}{*}{\eqref{NMP4}} \\
&with fixed finite set ${\cal D}$& \\
\hline
$\epsilon_2({\cal D})$ 
&& \eqref{VCY}\\
\hline
\multirow{3}{*}{$E_{{\alpha,\cal D}}^*(\rho_{AB})$}
&Minimum relative R\'{e}nyi & 
\multirow{3}{*}{\eqref{NMT4}} \\
&entropy of order $\alpha$ & \\
&with fixed finite set ${\cal D}$& \\
\hline
\end{tabular}
\end{center}
\end{table}

For this aim, using the sandwiched relative R\'{e}nyi entropy
$D_{\alpha}(\sigma\|\rho):=\frac{1}{\alpha-1}\log 
\Tr (\rho^{\frac{1-\alpha}{2\alpha}}
\sigma \rho^{\frac{1-\alpha}{2\alpha}})^\alpha$,
we generalize $E^*(\rho_{AB})$ as
\begin{align}
E_{\alpha}^*(\rho_{AB})
:=\min_{\sigma_{AB} \in {\cal S}_{AB}}D_{\alpha}(\sigma_{AB}\|\rho_{AB}).\Label{BNT}
\end{align}
As presented in Section \ref{ESanov}, we have the following 
strengthened version of \eqref{BVD}.
\begin{theorem}\Label{TH1A}
We have the following relations
\begin{align}
&\lim_{n \to \infty}
-\frac{1}{n}\log \beta_{e^{-nr},n}(\rho_{AB})\nonumber \\
\ge &
\sup_{0\le \alpha < 1}\frac{
(1-\alpha) E_{\alpha}^*(\rho_{AB}) - \alpha r}{1-\alpha}\Label{NBT} ,\\
&-\frac{1}{n} \log \beta_{1-e^{-nr},n}(\rho_{AB})\nonumber \\
\le &
\inf_{\alpha \ge 1}\frac{
(1-\alpha) E_{\alpha}^*(\rho_{AB}) - \alpha r}{1-\alpha}\Label{NBT2}.
\end{align}
More precisely, we have
\begin{align}
&-\frac{1}{n}\log \beta_{e^{-nr},n}(\rho_{AB})\nonumber \\
\ge & 
\sup_{0 \le \alpha < 1}\frac{(1-\alpha) E_{\alpha}^* (\rho_{AB}) - \alpha r}{1-\alpha}\nonumber \\
&-\frac{(d_{AB}+2)(d_{AB}-1)}{n}\log (n+1),
\end{align}
where $d_{AB}$ is the dimension of $\cH_A\otimes \cH_B$.
\end{theorem}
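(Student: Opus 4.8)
The plan is to prove the two inequalities separately. For the strong converse bound \eqref{NBT2}, write $\beta_{\epsilon,n}(\rho_{AB}\,\|\,\sigma_{AB}^{\otimes n})$ for the analogue of \eqref{BN6} in which the set ${\cal S}_{AB}$ is replaced by the single state $\sigma_{AB}$. The feasible set of tests in \eqref{BN6} for the composite alternative is contained in the feasible set for the single alternative $\sigma_{AB}^{\otimes n}$, so $\beta_{1-e^{-nr},n}(\rho_{AB})\ge\beta_{1-e^{-nr},n}(\rho_{AB}\,\|\,\sigma_{AB}^{\otimes n})$ for every $\sigma_{AB}\in{\cal S}_{AB}$. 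I would then invoke the strong converse exponent for ordinary quantum hypothesis testing expressed through the sandwiched R\'{e}nyi divergence (Nagaoka's operator Markov inequality / Mosonyi--Ogawa), which gives $-\frac1n\log\beta_{1-e^{-nr},n}(\rho_{AB}\,\|\,\sigma_{AB}^{\otimes n})\le D_\alpha(\sigma_{AB}\|\rho_{AB})-\frac{\alpha r}{1-\alpha}$ for every $\alpha>1$. Since a good test for the composite alternative is in particular a good test against each member, and the member may be chosen separately for each $\alpha$, taking the infimum over $\sigma_{AB}\in{\cal S}_{AB}$ replaces $D_\alpha(\sigma_{AB}\|\rho_{AB})$ by $E_\alpha^*(\rho_{AB})$, and the infimum over $\alpha>1$ gives \eqref{NBT2}.

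For the direct bound it suffices to prove the displayed finite-length inequality and let $n\to\infty$. Fix $\alpha\in(0,1)$ and let $\sigma_\alpha^*\in{\cal S}_{AB}$ attain $E_\alpha^*(\rho_{AB})=D_\alpha(\sigma_\alpha^*\|\rho_{AB})$; this minimizer exists because ${\cal S}_{AB}$ is compact and $\sigma\mapsto D_\alpha(\sigma\|\rho_{AB})$ is continuous and in fact convex on ${\cal S}_{AB}$ for $\alpha\in(0,1)$. I would then pinch $(\sigma_\alpha^*)^{\otimes n}$ onto the eigenspaces of $\rho_{AB}^{\otimes n}$ --- there are at most $\binom{n+d_{AB}-1}{d_{AB}-1}\le(n+1)^{d_{AB}-1}$ of them, so the pinching map $\mathcal{E}$ costs a factor $(n+1)^{d_{AB}-1}$ and its output commutes with $\rho_{AB}^{\otimes n}$ --- and take $T_n$ to be the Neyman--Pearson test $\{\,\mathcal{E}((\sigma_\alpha^*)^{\otimes n})\ge e^{n\theta}\rho_{AB}^{\otimes n}\,\}$, with $\theta$ chosen so that the type-I error is $e^{-nr}$ up to polynomial factors. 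Because everything in sight commutes with $\rho_{AB}^{\otimes n}$, the type-II error $\Tr T_n\rho_{AB}^{\otimes n}$ is controlled by the standard Chernoff/Markov estimate, and optimizing $\theta$ gives a type-II exponent of $E_\alpha^*(\rho_{AB})-\frac{\alpha r}{1-\alpha}$ minus an explicit correction of order $\log(n+1)/n$ whose coefficient is pinned down below; taking the supremum over $\alpha\in[0,1)$ then finishes the proof.

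The step I expect to be the real obstacle is controlling the type-I error $\Tr(I-T_n)\sigma_{AB}^{\otimes n}$ \emph{uniformly} over all $\sigma_{AB}\in{\cal S}_{AB}$, not just over the $\sigma_\alpha^*$ to which $T_n$ was tailored. For this I would combine two ingredients. First, a variational (Pythagorean-type) property of the minimizer: the first-order optimality condition for $\sigma_\alpha^*$ on the convex set ${\cal S}_{AB}$ produces an operator $W_\alpha$ --- a separating hyperplane, i.e., an entanglement witness in the sense of Section \ref{S9} --- with $\Tr W_\alpha\sigma_{AB}\ge\Tr W_\alpha\sigma_\alpha^*$ for every $\sigma_{AB}\in{\cal S}_{AB}$, which forces the $\sigma_\alpha^*$-tailored test to accept the alternative no more often under an arbitrary $\sigma_{AB}$ than under $\sigma_\alpha^*$. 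Second, to upgrade this into an $e^{-nr}$ bound with an \emph{explicit} polynomial prefactor I would dominate the whole family $\{\sigma_{AB}^{\otimes n}\}_{\sigma_{AB}\in{\cal S}_{AB}}$ by a single permutation-invariant universal state $\omega_n$ with $\sigma_{AB}^{\otimes n}\le(n+1)^{d_{AB}^2-1}\omega_n$, built from the symmetric subspace of the $d_{AB}^2$-dimensional purifying system. Multiplying the factor $(n+1)^{d_{AB}-1}$ from the pinching by the factor $(n+1)^{d_{AB}^2-1}$ from the universalization reproduces exactly the claimed correction $(n+1)^{(d_{AB}+2)(d_{AB}-1)}$. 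I expect the careful bookkeeping of this constant, and checking that neither pinching nor universalization degrades the sandwiched (rather than Petz) R\'{e}nyi exponent $E_\alpha^*(\rho_{AB})$, to be where most of the technical work lies, presumably in Appendix \ref{AP4}.
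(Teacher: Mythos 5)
Your strong-converse argument for \eqref{NBT2} is essentially the paper's: reduce to a simple alternative $\sigma_{AB}^{\otimes n}$ for each fixed $\sigma_{AB}\in{\cal S}_{AB}$ and apply the Nagaoka-type data-processing bound for $D_\alpha$, then optimize over $\sigma_{AB}$ and $\alpha$. That half is fine.

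The direct part has a genuine gap, and it sits exactly at the step you flag as the obstacle: uniform control of $\Tr(I-T_n)\sigma_{AB}^{\otimes n}$ over all of ${\cal S}_{AB}$ for a test $T_n$ tailored to the single minimizer $\sigma_\alpha^*$. Neither of your two ingredients closes it. (a) The first-order optimality condition gives only a single-copy linear inequality $\Tr W_\alpha\sigma_{AB}\ge\Tr W_\alpha\sigma_\alpha^*$; this says nothing about the acceptance probability of an $n$-copy likelihood-ratio test built from $(\sigma_\alpha^*)^{\otimes n}$ under a different product state $\sigma_{AB}^{\otimes n}$, whose spectrum concentrates in an entirely different region of the pinched basis. (b) The universal-state domination $\sigma_{AB}^{\otimes n}\le(n+1)^{d_{AB}^2-1}\omega_n$ is too crude: the same $\omega_n$ also dominates $\rho_{AB}^{\otimes n}$ up to a polynomial factor, so $\Tr(I-T_n)\omega_n\ge\mathrm{poly}(n)^{-1}\,\Tr(I-T_n)\rho_{AB}^{\otimes n}$, which is $\approx\mathrm{poly}(n)^{-1}$ whenever the type-II error $\Tr T_n\rho_{AB}^{\otimes n}$ is exponentially small. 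Hence $\Tr(I-T_n)\omega_n$ can never be driven to $e^{-nr}$, and the route through $\omega_n$ is dead. The numerical match of $(n+1)^{d_{AB}-1}\cdot(n+1)^{d_{AB}^2-1}=(n+1)^{(d_{AB}+2)(d_{AB}-1)}$ with the stated correction is a coincidence.

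The paper's resolution (Appendix \ref{AP4}) is different and is its main technical novelty. It pinches via the Schur--Weyl decomposition and takes the test to be the \emph{union} $T_{n,\alpha,r}=\bigcup_{\sigma\in{\cal S}_{AB}}T_n(\sigma,R_{n,\alpha,r})$ of the individual likelihood-ratio projections, all of which commute because each is diagonal in the fixed basis $\{|u_{\lambda,j}\rangle\otimes{\cal V}_\lambda\}$ determined by $\rho_{AB}^{\otimes n}$. Uniformity of the type-I error is then automatic, since $T_n(\sigma,R)\le T_{n,\alpha,r}$ for every $\sigma$. The type-II error does not blow up because the union over the continuum ${\cal S}_{AB}$ collapses to a union of at most $d_n\le(n+1)^{(d_{AB}+2)(d_{AB}-1)/2}$ projections (one candidate $\sigma_{\lambda,j}$ per basis vector), so a union bound costs only the factor $d_n$, and a further $d_n^{1-\alpha}$ arises from converting the pinched Petz-type exponent into the sandwiched $E_\alpha^*$; together these give the $d_n^2$ prefactor and hence the stated $(d_{AB}+2)(d_{AB}-1)$ correction. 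You would need to replace your single-minimizer test with this union construction (or something equivalent) for the direct bound to go through.
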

Inequality \eqref{NBT} presents
an attainable exponential rate of the failure probability for detecting the entangled state $\rho_{AB}$
when we impose an exponential constraint for the incorrect detection of any separable state
under the iid condition.
Under the limit $r\to 0$,
the RHS of \eqref{NBT} goes to 
$E_{1}^*(\rho_{AB}):=E^*(\rho_{AB})$.
Hence, the relation \eqref{NBT} recovers the inequality $\ge$ in \eqref{BVD}.
Inequality \eqref{NBT2} presents an upper bound for 
the exponential rate of the failure probability to detect the entangled state $\rho_{AB}$
when we relax the constraints for the incorrect detection of any separable state
under the iid condition.
Due to this inequality, when 
the entangled state $\rho_{AB}$ can detected with a failure probability 
of a strictly larger exponential rate than 
$e^{-n E^*(\rho_{AB})}$,
a separable state is incorrectly detected as an entangled state with a high probability.

When a TP-CP map $\Lambda$ preserves the set ${\cal S}_{AB}$, 
these quantities satisfy the monotonicity as
\begin{align}
E^*(\rho_{AB})
\ge E^*(\Lambda (\rho_{AB})),\quad
E_{\alpha}^*(\rho_{AB})
\ge E_{\alpha}^*(\Lambda (\rho_{AB})).
\end{align}

Our new measure 
$E^*(\rho_{AB})$ has a completely different behavior from 
conventional measures of entanglement.
Most of conventional measures of entanglement of a pure state $\rho_{AB}$
take the von Neumann entropy of its reduced density. 
However, 
$E^*(\rho_{AB})$ becomes infinity 
when $\rho_{AB}$ is a pure entangled state
because the support of no separable state is included in 
the support of $\rho_{AB}$.
Although the log-negativity of a pure state $\rho_{AB}$
does not equal  the von Neumann entropy of its reduced density,
it takes a finite value.

\begin{remark}
When $H_0$ and $H_1$ are exchanged, 
$E(\rho_{AB})$
does not have the same meaning as 
$E^*(\rho_{AB})$.
This is because quantum Sanov theorem works only when 
$H_1$ is composed of several states \cite{Berta}. 
That is, the choice of the hypothesis composed of 
several states is crucial in this scenario.
Instead, to prove that $E(\rho_{AB})$ 
upper bounds the entanglement
distillation rate, 
the paper \cite{Vedral} employs
the simple hypotheses that
$H_0$ is 
$\sigma_{AB}^{\otimes n}$ 
and 
$H_1$ is 
$\rho_{AB}^{\otimes n}$,
where
$\sigma_{AB}$ is an arbitrary chosen element of 
${\cal S}_{AB}$.
More precisely, an application of the strong converse
of Stein's lemma to this hypothesis testing yields
the strong converse for the entanglement distillation rate
\cite[Proof of Theorem 8.7]{HM}.
\end{remark}

\section{Detecting entangled state with non-iid setting}\Label{S3}
The above discussion assumes the iid assumption for the alternative hypothesis.
However, several existing studies for entanglement
verification 
and entanglement detection
assume that 
the alternative hypothesis includes 
general separable states including non-iid separable stets \cite{Saggio,Dimic,
HM,TM,ZH},
which is called the non-iid setting.
Our analysis can be extended to the non-iid setting as follows.
We denote the set of separable states on $\cH_A^{\otimes n} \otimes \cH_B^{\otimes n}$ by ${\cal S}_{AB}^n$. Then, the alternative hypothesis is changed to 
\begin{align}
H_1:& \hbox{The state is }\sigma \hbox{ with }\sigma \in {\cal S}_{AB}^n \Label{BVR}
\end{align}
while the null hypothesis is kept to \eqref{H0}.
To address this problem, we focus on the following quantity.
\begin{align}
&\overline{\beta}_{\epsilon,n}(\rho_{AB})\nonumber \\
:=&
\min_{0\le T\le I}
\Big\{\Tr T \rho_{AB}^{\otimes n} \Big|
\max_{\sigma_{AB}\in {\cal S}_{AB}^n} \Tr (I-T)\sigma_{AB}\le \epsilon
\Big\}.
\end{align}
The relation \eqref{BVD} yields that
\begin{align}
&\lim_{n \to \infty}-\frac{1}{n} \log \overline{\beta}_{\epsilon,n}(\rho_{AB})
\le  \overline{E}^*(\rho_{AB}):=
\inf_{n}\frac{1}{n}{E}^*(\rho_{AB}^{\otimes n})\nonumber \\
=&\lim_{n\to \infty}\frac{1}{n}{E}^*(\rho_{AB}^{\otimes n})
=\lim_{n\to \infty}\frac{1}{n}\min_{\sigma \in {\cal S}_{AB}^n}D(\sigma\|\rho_{AB}^{\otimes n}
).\Label{BVD2}
\end{align}

Extending ${E}_{\alpha}^*(\rho_{AB})$, we 
define 
\begin{align}
\overline{E}_{\alpha}^*(\rho_{AB})
:=&\inf_{n}\frac{1}{n}{E}_\alpha^*(\rho_{AB}^{\otimes n})
=\lim_{n\to \infty}\frac{1}{n}{E}_\alpha^*(\rho_{AB}^{\otimes n})
\nonumber \\
=&\lim_{n\to \infty}\frac{1}{n}\min_{\sigma_{AB} \in {\cal S}_{AB}^n}
D_\alpha(\sigma_{AB}\|\rho_{AB}^{\otimes n}
).\Label{BVD5}
\end{align}
As shown in Appendix \ref{AP5}, we have the following theorem, which is an enhanced version of Theorem \ref{TH1A}, which 
strengthens \eqref{BVD5}.
\begin{theorem}\Label{TH7}
We have
\begin{align}
&\lim_{n \to \infty}
-\frac{1}{n}\log \overline{\beta}_{e^{-nr},n}(\rho_{AB})
\nonumber \\
\ge &
\sup_{0\le \alpha \le 1}\frac{
(1-\alpha) \overline{E}_{\alpha}^*(\rho_{AB}) - \alpha r}{1-\alpha},\Label{NBT3} \\
&\lim_{n \to \infty}-\frac{1}{n} \log \overline{\beta}_{1-e^{-nr},n}(\rho_{AB})\nonumber \\
= &
\inf_{\alpha \ge 1}\frac{
(1-\alpha) \overline{E}_{\alpha}^*(\rho_{AB}) - \alpha r}{1-\alpha}\Label{NBT4}.
\end{align}
More precisely, we have
\begin{align}
&-\frac{1}{n}\log \beta_{e^{-nr},n}(\rho_{AB})\nonumber \\
\ge &
\sup_{0 \le \alpha \le 1}\frac{\frac{1}{n}(1-\alpha) 
E_{\alpha}^* (\rho_{AB}^{\otimes n}) - \alpha r}{1-\alpha}
\nonumber \\
&-\frac{(d_{AB}+2)(d_{AB}-1)}{n}\log (n+1),
\Label{NBT3X}\\
&-\frac{1}{n}\log \beta_{1-e^{-nr},n}(\rho_{AB})\nonumber \\
\le &
\inf_{\alpha \ge 1}\frac{\frac{1}{n}(1-\alpha) 
{E}_{\alpha}^* (\rho_{AB}^{\otimes n}) - \alpha r}{1-\alpha}.
\Label{NBT4X}
\end{align}
\end{theorem}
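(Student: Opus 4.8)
\emph{Strategy.} The four relations split into a direct (Hoeffding-type) part, \eqref{NBT3} and \eqref{NBT3X}, and a strong-converse part, \eqref{NBT4} and \eqref{NBT4X}, and the whole statement is the non-i.i.d., composite-alternative refinement of Theorem~\ref{TH1A} (Appendix~\ref{AP4}). My plan is to prove the finite-length inequalities first and then pass to \eqref{NBT3} and \eqref{NBT4} in the limit: the correction $\frac{(d_{AB}+2)(d_{AB}-1)}{n}\log(n+1)$ vanishes; subadditivity of $E^*_\alpha$ under tensor products together with Fekete's lemma (already invoked for \eqref{BVD5}) gives $\frac1n E^*_\alpha(\rho_{AB}^{\otimes n})\to\overline E^*_\alpha(\rho_{AB})$ for each fixed $\alpha$; and then lower-bounding $\sup_\alpha$ by a single term before taking $\liminf$ yields \eqref{NBT3}, while the finite-length converse gives the upper bound of \eqref{NBT4} after taking $\limsup$ and pulling the limit inside $\inf_\alpha$. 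Since $\rho_{AB}^{\otimes n}$ and ${\cal S}_{AB}^{n}$ are both invariant under permutations of the $n$ tensor factors, the optimal test may be taken permutation invariant throughout.

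\emph{The converse \eqref{NBT4X}.} For a test $T$ feasible for $\overline\beta_{1-e^{-nr},n}(\rho_{AB})$ one has $\Tr T\sigma\ge e^{-nr}$ for every $\sigma\in{\cal S}_{AB}^{n}$, and the data-processing inequality for the sandwiched R\'enyi divergence $D_\alpha$ ($\alpha>1$), applied to the binary measurement $\{T,I-T\}$ and the pair $(\sigma,\rho_{AB}^{\otimes n})$ with the $(I-T)$-term discarded, gives
\begin{align}
D_\alpha(\sigma\|\rho_{AB}^{\otimes n})
&\ge \frac{1}{\alpha-1}\log\left(e^{-n\alpha r}(\Tr T\rho_{AB}^{\otimes n})^{1-\alpha}\right)\nonumber\\
&=\frac{-n\alpha r}{\alpha-1}-\log\Tr T\rho_{AB}^{\otimes n}.
\end{align}
Rearranging yields $-\frac1n\log\Tr T\rho_{AB}^{\otimes n}\le\frac{\alpha r}{\alpha-1}+\frac1n D_\alpha(\sigma\|\rho_{AB}^{\otimes n})$; taking $\sigma$ to attain $E^*_\alpha(\rho_{AB}^{\otimes n})=\min_{\sigma\in{\cal S}_{AB}^{n}}D_\alpha(\sigma\|\rho_{AB}^{\otimes n})$ (permissible, since the constraint holds for every separable $\sigma$), rewriting $\frac{\alpha r}{\alpha-1}+\frac1n E^*_\alpha(\rho_{AB}^{\otimes n})=\frac{\frac1n(1-\alpha)E^*_\alpha(\rho_{AB}^{\otimes n})-\alpha r}{1-\alpha}$, and then optimizing over $\alpha>1$ and over feasible $T$, yields \eqref{NBT4X}.

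\emph{The direct part \eqref{NBT3X}.} This is the heart of the argument and reuses the test constructed for Theorem~\ref{TH1A}. As $\rho_{AB}$ has at most $d_{AB}$ distinct eigenvalues, the pinching map ${\cal E}$ onto the eigenspaces of $\rho_{AB}^{\otimes n}$ has at most $(n+1)^{d_{AB}-1}$ distinct spectral projections (quantum method of types), so $\xi\le(n+1)^{d_{AB}-1}{\cal E}(\xi)$ for all $\xi\ge0$ and ${\cal E}(\sigma)$ commutes with $\rho_{AB}^{\otimes n}$; this reduces, at the cost of one factor $(n+1)^{d_{AB}-1}$, the composite quantum test against ${\cal S}_{AB}^{n}$ to a classical one, for $\rho_{AB}^{\otimes n}$ against $\{{\cal E}(\sigma):\sigma\in{\cal S}_{AB}^{n}\}$. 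A second polynomial factor $(n+1)^{d_{AB}^2-1}$, coming from a post-selection/finite de Finetti estimate for permutation-invariant states (equivalently, from a $\tfrac1n$-net over ${\cal S}_{AB}$, which has size $O(n^{d_{AB}^2-1})$ since $\dim{\cal S}_{AB}\le d_{AB}^2-1$), collapses this composite family to a single universal reference state while the controlling quantity remains $\min_{\sigma\in{\cal S}_{AB}^{n}}D_\alpha(\sigma\|\rho_{AB}^{\otimes n})=E^*_\alpha(\rho_{AB}^{\otimes n})$. Feeding the resulting commuting pair into the classical Hoeffding threshold test and collecting the two losses, $(n+1)^{d_{AB}-1}\cdot(n+1)^{d_{AB}^2-1}=(n+1)^{(d_{AB}+2)(d_{AB}-1)}$, produces \eqref{NBT3X}; the only change from Theorem~\ref{TH1A} is that minimizing $D_\alpha$ over the larger set ${\cal S}_{AB}^{n}$ turns $nE^*_\alpha(\rho_{AB})$ into $E^*_\alpha(\rho_{AB}^{\otimes n})$.

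\emph{Main obstacle.} The delicate point throughout is the uniform treatment of the composite, non-i.i.d.\ alternative ${\cal S}_{AB}^{n}$: a generic element is neither i.i.d.\ nor close to a mixture of i.i.d.\ separable states, so Theorem~\ref{TH1A} cannot simply be quoted blockwise; one must check that symmetrization plus the de Finetti/post-selection domination costs only the stated polynomial prefactor and that the surviving exponent is governed by $E^*_\alpha(\rho_{AB}^{\otimes n})$ rather than $nE^*_\alpha(\rho_{AB})$. The same symmetrization underlies the remaining ingredient, the achievability half of \eqref{NBT4}: from the order-$\alpha$ ($\alpha>1$) optimal separable state one builds a R\'enyi-tilted test whose $\rho_{AB}^{\otimes n}$-value decays at the claimed rate while still overlapping every separable state with weight at least $e^{-nr}$. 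I expect the bulk of Appendix~\ref{AP5} to be spent on these two points.
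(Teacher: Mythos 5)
Your converse half is sound and is exactly the paper's argument: data processing for $D_\alpha$ ($\alpha>1$) applied to the two-outcome measurement $\{T,I-T\}$, discarding the $(I-T)$ term, using $\Tr T\sigma\ge e^{-nr}$ for every $\sigma\in{\cal S}_{AB}^{n}$, and then minimizing over $\sigma$ and $\alpha$; the passage to the limits via \eqref{BVD5} is also fine. The genuine gap is in the direct part \eqref{NBT3X}. The step that ``collapses this composite family to a single universal reference state'' by post-selection/de Finetti cannot work. The post-selection operator inequality dominates a permutation-invariant $\sigma$ by $(n+1)^{d_{AB}^2-1}\tau_n$ with $\tau_n=\int\omega^{\otimes n}\,d\mu(\omega)$ a mixture over \emph{all} states $\omega$ on $\cH_A\otimes\cH_B$, not only separable ones. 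To conclude $\Tr(I-T)\sigma\le e^{-nr}$ from this you would need $\Tr(I-T)\tau_n\le e^{-nr}(n+1)^{-(d_{AB}^2-1)}$; but $\Tr(I-T)\tau_n\ge\int_{\|\omega-\rho_{AB}\|_1\le n^{-2}}\Tr(I-T)\omega^{\otimes n}\,d\mu(\omega)\ge\mu(\{\omega:\|\omega-\rho_{AB}\|_1\le n^{-2}\})\,\big(\Tr(I-T)\rho_{AB}^{\otimes n}-n^{-1}\big)$, and the measure of that ball is only polynomially small, so any such test forces $\Tr T\rho_{AB}^{\otimes n}\to 1$. The controlling quantity therefore cannot ``remain'' $E_\alpha^*(\rho_{AB}^{\otimes n})$. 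The parenthetical fallback --- a $\tfrac1n$-net over ${\cal S}_{AB}$ of size $O(n^{d_{AB}^2-1})$ --- only discretizes the i.i.d.\ alternative $\{\sigma^{\otimes n}\}$ of Theorem \ref{TH1A}; it says nothing about a generic element of ${\cal S}_{AB}^{n}$, which is precisely what Theorem \ref{TH7} must handle. Worse, any reduction to (mixtures of) i.i.d.\ separable alternatives would produce the exponent governed by $nE_\alpha^*(\rho_{AB})$ rather than $E_\alpha^*(\rho_{AB}^{\otimes n})$, which is a strictly stronger and in general false claim when the additivity \eqref{add-al} fails.

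What the paper actually does in Appendix \ref{AP5} is different in mechanism even though the final prefactor agrees with your numerology ($(n+1)^{d_{AB}-1}\cdot(n+1)^{d_{AB}^2-1}=(n+1)^{(d_{AB}+2)(d_{AB}-1)}$ is a coincidence of exponents, not a valid factorization of the argument). First, a symmetrization lemma reduces the alternative to permutation-invariant separable states, which by Schur duality all share the block form $\bigoplus_{\lambda}\rho_\lambda\otimes\rho_{\lambda,\mathrm{mix}}$ with $\rho_{AB}^{\otimes n}$. Second, the test is the union $T_{n,\alpha,r}=\bigcup_{\sigma}\{\Gamma_{\rho,n}(\sigma)\ge e^{nR}\rho_{AB}^{\otimes n}\}$ over the \emph{entire continuum} of such $\sigma$; the crucial observation is that inside each irreducible block ${\cal U}_\lambda$ every one of these projections is diagonal in one fixed eigenbasis of $\rho_\lambda$, so the union collapses to a union of at most $d_\lambda$ of them, and the union bound over all blocks costs only $d_n=\sum_\lambda d_\lambda\le(n+1)^{(d_{AB}+2)(d_{AB}-1)/2}$. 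The pinching inequality contributes the second factor $d_n$, giving \eqref{NBT3X} with exponent controlled by $\max_{\sigma}\Tr\Gamma_{\rho,n}(\sigma)^\alpha(\rho_{AB}^{\otimes n})^{1-\alpha}$, hence by $E_\alpha^*(\rho_{AB}^{\otimes n})$. This finite-union structure is the missing idea in your sketch. (One further caveat: like the paper's appendix, you only establish the $\le$ direction of the equality claimed in \eqref{NBT4}; you flag the achievability half but do not prove it.)
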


As an extended version of \eqref{NBT},
Inequality \eqref{NBT3} presents
an attainable exponential rate of the failure probability for detecting the entangled state $\rho_{AB}$
when we impose an exponential constraint for the incorrect detection of any separable state
without the iid condition.
Under the limit $r\to 0$,
the RHS of \eqref{NBT3} goes to 
$\lim_{\delta\to +0} \overline{E}_{1-\delta}^*(\rho_{AB})$.
Inequality \eqref{NBT3} implies the following inequality.
\begin{align}
\lim_{n \to \infty}-\frac{1}{n} \log \overline{\beta}_{\epsilon,n}(\rho_{AB})
\ge \tilde{E}^*(\rho_{AB})
:=\lim_{\delta\to +0} \overline{E}_{1-\delta}^*(\rho_{AB}).
\Label{BNF}
\end{align}
While 
$\overline{E}^*(\rho_{AB})\ge \tilde{E}^*(\rho_{AB})$,
it is not clear whether equality holds in general.
The reference \cite[Proposition 2]{EAP} showed that 
the additivity 
$E^*(\rho_{AB,1}\otimes \rho_{AB,2})
=E^*(\rho_{AB,1})+E^*(\rho_{AB,2})$, which implies
$ {E}^*(\rho_{AB})= \overline{E}^*(\rho_{AB})$
while its simple proof is given in Section \ref{S4-2}.

When the quantity $E_{\alpha}^*(\rho_{AB})$ satisfies the additivity 
\begin{align}
E_{\alpha}^*(\rho_{AB,1}\otimes \rho_{AB,2})
=E_{\alpha}^*(\rho_{AB,1})+E_{\alpha}^*(\rho_{AB,2})
\Label{add-al}
\end{align}
for $\alpha <1$,
the equations
$ {E}^*(\rho_{AB})= \overline{E}^*(\rho_{AB})= \tilde{E}^*(\rho_{AB})$
and 
$ {E}_\alpha^*(\rho_{AB})= \overline{E}_\alpha^*(\rho_{AB})$
hold.
Therefore, it is crucial to clarify when the above additivity holds.
For example, as shown in \eqref{1NMI} of Section \ref{S5},
the maximally correlated state satisfies the above additivity.
The reference \cite[Section 5]{RT24} showed that 
the additivity \eqref{add-al} with $\alpha \in [1/2,1)$
does not hold in general
by considering the bipartite antisymmetric (Werner) state \cite{VW01,ZCH10}\footnote{Since our sandwiched relative entropy of order $\alpha'$
corresponds to $\tilde{D}_{1-\alpha'}$ of \cite[(10)]{RT24}, i.e., the case with $(\alpha,z)=(1-\alpha',\alpha')$.
Hence, the case with $1>\alpha'\ge 1/2$
belongs to the set ${\cal D}$ defined in 
\cite[Section 2]{RT24}.
Hence, the discussion \cite[Section 5]{RT24}
covers the violation of \eqref{add-al}
the additivity \eqref{add-al} with $\alpha \in [1/2,1)$.}.

Inequality \eqref{NBT4} presents an upper bound for 
the exponential rate of the failure probability to detect the entangled state $\rho_{AB}$
when we relax the constraints for the incorrect detection of any separable state
without the iid condition.

\section{Information-geometrical characterization: Finding entanglement witness}\Label{S9}
\subsection{Single system}
Our proposed measure is useful not only for hypothesis testing of the above hypotheses,
but also for finding entanglement witness.
That is, given a non-separable state 
$\rho_{AB} \notin {\cal S}_{AB}$,
it is important to find an entanglement witness
to distinguish $\rho_{AB} $ and $ {\cal S}_{AB}$.
To see this merit,
we discuss its information-geometrical characterization,
which yields entanglement witness.

The papers \cite{TWKK,SCC} proposed methods to find 
an entanglement witness for special states.
As another method, the references \cite{Bertlmann1}, \cite[Section 16.3.3]{Bertlmann2}
proposed a method to use the tangent space 
of the set of separable states based on 
the Hilbert Schmidt norm $\| \cdot \|_{HS}$.
The key point of the above method is 
the fact that the Hilbert-Schmidt norm 
forms the Euclidean geometry over the space of Hermitian matrices.
The above method needs to find 
the closest separable state in the sense of 
the Hilbert Schmidt norm $\| \cdot \|_{HS}$.
However, no method efficiently minimizes 
the Hilbert Schmidt norm $\| \cdot \|_{HS}$
between a given state and the set of separable states,
i.e., efficiently finds 
\begin{align}
\argmin_{ \sigma_{AB} \in {\cal S}_{AB}} 
\| \sigma_{AB}-\rho_{AB}\|_{HS} \Label{HS8}
\end{align}
when the system dimensions are large. 
Hence, it is not easy to find 
an entanglement witness by using the Hilbert-Schmidt norm 
under large Hilbert spaces.

Fortunately, 
this section shows how the minimization 
$\min_{ \sigma_{AB} \in {\cal S}_{AB}} 
D(\sigma_{AB}\|\rho_{AB})$ works 
for finding entanglement witness 
by employing 
the minimizer of $E^*(\rho_{AB})$, i.e., the state
\begin{align}
\sigma_{AB}^*:=
\argmin_{ \sigma_{AB} \in {\cal S}_{AB}} 
D(\sigma_{AB}\|\rho_{AB}). \Label{NMP8}
\end{align}
That is, the minimizer of $E^*(\rho_{AB})$ produces entanglement witness.
In addition, we propose an algorithm to calculate 
our entanglement measure $E^*(\rho_{AB})$
in Section \ref{S6}.
As explained in Section \ref{S7-2}, 
the combination between 
the algorithm for $\sigma_{AB}^*$ and a procedure to 
find an entanglement witness based on quantum relative entropy 
enables us to derive an algorithm to find an entanglement witness.
Hence, the relation between 
entanglement witness and the minimizer of $E^*(\rho_{AB})$ is quite important.

Although quantum relative entropy is a two-input quantity,
it does not form an Euclidean geometry
because it is not symmetric with respect to exchanging 
two input states.
Hence, when we employ our entanglement measure, 
we need to employ a geometric structure different from 
the Euclidean geometry.

Using the operator $(\log \sigma_{AB}^*- \log \rho_{AB})$,
we define the hyperplane
${\cal M}( \sigma_{AB}^*,\rho_{AB})$ as (See Fig. \ref{witness}.)
\begin{align}
&{\cal M}( \sigma_{AB}^*,\rho_{AB})\nonumber \\
:=&
\{\rho'_{AB}| 
\Tr \rho_{AB}' (\log \sigma_{AB}^*- \log \rho_{AB})
= D(\sigma_{AB}^* \|\rho_{AB})
\}.\Label{BCP}
\end{align}

\begin{figure}[htbp]
\begin{center}
\includegraphics[scale=0.6]{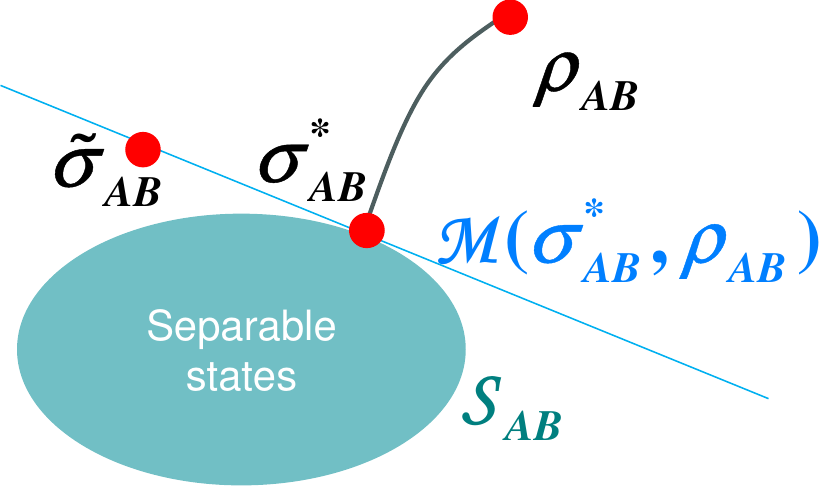}
\end{center}
\caption{Entanglement witness:
This figure shows the relation among
the entangled state $\rho_{AB}$, the set of separable states
${\cal S}_{AB}$, and the separation hyperplane
${\cal M}( \sigma_{AB}^*,\rho_{AB})$.
Any other element $\tilde{\sigma}_{AB}$ on the tangent space
has larger relative entropy to $\rho_{AB}$
than the projected element ${\sigma}_{AB}^*$
due to the Pythagorean theorem \eqref{BVI}.}
\Label{witness}
\end{figure}

The following theorem guarantees that
${\cal M}( \sigma_{AB}^*,\rho_{AB})$
separates $\rho_{AB}$ from ${\cal S}_{AB}$.
Hence, 
the operator $(\log \sigma_{AB}^*- \log \rho_{AB})
-D(\sigma_{AB}^* \|\rho_{AB})$
works as an entanglement witness for $\rho_{AB}$.
In fact, as discussed in \cite[Eq. (4)]{Baichu},
any entanglement witness can be constructed by a combination of local measurement
because any matrix on the composite system 
${\cal H}_A \otimes {\cal H}_B$
can be written as the combination of
at most 
$(\dim {\cal H}_A)^2 (\dim {\cal H}_B)^2$
matrices with a product form.
\begin{theorem}\Label{TH33}
(i) For any separable state $\sigma_{AB}\in {\cal S}_{AB}$,
we have  
\begin{align}
\Tr \sigma_{AB} (\log \sigma_{AB}^*- \log \rho_{AB})
\ge D(\sigma_{AB}^* \|\rho_{AB}).\Label{NM5}
\end{align}
(ii) Also, we have
\begin{align}
\Tr \rho_{AB} (\log \sigma_{AB}^*- \log \rho_{AB})
=- D(\rho_{AB}\| \sigma_{AB}^* )<0.\Label{NM6}
\end{align}
\end{theorem}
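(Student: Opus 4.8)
The plan is to exploit the variational (first-order optimality) characterization of the minimizer $\sigma_{AB}^*$ in \eqref{NMP8}, together with the joint convexity of quantum relative entropy and the fact that ${\cal S}_{AB}$ is a convex set. For part (i), fix an arbitrary separable state $\sigma_{AB}\in{\cal S}_{AB}$ and consider the line segment $\sigma_t := (1-t)\sigma_{AB}^* + t\,\sigma_{AB}$ for $t\in[0,1]$, which stays in ${\cal S}_{AB}$ by convexity. Since $\sigma_{AB}^*$ minimizes $t\mapsto D(\sigma_t\|\rho_{AB})$ at $t=0$, the right derivative at $t=0$ must be nonnegative. First I would compute this derivative: writing $f(t) = \Tr \sigma_t(\log\sigma_t - \log\rho_{AB})$, one gets $\frac{d}{dt}f(t)\big|_{t=0^+} = \Tr(\sigma_{AB}-\sigma_{AB}^*)(\log\sigma_{AB}^* - \log\rho_{AB})$, where the term coming from differentiating $\log\sigma_t$ inside the trace against $\sigma_t$ contributes $\Tr\frac{d\sigma_t}{dt} = 0$ because both endpoints are normalized states (this is the standard cancellation: $\frac{d}{dt}\Tr\sigma_t\log\sigma_t = \Tr\dot\sigma_t\log\sigma_t + \Tr\dot\sigma_t$, and $\Tr\dot\sigma_t=0$). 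Nonnegativity of this derivative gives $\Tr\sigma_{AB}(\log\sigma_{AB}^*-\log\rho_{AB}) \ge \Tr\sigma_{AB}^*(\log\sigma_{AB}^*-\log\rho_{AB}) = D(\sigma_{AB}^*\|\rho_{AB})$, which is exactly \eqref{NM5}.

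For part (ii), the identity is essentially a rewriting. Starting from the definition $D(\rho_{AB}\|\sigma_{AB}^*) = \Tr\rho_{AB}(\log\rho_{AB} - \log\sigma_{AB}^*)$, one immediately gets $\Tr\rho_{AB}(\log\sigma_{AB}^* - \log\rho_{AB}) = -D(\rho_{AB}\|\sigma_{AB}^*)$. Strict negativity then follows from positivity of relative entropy, $D(\rho_{AB}\|\sigma_{AB}^*)\ge 0$, together with the observation that $D(\rho_{AB}\|\sigma_{AB}^*)=0$ would force $\rho_{AB}=\sigma_{AB}^*\in{\cal S}_{AB}$, contradicting $\rho_{AB}\notin{\cal S}_{AB}$. (One should note that since $E^*(\rho_{AB})$ is assumed finite here, the support of $\rho_{AB}$ is contained in that of $\sigma_{AB}^*$, so $D(\rho_{AB}\|\sigma_{AB}^*)$ is well-defined and finite.)

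The main obstacle is the differentiation step in part (i): one must justify that $t\mapsto D(\sigma_t\|\rho_{AB})$ is differentiable from the right at $t=0$ and that the derivative has the claimed form, including the subtle point about support conditions — if $\sigma_{AB}^*$ is rank-deficient, $\log\sigma_{AB}^*$ is only defined on its support and the perturbation $\sigma_{AB}$ may enlarge the support, making $f(t)$ potentially non-smooth at $t=0$. The clean way around this is to handle the derivative of the term $\Tr\sigma_t(-\log\rho_{AB})$ (which is linear in $\sigma_t$, hence trivially differentiable, using that $\mathrm{supp}(\rho_{AB})$ is full or at least contains everything relevant since $D(\sigma_{AB}^*\|\rho_{AB})<\infty$) separately from $\Tr\sigma_t\log\sigma_t$, and for the latter use convexity: $t\mapsto\Tr\sigma_t\log\sigma_t$ is convex, so its right derivative at $0$ exists in $[-\infty,+\infty)$ and equals $\Tr(\sigma_{AB}-\sigma_{AB}^*)\log\sigma_{AB}^* + \lim_{t\to0^+}\frac{1}{t}[\text{correction}]$; a standard argument (e.g.\ via the integral representation of $\log$ or via Klein's inequality) shows the right derivative is $\ge \Tr(\sigma_{AB}-\sigma_{AB}^*)\log\sigma_{AB}^*$, and this one-sided inequality is all that is needed for \eqref{NM5}. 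Alternatively, and perhaps most cleanly, one can bypass differentiation entirely: by joint convexity of relative entropy, $D(\sigma_t\|\rho_{AB}) \le (1-t)D(\sigma_{AB}^*\|\rho_{AB}) + tD(\sigma_{AB}\|\rho_{AB})$, while minimality gives $D(\sigma_t\|\rho_{AB})\ge D(\sigma_{AB}^*\|\rho_{AB})$; combining and rearranging yields $D(\sigma_{AB}\|\rho_{AB})\ge D(\sigma_{AB}^*\|\rho_{AB})$, but this only recovers the global minimum property, not the sharper witness inequality \eqref{NM5} — so the derivative computation (or an equivalent first-order argument) really is essential, and that is where the care must go.
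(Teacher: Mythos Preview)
Your proof is correct. Both approaches rest on the same underlying fact---first-order optimality of $\sigma_{AB}^*$ along line segments in ${\cal S}_{AB}$---but frame it differently. You compute the directional derivative of $t\mapsto D(\sigma_t\|\rho_{AB})$ explicitly and read off \eqref{NM5} from its nonnegativity; the paper instead argues by contradiction through information geometry: it invokes the Pythagorean theorem \eqref{BVI} to identify the hyperplane ${\cal M}(\sigma_{AB}^*,\rho_{AB})$ as the tangent space to the divergence sphere ${\cal S}_D=\{\rho'_{AB}:D(\rho'_{AB}\|\rho_{AB})=D(\sigma_{AB}^*\|\rho_{AB})\}$ at $\sigma_{AB}^*$, and then observes that if $\sigma_{AB}$ lay strictly on the $\rho_{AB}$-side of this tangent hyperplane, a small convex combination $(1-\lambda)\sigma_{AB}^*+\lambda\sigma_{AB}$ would fall inside the sphere, contradicting minimality. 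Your route is more elementary and self-contained (pure convex analysis); the paper's route showcases the information-geometric machinery (exponential/mixture families, Pythagorean relation) that is the theme of Section~\ref{S9} and is reused immediately afterward in the additivity proof of Section~\ref{S4-2}. Your treatment of the support issue at $t=0$ is also more careful than the paper's, which tacitly assumes enough rank for $\log\sigma_{AB}^*$ and $\log\rho_{AB}$ to be bounded operators.
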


To grasp the geometrical meaning of the above fact,
we employ information geometry.
Given a Hermitian matrix $X$ and a state $\tilde{\rho}_0$, 
we often focus on 
the set
${\cal E}:= \{ \tilde{\rho}_t \}$, where
$\tilde{\rho}_t:= \exp(  t X+\log \tilde{\rho}_0)/c_t$
with $c_t:= \Tr \exp(  t X+\log \tilde{\rho}_0)$.
The set ${\cal E}=\{ \tilde{\rho}_t \}$ is called an exponential 
family generated by $X$.
Also, the set
${\cal M}:=\{\rho| \Tr \rho X=\Tr \tilde{\rho}_0 X\}$ is called a mixture family
generated by $X$.

Since the intersection ${\cal M}\cap {\cal E}$ is composed of only one element 
$\tilde{\rho}_0$,
any elements $\sigma \in {\cal M}$ and $\rho \in {\cal E}$
satisfy
\begin{align}
D(\sigma \|\rho)=
D(\sigma \|\tilde{\rho}_0)+ D(\tilde{\rho}_0 \|\rho),\Label{BVI}
\end{align}
which is called Pythagorean theorem \cite{Amari-Nagaoka}, \cite[Theorem 2.3]{H-text}. 

We apply \eqref{BVI} to the case with $X=
-(\log \sigma_{AB}^*- \log \rho_{AB})$
and $\tilde{\rho}_0=\sigma_{AB}^*$.
Any element $\sigma_{AB}' \in {\cal M}( \sigma_{AB}^*,\rho_{AB})$ satisfies
\begin{align}
D(\sigma_{AB}'\| \rho_{AB})=
D(\sigma_{AB}'\| \sigma_{AB}^*)
+D(\sigma_{AB}^*\| \rho_{AB}).
\end{align}

\begin{proof}
Since (ii) is trivial, we show only (i) by contradiction.
We assume that (i) does not hold.
Hence, there exists a separable state $\sigma_{AB}\in {\cal S}$
that does not satisfy \eqref{NM5}.

We define the divergence sphere
\begin{align}
{\cal S}_D:= \{\rho_{AB}'| D( \rho_{AB}'\|\rho_{AB})
=D( \sigma_{AB}^*\|\rho_{AB})
\}.\Label{VCE}
\end{align}
Since 
any state $\sigma_{AB}'$
in ${\cal M}( \sigma_{AB}^*,\rho_{AB})
\setminus \{\sigma_{AB}^*\}$
satisfies 
$D(\sigma_{AB}' \|\rho_{AB})
=D(\sigma_{AB}' \|\sigma_{AB}^*)
+D( \sigma_{AB}^*\|\rho_{AB})
> D( \sigma_{AB}^*\|\rho_{AB})$ due to \eqref{BVI}, 
any state $\sigma_{AB}''$ in the sphere ${\cal S}_D$ satisfies the condition 
$\Tr \sigma_{AB}'' (\log \sigma_{AB}^*- \log \rho_{AB})
\le D(\sigma_{AB}^* \|\rho_{AB})$.
Hence,
the hyperplane 
${\cal M}( \sigma_{AB}^*,\rho_{AB})$  is the tangent space of 
the sphere ${\cal S}_D$ at $\sigma_{AB}^*$.

If a separable state $\sigma_{AB}$ satisfies 
\begin{align}
\Tr \sigma_{AB} (\log \sigma_{AB}^*- \log \rho_{AB})
< D(\sigma_{AB}^* \|\rho_{AB}),
\end{align}
there exists $\lambda \in (0,1)$ such that
$\sigma_{AB}':=\lambda \sigma_{AB}+(1-\lambda)\sigma_{AB}^*$
is inside of the sphere ${\cal S}_D$ as Fig. \ref{Dsphere}.
Hence, 
\begin{align}
D(\sigma_{AB}' \| \rho_{AB})
<D(\sigma_{AB}^* \| \rho_{AB}).\Label{MHU}
\end{align}
Since $\lambda \sigma_{AB}+(1-\lambda)\sigma_{AB}^*$
is also a separable state,
the relation \eqref{MHU} contradicts with the choice of 
$\sigma_{AB}^*$.
\end{proof}

\begin{figure}[htbp]
\begin{center}
\includegraphics[scale=0.6]{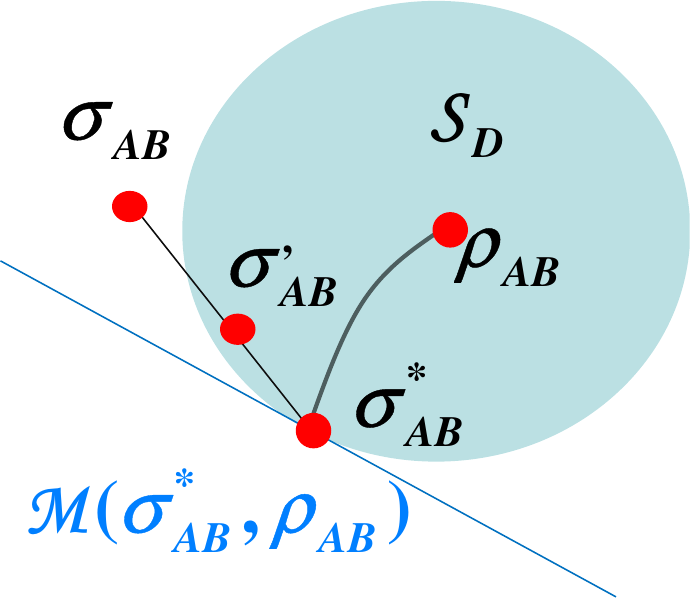}
\end{center}
\caption{Divergence sphere:
This figure shows the relation among
the entangled state $\rho_{AB}$,
the divergence sphere ${\cal S}_D$, and 
the separation hyperplane
${\cal M}( \sigma_{AB}^*,\rho_{AB})$.}
\Label{Dsphere}
\end{figure}

\subsection{Simple proof of additivity}\Label{S4-2}
Next, we consider the composite system of two systems 
${\cal H}_{A,1}\otimes {\cal H}_{B,1}$
and ${\cal H}_{A,2}\otimes {\cal H}_{B,2}$.
We consider the states 
$\rho_1$ and $\rho_2$ on the respective systems.
The following proposition is known.
\begin{proposition}[\protect{\cite[Proposition 2]{EAP}}]\Label{PPH}
\begin{align}
E^*(\rho_1 \otimes \rho_2)=E^*(\rho_1)+ E^*(\rho_2).\Label{BND5}
\end{align}
\end{proposition}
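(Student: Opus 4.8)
The plan is to prove the two inequalities $E^*(\rho_1\otimes\rho_2)\le E^*(\rho_1)+E^*(\rho_2)$ and $E^*(\rho_1\otimes\rho_2)\ge E^*(\rho_1)+E^*(\rho_2)$ separately, using the information-geometric machinery of Theorem \ref{TH33} for the harder direction. The easy direction is immediate: if $\sigma_1^*$ and $\sigma_2^*$ are the minimizers defining $E^*(\rho_1)$ and $E^*(\rho_2)$, then $\sigma_1^*\otimes\sigma_2^*$ is separable on the joint system, and additivity of the relative entropy on tensor products gives $E^*(\rho_1\otimes\rho_2)\le D(\sigma_1^*\otimes\sigma_2^*\|\rho_1\otimes\rho_2)=D(\sigma_1^*\|\rho_1)+D(\sigma_2^*\|\rho_2)=E^*(\rho_1)+E^*(\rho_2)$.

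For the reverse inequality, the idea is to build an entanglement witness for $\rho_1\otimes\rho_2$ out of the single-system witnesses and then read off a lower bound. Concretely, set $X_i := \log\sigma_i^* - \log\rho_i$ for $i=1,2$, so that by Theorem \ref{TH33}(i) every separable $\sigma_i$ on system $i$ satisfies $\Tr\sigma_i X_i \ge D(\sigma_i^*\|\rho_i) = E^*(\rho_i)$, and by (ii) $\Tr\rho_i X_i = -D(\rho_i\|\sigma_i^*)$. The natural candidate for the minimizer on the joint system is $\sigma_1^*\otimes\sigma_2^*$; I would verify that it actually realizes $E^*(\rho_1\otimes\rho_2)$ by checking the witness/tangency condition there, namely that for every separable state $\sigma$ on ${\cal H}_{A,1}\otimes{\cal H}_{B,1}\otimes{\cal H}_{A,2}\otimes{\cal H}_{B,2}$ one has
\begin{align}
\Tr \sigma\,(\log(\sigma_1^*\otimes\sigma_2^*) - \log(\rho_1\otimes\rho_2))
\ge D(\sigma_1^*\otimes\sigma_2^*\|\rho_1\otimes\rho_2).\nonumber
\end{align}
The left-hand operator factors as $X_1\otimes I + I\otimes X_2$, and the right-hand side equals $E^*(\rho_1)+E^*(\rho_2)$. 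The same contradiction argument as in the proof of Theorem \ref{TH33} then applies: if some separable $\sigma$ violated this, a convex combination $\lambda\sigma+(1-\lambda)(\sigma_1^*\otimes\sigma_2^*)$ would lie strictly inside the divergence sphere $\{\rho'\mid D(\rho'\|\rho_1\otimes\rho_2)=E^*(\rho_1)+E^*(\rho_2)\}$, giving a separable state with strictly smaller divergence to $\rho_1\otimes\rho_2$, contradicting the (already established) upper bound $E^*(\rho_1\otimes\rho_2)\le E^*(\rho_1)+E^*(\rho_2)$ together with the fact that $\sigma_1^*\otimes\sigma_2^*$ attains it. Hence $\sigma_1^*\otimes\sigma_2^*$ is a minimizer and $E^*(\rho_1\otimes\rho_2)=E^*(\rho_1)+E^*(\rho_2)$.

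The key inequality to establish is therefore $\Tr\sigma\,(X_1\otimes I+I\otimes X_2)\ge \Tr\sigma_1 X_1+\Tr\sigma_2 X_2$-type bound for arbitrary (not necessarily product) separable $\sigma$, where $\sigma_1,\sigma_2$ are the marginals of $\sigma$. Since $\Tr\sigma(X_1\otimes I) = \Tr\sigma_1 X_1$ with $\sigma_1$ the partial trace of $\sigma$ over the second pair of systems, and the partial trace of a separable state is separable, Theorem \ref{TH33}(i) applied on each factor gives $\Tr\sigma_1 X_1\ge E^*(\rho_1)$ and $\Tr\sigma_2 X_2\ge E^*(\rho_2)$; adding yields exactly the required bound. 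This is the crux, and it is genuinely short once one notes that (a) the joint witness operator is a sum of local witness operators acting on the two marginals, and (b) marginals of separable states are separable.

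The main obstacle I anticipate is not any of these steps individually but making sure the tangency/contradiction argument is airtight — in particular that the divergence sphere is convex enough near $\sigma_1^*\otimes\sigma_2^*$ for the "convex combination lies strictly inside" step, which is exactly the local convexity property used in the proof of Theorem \ref{TH33} and can be invoked verbatim. A secondary point to handle with a little care is the possibility that $E^*(\rho_i)=+\infty$ (e.g. when $\rho_i$ is pure entangled, as noted in Section \ref{S2}): in that case one should check the statement reduces to $+\infty=+\infty$, which follows since $\rho_1\otimes\rho_2$ then has no separable state in its support either, so $E^*(\rho_1\otimes\rho_2)=+\infty$. Hence the proof splits cleanly into the finite case, handled by the witness argument above, and the trivial infinite case.
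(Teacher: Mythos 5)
Your overall route is the same as the paper's: take the local minimizers $\sigma_1^*,\sigma_2^*$, form the joint witness operator $X_1\otimes I+I\otimes X_2$ with $X_i:=\log\sigma_i^*-\log\rho_i$, and reduce the witness inequality for an arbitrary (not necessarily product) separable $\sigma$ on the joint system to the single-system inequality of Theorem \ref{TH33}(i) via the observation that the marginals of a separable state are separable. That marginal argument is exactly the paper's intermediate lemma, your identification of it as the crux is correct, and the easy upper-bound direction and the remark on the infinite case are both fine.

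The one step that does not work as written is your concluding ``contradiction'' argument. Exhibiting a separable state whose divergence to $\rho_1\otimes\rho_2$ is strictly below $E^*(\rho_1)+E^*(\rho_2)$ contradicts neither the already-established upper bound $E^*(\rho_1\otimes\rho_2)\le E^*(\rho_1)+E^*(\rho_2)$ nor the fact that $\sigma_1^*\otimes\sigma_2^*$ attains that value; it would merely mean that $\sigma_1^*\otimes\sigma_2^*$ is not the minimizer, which is precisely what you are trying to rule out. The divergence-sphere argument in the proof of Theorem \ref{TH33} runs in the opposite direction (minimality of $\sigma^*$ implies the witness inequality) and cannot be recycled to deduce minimality from the witness inequality. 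The correct bridge is the Pythagorean relation \eqref{BVI}, which here is just algebra: for any separable $\sigma$,
\begin{align}
D(\sigma\|\rho_1\otimes\rho_2)
=D(\sigma\|\sigma_1^*\otimes\sigma_2^*)
+\Tr \sigma\,(X_1\otimes I+I\otimes X_2)
\ge 0+E^*(\rho_1)+E^*(\rho_2),\nonumber
\end{align}
where the last step uses nonnegativity of relative entropy and your (correctly proved) witness inequality. This is exactly how the paper finishes, phrased there in the mixture-family/exponential-family language; with this one-line replacement your proof is complete and coincides with the paper's.
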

This proposition can be easily shown by using information geometry as follows.

We denote the sets of separable states on 
${\cal H}_{A,j}\otimes {\cal H}_{B,j}$ 
and $({\cal H}_{A,1}\otimes {\cal H}_{A,2})\otimes ({\cal H}_{B,1}\otimes {\cal H}_{B,2})$ 
by ${\cal S}_{AB,j} $ and ${\cal S}_{AB,12} $, respectively, for $j=1,2$.
We choose $\sigma_{j}^*:=\argmin_{ \sigma_{j} \in {\cal S}_{AB,j}} 
D(\sigma_{j}\|\rho_{j})$.
Then, $X_j:=(\log \sigma_{j}^*-\log \rho_j)-D(\sigma_{j}^*\|\rho_j) $
is an entanglement witness for $\rho_j$.
That is, due to Theorem \ref{TH33}, 
any separable state $\sigma_j \in {\cal S}_{AB,j}$ satisfies
\begin{align}
\Tr \sigma_j  X_j \ge 0 \Label{BNE}.
\end{align}

\begin{lemma}
$X_1\otimes I+I\otimes X_2$ is an 
entanglement witness for $\rho_1\otimes \rho_2$.
\end{lemma}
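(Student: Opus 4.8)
The plan is to verify that the operator $W := X_1\otimes I + I\otimes X_2$ satisfies the defining property of an entanglement witness for $\rho_1\otimes\rho_2$: namely $\Tr (\rho_1\otimes\rho_2)\, W < 0$, while $\Tr \sigma\, W \ge 0$ for every separable state $\sigma \in {\cal S}_{AB,12}$. The first part is an immediate computation: by linearity of the trace and the product structure, $\Tr (\rho_1\otimes\rho_2)(X_1\otimes I) = \Tr \rho_1 X_1$ and similarly for the second term, so $\Tr(\rho_1\otimes\rho_2)W = \Tr\rho_1 X_1 + \Tr\rho_2 X_2 = -D(\rho_1\|\sigma_1^*) - D(\rho_2\|\sigma_2^*) < 0$ by part (ii) of Theorem \ref{TH33} applied to each factor.

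The substantive part is the nonnegativity on ${\cal S}_{AB,12}$. The key observation is that it suffices to check this on product separable states $\sigma = \sigma_1'\otimes\sigma_2'$ with $\sigma_j' \in {\cal S}_{AB,j}$, because a general separable state on the four-partite system, when we regard the bipartition as $({\cal H}_{A,1}\otimes{\cal H}_{A,2})$ versus $({\cal H}_{B,1}\otimes{\cal H}_{B,2})$, is a convex combination $\sum_k p_k\, |a_k\rangle\langle a_k| \otimes |b_k\rangle\langle b_k|$ of pure product states across that cut; and each such pure product state, viewed on the reordered system ${\cal H}_{A,1}\otimes{\cal H}_{B,1}\otimes{\cal H}_{A,2}\otimes{\cal H}_{B,2}$, is of the form $\sigma_1^{(k)}\otimes\sigma_2^{(k)}$ with each $\sigma_j^{(k)}$ a pure product state on ${\cal H}_{A,j}\otimes{\cal H}_{B,j}$, hence separable. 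Since $W$ is linear in $\sigma$, nonnegativity on product separable states extends by convexity to all of ${\cal S}_{AB,12}$. For a product state $\sigma_1'\otimes\sigma_2'$ we compute $\Tr(\sigma_1'\otimes\sigma_2')W = \Tr\sigma_1' X_1 + \Tr\sigma_2' X_2 \ge 0 + 0 = 0$, using \eqref{BNE} for each factor.

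The main obstacle, and the only point requiring care, is the structural claim in the previous paragraph: that the separable cone on the bipartition $(A_1A_2 \,|\, B_1B_2)$ is exactly the convex hull of the products ${\cal S}_{AB,1}\otimes{\cal S}_{AB,2}$. This is where one must be precise about which bipartition "separable" refers to — the relevant one is the cut grouping all $A$-systems against all $B$-systems, and the claim is standard (a pure product vector $|a\rangle_{A_1A_2}\otimes|b\rangle_{B_1B_2}$ need not factor as a tensor product over the $1$-$2$ split, but it \emph{is} annihilated by the appropriate reordering into a product of two pure product states across the $A_j|B_j$ cuts, and general separable states are convex combinations of such). Once this identification is granted, the rest is the two one-line trace computations above together with convexity, so the lemma follows.
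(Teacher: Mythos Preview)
Your proof contains a genuine gap in the structural claim. You assert that every pure product state $|a\rangle_{A_1A_2}\otimes|b\rangle_{B_1B_2}$ across the $(A_1A_2)\,|\,(B_1B_2)$ cut, after reordering, takes the form $\sigma_1^{(k)}\otimes\sigma_2^{(k)}$ with each $\sigma_j^{(k)}$ a pure product state on ${\cal H}_{A,j}\otimes{\cal H}_{B,j}$. This is false: take $|a\rangle=\tfrac{1}{\sqrt 2}(|00\rangle+|11\rangle)_{A_1A_2}$ and $|b\rangle=|00\rangle_{B_1B_2}$. After reordering to $A_1B_1A_2B_2$ the vector is $\tfrac{1}{\sqrt 2}(|00\rangle_{A_1B_1}|00\rangle_{A_2B_2}+|10\rangle_{A_1B_1}|10\rangle_{A_2B_2})$, which is entangled across the $1\,|\,2$ split and certainly not a tensor product $\sigma_1\otimes\sigma_2$. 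Thus ${\cal S}_{AB,12}$ is strictly larger than the convex hull of ${\cal S}_{AB,1}\otimes{\cal S}_{AB,2}$, and your reduction to product separable states does not cover all of ${\cal S}_{AB,12}$. (Your last paragraph seems to sense the difficulty but the phrase ``is annihilated by the appropriate reordering'' does not resolve it.)

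The paper bypasses this decomposition entirely by taking partial traces. For any $\sigma\in{\cal S}_{AB,12}$ one has
\[
\Tr \sigma\,(X_1\otimes I+I\otimes X_2)=\Tr(\Tr_2\sigma)\,X_1+\Tr(\Tr_1\sigma)\,X_2,
\]
where $\Tr_j$ denotes the partial trace over ${\cal H}_{A,j}\otimes{\cal H}_{B,j}$. The point is that partial trace over system~$2$ is local with respect to the $A\,|\,B$ cut, so $\Tr_2\sigma\in{\cal S}_{AB,1}$ whenever $\sigma\in{\cal S}_{AB,12}$ (write $\sigma=\sum_k p_k\,\tau_k^A\otimes\tau_k^B$ and observe that $\Tr_2$ acts as $\Tr_{A_2}\otimes\Tr_{B_2}$). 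Then \eqref{BNE} gives each term $\ge 0$. This is a one-line argument once you use marginals rather than a product decomposition; your computation for the strict negativity on $\rho_1\otimes\rho_2$ is fine.
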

\begin{proof}
For any element $\sigma \in {\cal S}_{AB,1,2}$, we have
\begin{align}
&\Tr \sigma (X_1\otimes I+I\otimes X_2) \notag\\
=& \Tr (\Tr_2\sigma) X_1 + (\Tr_1\sigma) X_2 \ge 0,
\end{align}
where the inequality follows from \eqref{BNE}.
\end{proof}

\begin{proofof}{Proposition \ref{PPH}}
Now, we consider the mixture family
${\cal M}_{12}:=\{\rho| \Tr \rho (X_1\otimes I+I\otimes X_2)=0\}$
generated by the matrix $(X_1\otimes I+I\otimes X_2)$.
We often focus on 
the exponential family 
${\cal E}_{12}:= \{ \rho_{12,t} \}$ generated by $X$, where
$\rho_{12,t}:= \exp(  t (X_1\otimes I+I\otimes X_2) +\log (\rho_1\otimes \rho_2))/c_t$
with $c_{12,t}:= \Tr \exp(  t (X_1\otimes I+I\otimes X_2) +\log (\rho_1\otimes \rho_2))$.

Since the intersection ${\cal M}_{12}\cap {\cal E}_{12}$ is composed of 
only one element $\rho_{12,1}= \sigma_{1}^*\otimes \sigma_{2}^*$,
Eq. \eqref{BVI}
(Pythagorean theorem \cite{Amari-Nagaoka}, \cite[Theorem 2.3]{H-text}) 
guarantees that any element $\sigma \in {\cal M}_{12}$ 
satisfies
\begin{align}
D(\sigma \|\rho_1\otimes \rho_2)=&
D(\sigma \|\sigma_{1}^*\otimes \sigma_{2}^*)
+ D(\sigma_{1}^*\otimes \sigma_{2}^* \|\rho_1\otimes \rho_2)\notag\\
\ge &
D(\sigma_{1}^*\otimes \sigma_{2}^* \|\rho_1\otimes \rho_2).
\Label{BVHG}
\end{align}
Since any separable state $\sigma' \in {\cal S}_{AB,12}$ is located in the opposite side of 
$\rho_1\otimes \rho_2$ with respect to hyperplane ${\cal M}_{12}$, 
the relation \eqref{BVHG} yields
\begin{align}
D(\sigma' \|\rho_1\otimes \rho_2)
\ge 
D(\sigma_{1}^*\otimes \sigma_{2}^* \|\rho_1\otimes \rho_2),
\Label{BVT}
\end{align}
which implies
\begin{align}
\min_{\sigma' \in {\cal S}_{AB,12}}D(\sigma' \|\rho_1\otimes \rho_2)
=D(\sigma_{1}^*\otimes \sigma_{2}^* \|\rho_1\otimes \rho_2).
\end{align}
Therefore, we obtain \eqref{BND5}.
\end{proofof}

\section{Coherence measure}\Label{S4}
Next, we consider how to calculate the proposed measure for maximally correlated states, which are special cases.
For its preparation, 
we study the coherence measures defined below,
because,
as explained in the next section,
our entanglement measures are related to
these coherence measures.
In particular, 
the analysis of maximally correlated states 
can be reduced to the analysis of the coherence measures.
We consider a state 
$\sum_{i,j}\theta_{i,j}|j\rangle \langle i|$ on a
$d$-dimensional  Hilbert space $\cH$ spanned by 
$\{|j\rangle \}_{j=0}^{d-1}$,
and denote the set of incoherent states
by ${\cal I}_{c}$, which is written as $\{\sigma(p)\}$ with $\sigma(p):=\sum_{i}p_i |i\rangle \langle i|$, where $p=(p_i)$
is a distribution.
Then, we define the following coherence measures.
\begin{align}
C^*(\rho):=\min_{\sigma \in {\cal I}_{c}}D(\sigma\|\rho),\quad
C_\alpha^*(\rho):=\min_{\sigma \in {\cal I}_{c}}D_\alpha(\sigma\|\rho).
\Label{ZXO}
\end{align}
Although the papers \cite{Chitambar2,ZHC} considered
the coherence measures based on R\'{e}nyi relative entropy,
our measures are different from their coherence measures
because their measures are based on the quantities 
with exchanging the two input states from ours.
In fact, $C^*(\rho)$ has a geometrical characterization 
similar to Section \ref{S9} by replacing 
${\cal S}_{AB}$ by ${\cal I}_{c}$.

For $\alpha\in (0,1)$ and 
a pure state $|\phi\rangle\langle \phi|$ with
$|\phi\rangle=\sum_{j} e^{\theta_j \sqrt{-1}}p_j|j\rangle$,
we have
\begin{align}
&e^{(\alpha-1) C_{\alpha}^*(|\phi\rangle\langle \phi|)}
=\max_{\sigma \in {\cal I}_{c}}
\Tr (|\phi\rangle\langle \phi|^{\frac{1-\alpha}{2\alpha}}
\sigma |\phi\rangle\langle \phi|^{\frac{1-\alpha}{2\alpha}})^\alpha\nonumber \\
=&\max_{\sigma \in {\cal I}_{c}}
\Tr (|\phi\rangle \langle \phi|
\sigma |\phi\rangle \langle \phi|)^\alpha \nonumber \\
=& \max_{\sigma(q) \in {\cal I}_{c}}
(\langle \phi|
\sigma(q) |\phi\rangle )^\alpha 
= \max_{q}
(\sum_{j}q_j p_j)^\alpha
=(\max_{i} p_i)^\alpha.
\end{align}
Hence, we have
\begin{align}
C_{\alpha}^*(|\phi\rangle\langle \phi|)=\frac{\alpha}{\alpha-1}\log (\max_{i} p_i).
\Label{BMT2}
\end{align}
Also, $C^*(|\phi\rangle\langle \phi|)$ and 
$C_{\alpha}^*(|\phi\rangle\langle \phi|)$ take
infinity value for $\alpha>1$ unless it is an incoherent state.

In the following discussion, we employ the operators
$Z:= \sum_{j=0}^{d-1} \omega^j|j\rangle \langle j|$ with $\omega
:= e^{2 \pi \sqrt{-1}/d}$
and
$X:=\sum_{j=0}^{d-1}|j+1\rangle \langle j| $.
We define
$\rho(p):= \sum_{j=0}^{d-1}p_j Z^j |+\rangle \langle +| Z^{-j}$ with 
$|+\rangle := \frac{1}{\sqrt{d}}\sum_{j=0}^{d-1}|j\rangle$.
Then, we have the following lemma.
\begin{lemma}
Using the uniform distribution $p_{mix}$, we have
\begin{align}
C^*(\rho(p))
&= D( p_{\mix}\|p)=  -\frac{1}{d}\sum_{j=0}^{d-1}\log p_j -\log d \Label{NRUT} \\
C_\alpha^*(\rho(p))
&= D_\alpha( p_{\mix}\|p)\nonumber \\
&= \frac{1}{\alpha-1}\log \sum_{j=0}^{d-1}p_j^{1-\alpha}-
\frac{\alpha}{\alpha-1}\log d,\Label{NRU2T}
\end{align}
for $\alpha \in (0,1)\cup (1,2]$.
\end{lemma}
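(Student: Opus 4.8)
The plan is to show that for every $\alpha$ in the stated range the maximally mixed state $I/d$ is an optimal incoherent state, so that $C^*(\rho(p))=D(I/d\|\rho(p))$ and $C_\alpha^*(\rho(p))=D_\alpha(I/d\|\rho(p))$, and then to evaluate these two divergences directly. The starting observation is the spectral structure of $\rho(p)$. The vectors $|e_j\rangle:=Z^j|+\rangle=\tfrac{1}{\sqrt d}\sum_k\omega^{jk}|k\rangle$ are orthonormal, since $\langle e_j|e_k\rangle=\langle +|Z^{k-j}|+\rangle=\tfrac1d\sum_l\omega^{(k-j)l}=\delta_{jk}$, and they are eigenvectors of $X$ with $X|e_j\rangle=\omega^{-j}|e_j\rangle$. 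Hence $\rho(p)=\sum_j p_j|e_j\rangle\langle e_j|$ is already a spectral decomposition with eigenvalues $\{p_j\}$, and in particular $X$ commutes with $\rho(p)$ and with every power $\rho(p)^s$. I would assume $p_j>0$ for all $j$; if some $p_j=0$ then the support of $\rho(p)$ contains no computational-basis vector (each $|k\rangle$ has full support in the Fourier basis $\{|e_j\rangle\}$), so $C^*$, and $C_\alpha^*$ for $\alpha>1$, equal $+\infty$, matching the right-hand sides.

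The core is a symmetrization argument. Let $\Lambda(\cdot):=\tfrac1d\sum_{m=0}^{d-1}X^m(\cdot)X^{-m}$, a mixed-unitary channel. Since $X^m|k\rangle\langle k|X^{-m}=|k+m\rangle\langle k+m|$ (indices mod $d$), one has $\Lambda(\sigma)=\tfrac1d\sum_k q_k I=I/d$ for every incoherent state $\sigma=\sum_k q_k|k\rangle\langle k|$. Because $X$ commutes with $\rho(p)^{\frac{1-\alpha}{2\alpha}}$ and the trace is unitarily invariant, the quantity $Q_\alpha(\sigma):=\Tr\big(\rho(p)^{\frac{1-\alpha}{2\alpha}}\,\sigma\,\rho(p)^{\frac{1-\alpha}{2\alpha}}\big)^\alpha$ satisfies $Q_\alpha(X^m\sigma X^{-m})=Q_\alpha(\sigma)$. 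Next I would use that $A\mapsto\Tr A^\alpha$ is concave on positive semidefinite matrices for $\alpha\in(0,1)$ and convex for $\alpha\in(1,2]$, while $\sigma\mapsto\rho(p)^{\frac{1-\alpha}{2\alpha}}\sigma\rho(p)^{\frac{1-\alpha}{2\alpha}}$ is linear, so $Q_\alpha$ is concave (resp. convex) on the convex set ${\cal I}_c$. Applying Jensen to the uniform average over $m$ and using the above invariance yields $Q_\alpha(I/d)=Q_\alpha\big(\tfrac1d\sum_m X^m\sigma X^{-m}\big)\ge\tfrac1d\sum_m Q_\alpha(X^m\sigma X^{-m})=Q_\alpha(\sigma)$ for $\alpha\in(0,1)$, and the reverse inequality for $\alpha\in(1,2]$. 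Since $D_\alpha(\sigma\|\rho(p))=\tfrac{1}{\alpha-1}\log Q_\alpha(\sigma)$ and $\tfrac{1}{\alpha-1}$ is negative for $\alpha<1$ and positive for $\alpha>1$, both regimes give $D_\alpha(I/d\|\rho(p))\le D_\alpha(\sigma\|\rho(p))$ for all $\sigma\in{\cal I}_c$; as $I/d\in{\cal I}_c$ this proves $C_\alpha^*(\rho(p))=D_\alpha(I/d\|\rho(p))$. For $\alpha=1$ the identical scheme works with $Q_\alpha$ replaced by $D(\cdot\|\rho(p))$, using its convexity in the first argument together with $D(X^m\sigma X^{-m}\|\rho(p))=D(X^m\sigma X^{-m}\|X^m\rho(p)X^{-m})=D(\sigma\|\rho(p))$ (alternatively, take $\alpha\to1$).

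It remains to evaluate the two divergences at $I/d$. From $\rho(p)=\sum_j p_j|e_j\rangle\langle e_j|$ we get $\rho(p)^{\frac{1-\alpha}{2\alpha}}\tfrac{I}{d}\rho(p)^{\frac{1-\alpha}{2\alpha}}=\tfrac1d\rho(p)^{\frac{1-\alpha}{\alpha}}$, hence $Q_\alpha(I/d)=d^{-\alpha}\Tr\rho(p)^{1-\alpha}=d^{-\alpha}\sum_j p_j^{1-\alpha}$, giving $D_\alpha(I/d\|\rho(p))=\tfrac{1}{\alpha-1}\log\sum_j p_j^{1-\alpha}-\tfrac{\alpha}{\alpha-1}\log d$, which is \eqref{NRU2T} and coincides with the classical $D_\alpha(p_{\mix}\|p)$. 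Similarly $D(I/d\|\rho(p))=\Tr\tfrac{I}{d}\big(\log\tfrac{I}{d}-\log\rho(p)\big)=-\log d-\tfrac1d\sum_j\log p_j=D(p_{\mix}\|p)$, which is \eqref{NRUT}. The only delicate points are invoking the matrix concavity/convexity on the correct domain and tracking the sign flip coming from $\sgn(\alpha-1)$; I expect the main obstacle to be purely expository, namely making transparent that $\Lambda$ collapses all of ${\cal I}_c$ onto the single state $I/d$ while fixing $\rho(p)$, after which the optimizer is identified with no Lagrange computation at all.
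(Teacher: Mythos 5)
Your proof is correct and follows essentially the same route as the paper: twirling an arbitrary incoherent state over the powers of $X$ (which fix $\rho(p)$) onto the maximally mixed state, and invoking convexity/concavity of the relevant functional together with the sign of $\tfrac{1}{\alpha-1}$ to identify $I/d$ as the optimizer. The only differences are cosmetic---you spell out the spectral decomposition, the $p_j=0$ case, and the final evaluation that the paper dismisses as ``direct calculation,'' and you use convexity of $A\mapsto\Tr A^\alpha$ where the paper cites matrix convexity of $x\mapsto x^\alpha$; both suffice.
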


\begin{proof}
Since the above second equations follow from the direct calculation, the following analysis shows the first equations.

We have
$X \rho(p) X^{-1}=\rho(p).$
Thus, $D( \sigma(q)\|\rho(p))
=D(X^j \sigma(q)X^{-j}\|\rho(p))$. 
Hence, the joint convexity of the relative entropy implies
\begin{align}
&D( \sigma(q)\|\rho(p))
=\sum_{j=0}^{d-1} \frac{1}{d}D(X^j \sigma(q)X^{-j}\|\rho(p))
\nonumber \\
\ge &
D(\sum_{j=0}^{d-1} \frac{1}{d} X^j \sigma(q)X^{-j}\|\rho(p))\nonumber  \\
=& D(\rho_{mix}\|\rho(p))
=D(\rho(p_{mix})\|\rho(p)) =D(p_{mix}\|p),
\end{align}
where $\rho_{mix}$ is the completely mixed state.
The above relations imply the first equation \eqref{NRUT}.
Similarly, since $x \mapsto x^\alpha$ is matrix convex for 
$\alpha\in [1,2]$, 
\begin{align}
& e^{(\alpha-1)D_\alpha( \sigma(q)\|\rho(p))}
=\sum_{j=0}^{d-1} \frac{1}{d}e^{(\alpha-1)D_\alpha( X^j\sigma(q)X^j \|\rho(p))} \nonumber \\
\ge &
e^{(\alpha-1)D_\alpha( \sum_{j=0}^{d-1} \frac{1}{d} X^j\sigma(q)X^j \|\rho(p))} \nonumber \\
=&e^{(\alpha-1)D_\alpha( \rho_{mix} \|\rho(p))}
=e^{(\alpha-1)D_\alpha( p_{mix} \|p)},\Label{NBYA}
\end{align}
which implies the first equation \eqref{NRU2T} for $\alpha\in (1,2]$.
Similarly, since $x \mapsto x^\alpha$ is matrix concave for $\alpha\in [0,1)$, 
an analysis similar to \eqref{NBYA}
yields the first equation \eqref{NRU2T} for $\alpha\in [0,1)$.
\end{proof}

To calculate $C^*(\rho)$ for a general state,
we define the state 
$I_{c}(\rho):=
\frac{1}{t(\rho)} \exp((\log \rho)^{\rm diag})$
with $t(\rho):=\Tr \exp((\log \rho)^{\rm diag})$
and $(\log \rho)^{\rm diag}$ is the diagonal 
matrix whose diagonal elements are the same as
$\log \rho$.
Then, $C^*(\rho)$ is calculated as follows.

\begin{lemma}\Label{L2}
The relation
\begin{align}
C^*(\rho)= 
D(I_{c}(\rho) \|\rho)\Label{BVP}
\end{align}
holds.
\end{lemma}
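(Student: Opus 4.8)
The plan is to exploit the same information-geometric Pythagorean structure used in Section \ref{S9}, now with the incoherent set $\mathcal{I}_c$ in place of $\mathcal{S}_{AB}$. First I would observe that $I_c(\rho)$ is an incoherent state, since $(\log\rho)^{\rm diag}$ is diagonal in the basis $\{|j\rangle\}$, hence $\exp((\log\rho)^{\rm diag})$ is diagonal, hence $I_c(\rho) = \sigma(p)$ for the distribution $p_j = \langle j|\exp((\log\rho)^{\rm diag})|j\rangle / t(\rho)$. So $I_c(\rho) \in \mathcal{I}_c$, and it suffices to show that it is the minimizer of $D(\sigma\|\rho)$ over $\mathcal{I}_c$.

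The key step is to show that $I_c(\rho)$ is the ``$e$-projection'' of $\rho$ onto $\mathcal{I}_c$ in the sense that the Pythagorean identity holds: for every incoherent $\sigma(p') \in \mathcal{I}_c$,
\begin{align}
D(\sigma(p')\|\rho) = D(\sigma(p')\|I_c(\rho)) + D(I_c(\rho)\|\rho).
\Label{pyth-coh}
\end{align}
Once \eqref{pyth-coh} is established, nonnegativity of relative entropy gives $D(\sigma(p')\|\rho)\ge D(I_c(\rho)\|\rho)$ with equality at $\sigma(p')=I_c(\rho)$, which is exactly \eqref{BVP}. To prove \eqref{pyth-coh}, expand the right-hand side using $\Tr\sigma(p')\log\sigma(p') = \Tr I_c(\rho)\log I_c(\rho)$ replaced appropriately; concretely, the cross term is
\begin{align}
&D(\sigma(p')\|I_c(\rho)) + D(I_c(\rho)\|\rho) - D(\sigma(p')\|\rho)\nonumber\\
=& \Tr(I_c(\rho) - \sigma(p'))\big(\log I_c(\rho) - \log\rho\big).
\end{align}
Now $\log I_c(\rho) - \log\rho = -\log t(\rho)\, I + (\log\rho)^{\rm diag} - \log\rho$. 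When we trace this against $I_c(\rho) - \sigma(p')$, the $-\log t(\rho) I$ term vanishes because both states have trace $1$; the term $(\log\rho)^{\rm diag} - \log\rho$ is an off-diagonal Hermitian matrix (it has zero diagonal), so tracing it against the diagonal matrix $I_c(\rho) - \sigma(p')$ also vanishes. Hence the cross term is zero and \eqref{pyth-coh} holds.

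The main obstacle I anticipate is purely bookkeeping: making sure the ``diagonal vs.\ off-diagonal kills the cross term'' argument is watertight, in particular that $\Tr\big(D\cdot(M - M^{\rm diag})\big)=0$ for any diagonal $D$ and Hermitian $M$, which is immediate since $(M - M^{\rm diag})$ has vanishing diagonal entries and $\Tr(DN)=\sum_j D_{jj}N_{jj}$. One should also note the support condition: if $\rho$ is not full rank, $\log\rho$ is taken on the support and the minimization is implicitly over incoherent states whose support lies in that of $\rho$, so $D(\sigma(p')\|\rho)$ is finite exactly on the relevant set and the identity \eqref{pyth-coh} makes sense there. Modulo this caveat, the argument is a direct specialization of the Pythagorean/witness geometry already developed, so no genuinely new difficulty arises.
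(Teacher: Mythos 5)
Your proposal is correct and is essentially the paper's own argument in different clothing: the paper directly computes $D(\sigma\|\rho)=D(\sigma\|I_c(\rho))-\log t(\rho)$ for incoherent $\sigma$ using the fact that a diagonal $\sigma$ satisfies $\Tr\sigma\log\rho=\Tr\sigma(\log\rho)^{\rm diag}$, and since $D(I_c(\rho)\|\rho)=-\log t(\rho)$ this is exactly your Pythagorean identity. The minimization then follows from nonnegativity of $D(\sigma\|I_c(\rho))$ in both versions, so no substantive difference remains.
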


\begin{proof}
For $\sigma \in {\cal I}_{c}$, we have 
\begin{align}
&D(\sigma\|\rho)
=\Tr \sigma (\log \sigma - \log \rho)\nonumber \\
=&\Tr \sigma (\log \sigma - (\log \rho)^{\rm diag})) \nonumber \\
=&\Tr \sigma (\log \sigma - \log I_{c}(\rho)-\log t(\rho)
\nonumber \\
=&D( \sigma \| I_{c}(\rho))-\log t(\rho).
\end{align}
Hence, the minimization is realized when $\sigma=I_{c}(\rho)$.
\end{proof}

\begin{lemma}\Label{L3}
The relation
\begin{align}
&C_\alpha^*(\rho)
\nonumber \\
=&\left\{
\begin{array}{l}
\max_{\tau} 
\frac{\alpha}{\alpha-1}\log \|( \rho^{\frac{1-\alpha}{2\alpha}}
\tau^{-\frac{1-\alpha}{\alpha}} \rho^{\frac{1-\alpha}{2\alpha}})^{\rm diag}
\| \\
\hspace{35ex} \hbox{ for }  \alpha \in [1/2,1)\\
\max_{\tau} 
\frac{-\alpha}{\alpha-1}\log \|(( \rho^{\frac{1-\alpha}{2\alpha}}
\tau^{-\frac{1-\alpha}{\alpha}} \rho^{\frac{1-\alpha}{2\alpha}})^{\rm diag})^{-1}
\| \\
\hspace{35ex} \hbox{ for }  \alpha >1
\end{array}
\right.
\Label{NYH2}
\end{align}
holds.
\end{lemma}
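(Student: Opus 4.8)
The plan is to carry out the proof in three moves: first recast the minimization over incoherent states in \eqref{ZXO} as an optimization over the probability simplex; then replace the trace power $\Tr(\cdot)^{\alpha}$ by its tangent-line (variational) representation, introducing an auxiliary density operator $\tau$; and finally interchange the two optimizations by Sion's minimax theorem, after which the diagonal operation $(\cdot)^{\rm diag}$ and the operator norm in \eqref{NYH2} fall out of the explicit simplex optimization. For Step~1, write $\sigma=\sigma(p)=\sum_i p_i|i\rangle\langle i|$ and set $A_p:=\rho^{\frac{1-\alpha}{2\alpha}}\sigma(p)\rho^{\frac{1-\alpha}{2\alpha}}\ge 0$, so that $e^{(\alpha-1)D_\alpha(\sigma(p)\|\rho)}=\Tr A_p^\alpha$. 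Since $x\mapsto\frac{1}{\alpha-1}\log x$ is decreasing for $\alpha<1$ and increasing for $\alpha>1$, this gives $C_\alpha^*(\rho)=\frac{1}{\alpha-1}\log\bigl(\max_p\Tr A_p^\alpha\bigr)$ for $\alpha\in[1/2,1)$ and $C_\alpha^*(\rho)=\frac{1}{\alpha-1}\log\bigl(\min_p\Tr A_p^\alpha\bigr)$ for $\alpha>1$.

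For Step~2, I will use that $X\mapsto\Tr X^\alpha$ on positive operators is concave for $\alpha\in(0,1)$ and convex for $\alpha>1$, with derivative $\alpha\,\Tr(X^{\alpha-1}\,\cdot\,)$; combining the supporting-line inequality at $X=A$ with the homogeneous rescaling $X=c\,\tau^{1/\alpha}$ over density operators $\tau$ yields
\[
(\Tr A^\alpha)^{1/\alpha}
=\begin{cases}
\min_{\tau} \Tr\bigl(\tau^{-\frac{1-\alpha}{\alpha}}A\bigr),&\alpha\in(0,1),\\
\max_{\tau} \Tr\bigl(\tau^{-\frac{1-\alpha}{\alpha}}A\bigr),&\alpha>1,
\end{cases}
\]
the optimum being attained at $\tau$ proportional to $A^\alpha$. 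Applying this with $A=A_p$ and using cyclicity of the trace, $\Tr\bigl(\tau^{-\frac{1-\alpha}{\alpha}}A_p\bigr)=\Tr(B_\tau\,\sigma(p))=\sum_i p_i (B_\tau)_{ii}$, where $B_\tau:=\rho^{\frac{1-\alpha}{2\alpha}}\tau^{-\frac{1-\alpha}{\alpha}}\rho^{\frac{1-\alpha}{2\alpha}}\ge 0$. Since $t\mapsto t^\alpha$ is increasing, $\max_p\Tr A_p^\alpha=\bigl(\max_p\min_\tau\sum_i p_i(B_\tau)_{ii}\bigr)^\alpha$ for $\alpha<1$ and $\min_p\Tr A_p^\alpha=\bigl(\min_p\max_\tau\sum_i p_i(B_\tau)_{ii}\bigr)^\alpha$ for $\alpha>1$.

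For Step~3, note that $(p,\tau)\mapsto\sum_i p_i(B_\tau)_{ii}$ is linear in $p$ on the compact simplex. For $\alpha\in[1/2,1)$ the exponent $-\frac{1-\alpha}{\alpha}$ lies in $[-1,0)$, so $\tau\mapsto\tau^{-\frac{1-\alpha}{\alpha}}$ is operator convex and each $(B_\tau)_{ii}=\langle i|\rho^{\frac{1-\alpha}{2\alpha}}\tau^{-\frac{1-\alpha}{\alpha}}\rho^{\frac{1-\alpha}{2\alpha}}|i\rangle$ is convex in $\tau$ — this is precisely where $\alpha\ge 1/2$ enters; for $\alpha>1$ the exponent lies in $(0,1)$, the map is operator concave, and $(B_\tau)_{ii}$ is concave in $\tau$. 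Sion's minimax theorem (the simplex and the set of density operators being convex and compact) then gives $\max_p\min_\tau=\min_\tau\max_p$ for $\alpha<1$ and $\min_p\max_\tau=\max_\tau\min_p$ for $\alpha>1$. Finally, since $B_\tau\ge 0$, $\max_p\sum_i p_i(B_\tau)_{ii}=\max_i(B_\tau)_{ii}=\|(B_\tau)^{\rm diag}\|$ and $\min_p\sum_i p_i(B_\tau)_{ii}=\min_i(B_\tau)_{ii}=\|((B_\tau)^{\rm diag})^{-1}\|^{-1}$. Substituting back into $C_\alpha^*(\rho)=\frac{\alpha}{\alpha-1}\log(\cdots)$ and absorbing the prefactor $\frac{\alpha}{\alpha-1}$ (negative for $\alpha<1$, positive for $\alpha>1$) inside the optimization — which turns the remaining $\min_\tau$ into $\max_\tau$ and, for $\alpha>1$, rewrites $\log\|((B_\tau)^{\rm diag})^{-1}\|^{-1}$ as $-\log\|((B_\tau)^{\rm diag})^{-1}\|$ — produces exactly the two cases of \eqref{NYH2}.

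I expect the main obstacle to be the minimax exchange in Step~3: one must carefully confirm the operator convexity/concavity of $\tau\mapsto\tau^{-\frac{1-\alpha}{\alpha}}$ on the relevant ranges (this is what forces the restriction $\alpha\ge 1/2$, since $t\mapsto t^{-s}$ fails to be operator convex for $s>1$) and verify the hypotheses of Sion's theorem, including semicontinuity up to the boundary of the state set, where negative fractional powers of a rank-deficient $\rho$ or $\tau$ must be read on supports — consistently, both sides of \eqref{NYH2} become $+\infty$ when $\alpha>1$ and some basis vector leaves $\mathrm{supp}\,\rho$. Verifying the operator-level validity of the trace-power variational formula in Step~2 (differentiability of $X\mapsto\Tr X^\alpha$ and the scaling argument) is a routine secondary point.
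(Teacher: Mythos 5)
Your proposal is correct and follows essentially the same route as the paper's proof: the auxiliary variable $\tau$ enters through the same H\"older-type variational formula for $\Tr A^\alpha$ (which the paper obtains by citing the matrix H\"older inequality and its reverse, where you rederive it from concavity/convexity of the trace power plus a scaling argument), the two optimizations are exchanged by the same minimax argument resting on operator convexity/concavity of $\tau\mapsto\tau^{-\frac{1-\alpha}{\alpha}}$ (exactly where $\alpha\ge 1/2$ is used), and the optimization over incoherent states is evaluated as the diagonal sup/inf in both. No gaps.
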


\begin{proof}
The idea of this proof comes from the proof of \cite[Lemma 6 and (B4)]{HT}. 
We choose 
$\beta$ such that $\frac{1}{\beta}+\frac{1}{\alpha}=1$
so that $\beta= \frac{\alpha-1}{\alpha}$.
We fix $\alpha \in [1/2,1)$ so that $\beta\in [-1,0)$.
For any state $\tau$ and any incoherent state 
$\sigma \in {\cal I}_{c}$,
the reverse matrix H\"older inequality \cite[Theorem 2.7]{H-q-text} implies
\begin{align}
&\Tr 
\tau^{\beta} 
\rho^{\frac{1-\alpha}{2\alpha}}
 \sigma \rho^{\frac{1-\alpha}{2\alpha}}
\ge
(\Tr \tau)^{\beta}
(\Tr 
(\rho^{\frac{1-\alpha}{2\alpha}}
 \sigma \rho^{\frac{1-\alpha}{2\alpha}})^\alpha)^{\frac{1}{\alpha}} \nonumber \\
 =&
(\Tr 
(\rho^{\frac{1-\alpha}{2\alpha}}
 \sigma \rho^{\frac{1-\alpha}{2\alpha}})^\alpha)^{\frac{1}{\alpha}}.
\end{align}
Considering the equality condition, we have
\begin{align}
&\min_{\tau} 
\Tr 
\tau^{\beta} 
\rho^{\frac{1-\alpha}{2\alpha}}
 \sigma \rho^{\frac{1-\alpha}{2\alpha}}
=
(\Tr 
(\rho^{\frac{1-\alpha}{2\alpha}}
 \sigma \rho^{\frac{1-\alpha}{2\alpha}})^\alpha)^{\frac{1}{\alpha}}.
\end{align}
Hence,
\begin{align}
&e^{(\alpha-1)C_\alpha^*(\rho)}
=
\max_{\sigma \in {\cal I}_{c}} 
\Tr 
(\rho^{\frac{1-\alpha}{2\alpha}}
 \sigma \rho^{\frac{1-\alpha}{2\alpha}})^\alpha
\nonumber \\
 =&
(\max_{\sigma \in {\cal I}_{c}} \min_{\tau}
\Tr \rho^{\frac{1-\alpha}{2\alpha}}
\tau^{\beta} \rho^{\frac{1-\alpha}{2\alpha}}
 \sigma )^\alpha \nonumber \\
\stackrel{(a)}{=}&(\min_{\tau}\max_{\sigma \in {\cal I}_{c}} 
\Tr \rho^{\frac{1-\alpha}{2\alpha}}
\tau^{\beta} \rho^{\frac{1-\alpha}{2\alpha}}
 \sigma )^\alpha
\nonumber \\
=&
\min_{\tau} 
\|( \rho^{\frac{1-\alpha}{2\alpha}}
\tau^{\beta} \rho^{\frac{1-\alpha}{2\alpha}})^{\rm diag}
\|^\alpha.
\end{align}
Here,
$(a)$ follows from the minimax theorem because
$\Tr \rho^{\frac{1-\alpha}{2\alpha}}
\tau^{\beta} \rho^{\frac{1-\alpha}{2\alpha}}
 \sigma$ is linear for $\sigma$ and is convex for $\tau$,
 which follows from 
 the matrix convexity of $\tau \mapsto \tau^{\beta}$ with $\beta\in [-1,0)$.
 Hence, we obtain the desired equation with $\alpha \in (0,1)$.

Next, we fix $\alpha >1$ so that $1>\beta>0$.
The matrix H\"older inequality \cite{Hiai},\cite[Theorem 2.6]{H-q-text}
implies
\begin{align}
&\Tr 
\tau^{\beta} 
\rho^{\frac{1-\alpha}{2\alpha}}
 \sigma \rho^{\frac{1-\alpha}{2\alpha}}
\le
(\Tr \tau)^{\beta}
(\Tr 
(\rho^{\frac{1-\alpha}{2\alpha}}
 \sigma \rho^{\frac{1-\alpha}{2\alpha}})^\alpha)^{\frac{1}{\alpha}}\nonumber \\
 =&
(\Tr 
(\rho^{\frac{1-\alpha}{2\alpha}}
 \sigma \rho^{\frac{1-\alpha}{2\alpha}})^\alpha)^{\frac{1}{\alpha}}.
\end{align}
Considering the equality condition, we have
\begin{align}
&\max_{\tau} 
\Tr 
\tau^{\beta} 
\rho^{\frac{1-\alpha}{2\alpha}}
 \sigma \rho^{\frac{1-\alpha}{2\alpha}}
=
(\Tr 
(\rho^{\frac{1-\alpha}{2\alpha}}
 \sigma \rho^{\frac{1-\alpha}{2\alpha}})^\alpha)^{\frac{1}{\alpha}}.
\end{align}
Hence,
\begin{align}
&e^{(\alpha-1)C_\alpha^*(\rho)}
=
\min_{\sigma \in {\cal I}_{c}} 
\Tr 
(\rho^{\frac{1-\alpha}{2\alpha}}
 \sigma \rho^{\frac{1-\alpha}{2\alpha}})^\alpha\nonumber \\
 =&
(\min_{\sigma \in {\cal I}_{c}} \max_{\tau}
\Tr \rho^{\frac{1-\alpha}{2\alpha}}
\tau^{\beta} \rho^{\frac{1-\alpha}{2\alpha}}
 \sigma )^\alpha \nonumber \\
\stackrel{(a)}{=}&(\max_{\tau}\min_{\sigma \in {\cal I}_{c}} 
\Tr \rho^{\frac{1-\alpha}{2\alpha}}
\tau^{\beta} \rho^{\frac{1-\alpha}{2\alpha}}
 \sigma )^\alpha
\nonumber \\
=&
(\max_{\tau} 
\|
(( \rho^{\frac{1-\alpha}{2\alpha}}
\tau^{\beta} \rho^{\frac{1-\alpha}{2\alpha}})^{\rm diag})^{-1}
\|^{-1})^{\alpha}
\nonumber \\
=&
\max_{\tau} 
\|
(( \rho^{\frac{1-\alpha}{2\alpha}}
\tau^{\beta} \rho^{\frac{1-\alpha}{2\alpha}})^{\rm diag})^{-1}
\|^{-\alpha}.
\end{align}
Here,
$(a)$ follows from the minimax theorem because
$\Tr \rho^{\frac{1-\alpha}{2\alpha}}
\tau^{\beta} \rho^{\frac{1-\alpha}{2\alpha}}
 \sigma$ is linear for $\sigma$ and is concave for $\tau$.
Hence, we obtain the desired equation with $\alpha >1$.
\end{proof}

Coherence measures satisfy the additivity.
\begin{lemma}\Label{L4}
The relations
\begin{align}
\begin{split}
C^*(\rho_1\otimes \rho_2)
&= C^*(\rho_1)+ C^*(\rho_2),\\
C_\alpha^*(\rho_1\otimes \rho_2)
&= C_\alpha^*(\rho_1)+ C_\alpha^*(\rho_2) 
\end{split}
\Label{NYH}
\end{align}
hold with $\alpha \in [1/2,1) \cup (1,\infty)$.
\end{lemma}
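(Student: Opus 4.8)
The plan is to establish both additivity relations by the same two-sided argument, treating the two inequalities separately. For the direction $C^*(\rho_1\otimes\rho_2)\le C^*(\rho_1)+C^*(\rho_2)$ (and similarly for $C_\alpha^*$), I would simply use a product test state: since $I_c(\rho_1)\otimes I_c(\rho_2)$ is incoherent on $\cH_1\otimes\cH_2$ and $D(\sigma_1\otimes\sigma_2\|\rho_1\otimes\rho_2)=D(\sigma_1\|\rho_1)+D(\sigma_2\|\rho_2)$, plugging $\sigma_j=I_c(\rho_j)$ into the minimization in \eqref{ZXO} and invoking Lemma \ref{L2} gives the $\le$ bound immediately; the same works for $D_\alpha$ using its additivity on tensor products together with the optimizers from Lemma \ref{L3}.

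For the reverse inequality I would mimic the information-geometric proof of Proposition \ref{PPH} given in Section \ref{S4-2}, now in the coherence setting. The analogue of the entanglement witness $X_j$ here is the operator $Y_j:=(\log I_c(\rho_j)-\log\rho_j)-C^*(\rho_j)$; since $C^*$ has the geometric characterization noted after \eqref{ZXO} (with ${\cal I}_c$ replacing ${\cal S}_{AB}$), Theorem \ref{TH33}(i) in its coherence form says $\Tr\sigma_j Y_j\ge 0$ for every incoherent $\sigma_j$. One checks that $Y_1\otimes I+I\otimes Y_2$ separates $\rho_1\otimes\rho_2$ from the incoherent states on the joint system: for incoherent $\sigma$ on $\cH_1\otimes\cH_2$, both partial traces $\Tr_2\sigma$ and $\Tr_1\sigma$ are incoherent, so $\Tr\sigma(Y_1\otimes I+I\otimes Y_2)=\Tr(\Tr_2\sigma)Y_1+\Tr(\Tr_1\sigma)Y_2\ge 0$. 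Then the Pythagorean identity \eqref{BVI}, applied with $X=-(Y_1\otimes I+I\otimes Y_2)$ and reference state $I_c(\rho_1)\otimes I_c(\rho_2)$ — which is exactly the unique intersection point of the corresponding mixture and exponential families, because $\log(I_c(\rho_1)\otimes I_c(\rho_2))-\log(\rho_1\otimes\rho_2)$ differs from $Y_1\otimes I+I\otimes Y_2$ only by a scalar — forces $D(\sigma\|\rho_1\otimes\rho_2)\ge D(I_c(\rho_1)\otimes I_c(\rho_2)\|\rho_1\otimes\rho_2)=C^*(\rho_1)+C^*(\rho_2)$ for every incoherent $\sigma$, giving $\ge$.

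For the R\'enyi case $\alpha\in[1/2,1)\cup(1,\infty)$, rather than reproving a Pythagorean statement I would prefer to run the reverse inequality directly through the variational formula of Lemma \ref{L3}: the diagonal-part and operator-norm structure there is multiplicative under tensor products (the diagonal of a tensor product is the tensor product of the diagonals, and $\|A_1\otimes A_2\|=\|A_1\|\|A_2\|$), and the maximization over $\tau$ on the joint system restricted to product $\tau=\tau_1\otimes\tau_2$ already yields the sum of the individual values; combined with the easy $\le$ direction this pins down equality. I should be slightly careful that the optimal $\tau$ on the joint system need not be a product, so the clean step is: product $\tau$'s give $C_\alpha^*(\rho_1)+C_\alpha^*(\rho_2)$ as a lower bound for the joint max (in the $\alpha>1$ case) or upper bound for the joint min-type quantity (in the $\alpha\in[1/2,1)$ case), and the opposite bound comes from the product test state in \eqref{ZXO}.

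The main obstacle is the reverse (superadditivity) inequality: one must argue that no \emph{global} incoherent state, or no \emph{global} auxiliary $\tau$, can beat the product choice. For $C^*$ the Pythagorean/witness argument handles this cleanly because $I_c(\rho_1)\otimes I_c(\rho_2)$ is genuinely the relative-entropy projection onto the joint incoherent set; for $C_\alpha^*$ the delicate point is verifying that the optimizer structure from Lemma \ref{L3} tensorizes and that the reverse matrix H\"older / matrix H\"older inequalities used there are saturated simultaneously on both factors, so that restricting to product $\tau$ is without loss. I expect the $C^*$ half to be three lines given Section \ref{S4-2}, and essentially all the work to sit in making the $C_\alpha^*$ tensorization rigorous.
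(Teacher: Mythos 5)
Your proposal is correct, and for the R\'enyi part it is essentially the paper's own argument: the paper proves only the $C_\alpha^*$ identity, by sandwiching
$C_\alpha^*(\rho_1)+C_\alpha^*(\rho_2)\ge C_\alpha^*(\rho_1\otimes\rho_2)\ge C_\alpha^*(\rho_1)+C_\alpha^*(\rho_2)$,
where the first inequality is the product-test-state bound and the second comes from restricting the $\tau$-optimization in Lemma \ref{L3} to product $\tau_1\otimes\tau_2$ and using that the diagonal map and the operator norm factorize over tensor products. Note that this sandwich already closes the argument, so your worry about whether the global optimizer $\tau$ tensorizes, or whether the matrix H\"older inequalities are saturated simultaneously on both factors, is moot: one never needs the optimal $\tau$ to be a product, only that product $\tau$'s are feasible. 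Where you genuinely diverge is the $C^*$ case: the paper simply obtains it as the $\alpha\to 1$ limit of the $C_\alpha^*$ identity, whereas you give a self-contained Pythagorean/witness argument modeled on Section \ref{S4-2} (using that partial traces of incoherent states are incoherent, which is correct). Your route avoids having to justify the interchange of limit and minimization at $\alpha=1$, at the cost of more machinery; an even shorter direct proof is available from Lemma \ref{L2} alone, since $(\log(\rho_1\otimes\rho_2))^{\rm diag}=(\log\rho_1)^{\rm diag}\otimes I+I\otimes(\log\rho_2)^{\rm diag}$ gives $I_c(\rho_1\otimes\rho_2)=I_c(\rho_1)\otimes I_c(\rho_2)$ and hence $C^*(\rho_1\otimes\rho_2)=D(I_c(\rho_1)\otimes I_c(\rho_2)\|\rho_1\otimes\rho_2)=C^*(\rho_1)+C^*(\rho_2)$ directly.
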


\begin{proof}
Since the first equation in \eqref{NYH} follows from 
the second equation in \eqref{NYH} with taking the limit $\alpha\to 1$, we show only the second equation in \eqref{NYH}.
The idea of this proof is similar to the proof of \cite[Lemma 7]{HT}.
For $\alpha \in (0,1)$, using \eqref{NYH2}, we have
\begin{align}
&C_\alpha^*(\rho_1)+ C_\alpha^*(\rho_2) 
\ge 
C_\alpha^*(\rho_1\otimes \rho_2) \nonumber \\
=&
\max_{\tau} 
\frac{\alpha}{\alpha-1}\log \big\|( (\rho_1\otimes \rho_2)^{\frac{1-\alpha}{2\alpha}}
\tau^{-\frac{1-\alpha}{\alpha}} (\rho_1\otimes \rho_2)^{\frac{1-\alpha}{2\alpha}})^{\rm diag}
\big\| \nonumber \\
\ge &
\max_{\tau_1,\tau_2} 
\frac{\alpha}{\alpha-1}\log \big\|( (\rho_1\otimes \rho_2)^{\frac{1-\alpha}{2\alpha}}
(\tau_1\otimes \tau_2)^{-\frac{1-\alpha}{\alpha}} 
\nonumber \\
&\hspace{20ex}\cdot(\rho_1\otimes \rho_2)^{\frac{1-\alpha}{2\alpha}})^{\rm diag}
\big\| \nonumber \\
= &
\max_{\tau_1,\tau_2} 
\frac{\alpha}{\alpha-1}
\big(\log \big\|( \rho_1^{\frac{1-\alpha}{2\alpha}}
\tau_1^{-\frac{1-\alpha}{\alpha}} \rho_1^{\frac{1-\alpha}{2\alpha}})^{\rm diag}
\big\| \nonumber \\
&+
\log \big\|( \rho_2^{\frac{1-\alpha}{2\alpha}}
\tau_2^{-\frac{1-\alpha}{\alpha}} \rho_2^{\frac{1-\alpha}{2\alpha}})^{\rm diag}
\big\| \big)
\nonumber  \\
=&
C_\alpha^*(\rho_1)+ C_\alpha^*(\rho_2) ,
\end{align}
which implies the second equation in \eqref{NYH}.
The case with $\alpha >1$ can be shown in the same way.
\end{proof}

\begin{table}[t]
\caption{List of mathematical symbols 
for coherence.}
\label{symbols2}
\begin{center}
\begin{tabular}{|l|l|l|}
\hline
Symbol& Description & Eq. number  \\
\hline
${\cal I}_{c}$ &Set of incoherent states &
\\
\hline
$C^*(\rho)$ &Minimum relative entropy & \eqref{ZXO}\\
\hline
\multirow{2}{*}{$C_\alpha^*(\rho)$} 
&Minimum relative 
R\'{e}nyi & \multirow{2}{*}{\eqref{ZXO}}\\
&entropy of order $\alpha$ & \\
\hline
$I_{c}(\rho)$ & Closest incoherent state & 
\eqref{BVP}\\
\hline
$\sigma(p)$ & $\sum_{i}p_i |i\rangle \langle i|$&
\\
\hline
$\rho(p)$ & $ \sum_{j=0}^{d-1}p_j Z^j |+\rangle \langle +| Z^{-j}$&
\\
\hline
\end{tabular}
\end{center}
\end{table}

\begin{remark}
Although the above-presented proofs employ
several techniques that come from prior works \cite{Chitambar2,ZHC},
the above proofs are not simple combinations of 
techniques that come from prior works.
Since the two entries of the relative entropy
and relative R\'{e}nyi entropy  
are exchanged in our measure,
the above proofs require new
careful treatments for matrix convexity
related to the parameter $\alpha$.
\end{remark}

\begin{remark}
It is possible to make the same discussion for 
the coherence as Sections \ref{S2} -- \ref{S9}.
Since the coherence measures satisfy the additivity as Lemma \ref{L4},
the detectability even with the non-iid setting like Section \ref{S3}
can be characterized by the quantities 
$C^*(\rho)$ and $C_\alpha^*(\rho)$.
\end{remark}

\section{Maximally correlated states}\Label{S5}
The discussion in Section \ref{S2}
shows
the operational meaning of 
$E_{\alpha}^*(\rho_{AB})$ including $E^*(\rho_{AB})$.
We calculate 
$E^*(\rho_{AB})$ and
$E^*_\alpha(\rho_{AB})$ for several special cases.
When $\rho_{AB}$ is a pure state 
$|\phi\rangle \langle \phi|$
$E^*(\rho_{AB})$ and
$E^*_\alpha(\rho_{AB})$ with $\alpha>1$
are infinity.
However,
$E^*_\alpha(\rho_{AB})$ with $\alpha \in [0,1)$
takes a finite value as follows.
When the Schmidt coefficients of 
$|\phi\rangle \langle \phi|$ are $\sqrt{p_i}$ with $i=1, \ldots,d$,
$\max_{\sigma_{AB} \in {\cal S}_{AB}} \langle \phi| \sigma |\phi\rangle=
\max_{i} p_i$ \cite[(40) and (97)]{ZHC}.
Hence,
\begin{align}
&e^{(\alpha-1) E_{\alpha}^*(\rho_{AB})}
=\max_{\sigma_{AB} \in {\cal S}_{AB}}
\Tr \big(\rho^{\frac{1-\alpha}{2\alpha}}
\sigma_{AB} \rho^{\frac{1-\alpha}{2\alpha}}\big)^\alpha
\nonumber \\
=&\max_{\sigma_{AB} \in {\cal S}_{AB}}
\Tr \big(|\phi\rangle \langle \phi|
\sigma_{AB} |\phi\rangle \langle \phi|\big)^\alpha 
= \max_{\sigma_{AB} \in {\cal S}_{AB}}
\langle \phi| \sigma_{AB} |\phi\rangle^\alpha
\nonumber \\
=&\Big(\max_{\sigma_{AB} \in {\cal S}_{AB}} \langle \phi| \sigma_{AB} |\phi\rangle \Big)^\alpha
=(\max_{i} p_i)^\alpha.
\end{align}
Hence, we have
\begin{align}
E_{\alpha}^*(\rho_{AB})=\frac{\alpha}{\alpha-1}\log (\max_{i} p_i).
\Label{BMT}
\end{align}

More generally, the calculation of these quantities 
can be converted to the calculation of the coherence measure
for special states $\rho_{AB}$.
Similar to the paper \cite{ZHC}, we consider maximally correlated states.
A state $\rho_{AB}$ on $\cH_A \otimes \cH_B$
is called a maximally correlated state when
there exist bases $\{|u_{j,A}\rangle\}_j$ and $\{|u_{j,B}\rangle\}_j$
of $\cH_A$ and $\cH_B$ such that
the support $\rho_{AB}$ is included in the subspace spanned by 
$\{|u_{j,A}\rangle|u_{j,B}\rangle\}_j$ on $\cH_A\otimes \cH_B$ 
\cite{Rains,Hiroshima}, \cite[Section 8.8]{H-text}. 
In this case, the state $\rho_{AB}$
is written as 
$\sum_{i,j}\theta_{i,j}
|u_{j,A}\rangle|u_{j,B}\rangle \langle u_{i,A}|\langle u_{i,B}|$.
In this case, the calculation of 
$E^*(\rho_{AB})$ and
$E^*_\alpha(\rho_{AB})$ can be simplified as follows.
Similar to Theorem 1 in \cite{ZHC}, 
the calculation of 
$E^*(\rho_{AB})$ and
$E^*_\alpha(\rho_{AB})$
are reduced to coherence measures.

\begin{lemma}\Label{L1}
The relations
\begin{align}
E^* \Big(\sum_{i,j}\theta_{i,j}
|u_{j,A}\rangle|u_{j,B}\rangle \langle u_{i,A}|\langle u_{i,B}|\Big)
=& C^*\Big(\sum_{i,j}\theta_{i,j}|j\rangle \langle i|\Big) \Label{L1A}\\
E_\alpha^*\Big(\sum_{i,j}\theta_{i,j}
|u_{j,A}\rangle|u_{j,B}\rangle \langle u_{i,A}|\langle u_{i,B}|\Big)
=& C_\alpha^*\Big(\sum_{i,j}\theta_{i,j}|j\rangle \langle i|\Big) 
\Label{L1B}
\end{align}
hold.
\end{lemma}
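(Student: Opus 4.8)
The plan is to reduce the entanglement-measure minimizations over $\mathcal{S}_{AB}$ to the coherence-measure minimizations over $\mathcal{I}_c$ by exploiting the block structure of a maximally correlated state. Fix the isometry $V:\mathcal{H}\to\mathcal{H}_A\otimes\mathcal{H}_B$ sending $|j\rangle\mapsto|u_{j,A}\rangle|u_{j,B}\rangle$, and write $\rho_{AB}=V\rho V^\dagger$ where $\rho=\sum_{i,j}\theta_{i,j}|j\rangle\langle i|$ is the associated single-system state. The key structural fact I would establish first is that the projection $P:=VV^\dagger$ onto the maximally-correlated subspace commutes with any functional calculus applied to $\rho_{AB}$ (since $\rho_{AB}$ is supported on $\mathrm{im}\,P$), and that both $D$ and $D_\alpha$ evaluated at $(\sigma_{AB}\|\rho_{AB})$ are $+\infty$ unless $\mathrm{supp}\,\sigma_{AB}\subseteq\mathrm{supp}\,\rho_{AB}\subseteq\mathrm{im}\,P$; hence the minimizations may be restricted to separable states of the form $\sigma_{AB}=V\sigma V^\dagger$ with $\sigma$ a state on $\mathcal{H}$, provided $V\sigma V^\dagger$ is separable.

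The two halves of the argument then go as follows. For the ``$\le$'' direction I would take the optimal incoherent state $\sigma^*\in\mathcal{I}_c$ for the coherence quantity on the right-hand side; since $\sigma^*=\sum_i p_i|i\rangle\langle i|$, the state $V\sigma^* V^\dagger=\sum_i p_i|u_{i,A}\rangle|u_{i,B}\rangle\langle u_{i,A}|\langle u_{i,B}|$ is manifestly separable (a convex mixture of product states), and because $V$ is an isometry intertwining $\rho$ and $\rho_{AB}$ we get $D(V\sigma^* V^\dagger\|\rho_{AB})=D(\sigma^*\|\rho)$ and likewise for $D_\alpha$ (isometric invariance of both divergences), so the left-hand side is at most the right-hand side. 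For the ``$\ge$'' direction I would take an optimal separable $\sigma_{AB}^*$ for the left-hand side; as noted it must be supported on $\mathrm{im}\,P$, so $\sigma_{AB}^*=V\sigma V^\dagger$ for some state $\sigma$ on $\mathcal{H}$. The crucial step is to show we may further replace $\sigma$ by its diagonal part (or rather $\sigma_{AB}^*$ by the pinching onto the blocks) without increasing the divergence: the pinching map $\mathcal{P}(X)=\sum_j |u_{j,A}\rangle|u_{j,B}\rangle\langle u_{j,A}|\langle u_{j,B}| X |u_{j,A}\rangle|u_{j,B}\rangle\langle u_{j,A}|\langle u_{j,B}|$ preserves $\mathcal{S}_{AB}$ (it is a local measure-and-prepare channel, hence separability-preserving) and fixes $\rho_{AB}$ (maximal correlation), so monotonicity of $D$ and $D_\alpha$ under this channel gives $D(\mathcal{P}(\sigma_{AB}^*)\|\rho_{AB})\le D(\sigma_{AB}^*\|\rho_{AB})$, and $\mathcal{P}(\sigma_{AB}^*)$ corresponds under $V$ to an incoherent state on $\mathcal{H}$, again with equal divergence. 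Combining the two inequalities yields the claimed equalities.

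The main obstacle I anticipate is the ``$\ge$'' direction, specifically the justification that the optimal separable state can be taken supported inside the correlated subspace and then pinched: one must be careful that the pinching channel $\mathcal{P}$ genuinely maps $\mathcal{S}_{AB}$ into $\mathcal{I}_c$ (after identifying via $V$) and not merely into some larger set, and that the $\alpha<1$ case of $D_\alpha$ still enjoys the needed data-processing inequality under $\mathcal{P}$ — the sandwiched R\'enyi divergence $D_\alpha$ is monotone under CPTP maps for $\alpha\ge 1/2$, which covers the regime of interest here but should be flagged. A secondary technical point is handling the support/finiteness bookkeeping cleanly (both sides are simultaneously finite or infinite), for which the structural observation about $\mathrm{supp}\,\rho_{AB}\subseteq\mathrm{im}\,P$ does all the work. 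Everything else is the routine isometric invariance of $D$ and $D_\alpha$ under $V$.
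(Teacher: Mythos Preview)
Your ``$\le$'' direction matches the paper. The ``$\ge$'' direction, however, has two genuine gaps.

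First, the pinching map $\mathcal{P}$ you write down does \emph{not} fix $\rho_{AB}$. A maximally correlated state is supported on the span of $\{|u_{j,A}\rangle|u_{j,B}\rangle\}_j$ but is not diagonal in that basis: it carries the off-diagonal entries $\theta_{i,j}$ for $i\neq j$. Your rank-one pinching $\mathcal{P}$ sends $\rho_{AB}$ to its diagonal part, so data processing only yields $D_\alpha(\mathcal{P}(\sigma_{AB}^*)\|\mathcal{P}(\rho_{AB}))\le D_\alpha(\sigma_{AB}^*\|\rho_{AB})$ with the wrong second argument on the left. This breaks the argument for every $\alpha$.

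Second, for $\alpha\in(0,1)$ the support restriction you invoke is false: the sandwiched divergence $D_\alpha(\sigma\|\rho)$ is finite even when $\mathrm{supp}\,\sigma\not\subseteq\mathrm{supp}\,\rho$, so you cannot restrict the minimization to separable states inside $\mathrm{im}\,P$ on those grounds. (For $\alpha\ge 1$ the restriction is legitimate, and in fact once you have it the result follows directly---the only pure product states in $\mathrm{im}\,P$ are the basis vectors $|u_{j,A}\rangle|u_{j,B}\rangle$, so any separable state supported there already corresponds to an incoherent state---without any pinching.)

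The paper's route for $\alpha\in(0,1]$ sidesteps both issues: it uses the operator inequality $\sigma_{AB}\le\sigma_A\otimes I_B$ (valid for any separable $\sigma_{AB}$) together with operator monotonicity of $x\mapsto x^\alpha$ to get $D_\alpha(\sigma_{AB}\|\rho_{AB})\ge D_\alpha(\sigma_A\otimes I_B\|\rho_{AB})$, and then applies the \emph{coarse} pinching $\Lambda^*(X)=PXP+(I-P)X(I-P)$, which does fix $\rho_{AB}$. The key computation is that $P(\sigma_A\otimes I_B)P$ is automatically diagonal in the correlated basis, so no fine pinching is ever needed.
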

Lemma \ref{L1} is shown in Appendix \ref{AP2}.
Since a pure state is a maximally correlated state,
the calculation \eqref{BMT} can be recovered by 
Lemma \ref{L1} and \eqref{BMT2}.
Also, $E^* \Big(\sum_{i,j}\theta_{i,j}
|u_{j,A}\rangle|u_{j,B}\rangle \langle u_{i,A}|\langle u_{i,B}|\Big)
$ can be calculated by Lemma \ref{L2} as
\begin{align}
E^* \Big(\sum_{i,j}\theta_{i,j}
|u_{j,A}\rangle|u_{j,B}\rangle \langle u_{i,A}|\langle u_{i,B}|\Big)
=D(I_{c}(\rho_\theta)\| \rho_\theta),
\end{align}
where $\rho_\theta:=\sum_{i,j}\theta_{i,j}|j\rangle \langle i|$.
The combination of Lemmas \ref{L4} and \ref{L1} guarantees 
that any maximally entangled state $\rho_{AB}$ satisfies
\begin{align}
E^*(\rho_{AB})= 
\overline{E}^*(\rho_{AB})=
\tilde{E}^*(\rho_{AB}), ~
E_\alpha^*(\rho_{AB})= 
\overline{E}_\alpha^*(\rho_{AB}).
\Label{1NMI}
\end{align}
The iid setting and the non-iid setting have the same bounds.

To find another example, 
we set $\cH_A$ and $\cH_B$ to be $d$-dimensional spaces,
and define the maximally entangled state
$|\Phi\rangle := \frac{1}{\sqrt{d}}\sum_{j=0}^{d-1}|j\rangle |j\rangle$.
We define the state
$\rho_{AB}(p):= \sum_{j=0}^{d-1}p_j Z^j \otimes I|\Phi\rangle \langle \Phi| Z^{-j}\otimes I$.
The combination of \eqref{NRUT}, \eqref{NRU2T},
Lemma \ref{L1} implies 
\begin{align}
E^*(\rho_{AB}(p))
&= D( p_{\mix}\|p)=  -\frac{1}{d}\sum_{j=0}^d\log p_j -\log d \Label{NRU} \\
E_\alpha^*(\rho_{AB}(p))
&= D_\alpha( p_{\mix}\|p)\nonumber \\
&= \frac{1}{\alpha-1}\log \sum_{j=0}^{d-1}p_j^{1-\alpha}-
\frac{\alpha}{\alpha-1}\log d\Label{NRU2}
\end{align}
for $\alpha \in (0,1)\cup (1,2]$.
However,
it is not easy to calculate these quantities for general states.
Therefore, it is needed to develop an efficient method for their calculation in the general case.

\section{Algorithms}\Label{S6}
\subsection{Algorithm for $E^*(\rho_{AB})$}
Although the previous section presents the exact values
of the proposed measures of maximally correlated states,
it is not so easy to calculate the proposed measures of a general density matrix.
To calculate the proposed quantities for general states, 
we introduce algorithms for their calculation that can be applied to 
general states.
Since the calculation of $E^*(\rho_{AB})$ is hard,
we modify $E^*(\rho_{AB})$ as follows.
We choose a finite set ${\cal D}$ of pure states on ${\cal H}_B$
as a discretization.
That is, we denote ${\cal D}$
as $\{ \rho_{B|c}^*\}_{c \in {\cal X}}$.
Then, we define
\begin{align}
E_{{\cal D}}^*(\rho_{AB}):=
\min_{ P(c),\rho_{A|c}} 
D\Big( \sum_{c}  P(c) \rho_{A|c}\otimes \rho_{B|c}^*
\Big\|\rho_{AB}\Big) \Label{NMP4}.
\end{align}

We define $\epsilon_2({\cal D})$ as
\begin{align}
\epsilon_2({\cal D}):=
2\sqrt{1- 
\min_{\psi_1} \max_{\psi_2\in {\cal D}}
|\langle \psi_1|\psi_2\rangle|^2}.\Label{VCY}
\end{align}
We have
\begin{align}
& \max_{P'(c), \rho_{A|c}', \rho_{B|c}'}
\min_{P(c), \rho_{A|c}}
\Big\|\sum_{c}  P(c) \rho_{A|c}\otimes \rho_{B|c}^*
\nonumber \\
&-\sum_{c}  P'(c) \rho_{A|c}'\otimes \rho_{B|c}' \Big\|_1 \nonumber \\
\le &
\max_{\psi_1} \min_{\psi_2\in {\cal D}}
\| 
 |\psi_1 \rangle \langle \psi_1|
- |\psi_2 \rangle \langle \psi_2| \|_1\nonumber \\
=&2\sqrt{1-\min_{\psi_1} \max_{\psi_2\in {\cal D}}
|\langle \psi_1|\psi_2\rangle|^2}=\epsilon_2({\cal D}).\Label{NNU}
\end{align}
Therefore, 
Fannes inequality implies 
\begin{align}
& \max_{P'(c), \rho_{A|c}', \rho_{B|c}'}
\min_{P(c), \rho_{A|c}}
\Big| D\Big( \sum_{x}  P(x) \rho_{A|x}\otimes \rho_{B|x}^*
\Big\|\rho_{AB}\Big)
\nonumber \\
&\hspace{20ex}-
D\Big( \sum_{x}  P'(x) \rho_{A|x}'\otimes \rho_{B|x}'
\Big\|\rho_{AB}\Big)
\Big|\nonumber \\
\le & 
\epsilon_2({\cal D})( \| \log \rho_{AB} \| +\log d_A d_B)
+\eta_0 (\epsilon_2({\cal D})),
\end{align}
where
$\eta_0$ is defined as
\begin{align}
\eta_0(x):=
\left\{
\begin{array}{ll}
-x \log x & \hbox{when } 0 \le x  \le 1/e \\
1/e & \hbox{otherwise.}
\end{array}
\right.
\end{align}
Therefore, we have
\begin{align}
& E_{{\cal D}}^*(\rho_{AB})\nonumber \\
&-\Big(\epsilon_2({\cal D})( \| \log \rho_{AB} \| +\log d_A d_B)
+\eta_0 (\epsilon_2({\cal D})) \Big)\nonumber \\
\le & E_{}^*(\rho_{AB})  \le E_{{\cal D}}^*(\rho_{AB}).\Label{NGR}
\end{align}

We rewrite $P(c)\rho_{A|c} $ to $X_c$.
Using the idea of the algorithm by \cite{RISB}, 
we propose an iterative algorithm to calculate 
$E_{{\cal D}}^*(\rho_{AB})$, in which 
matrices $X_{A|c}$ is updated as follows.
Given the set $\{X_c^{(t)}\}_c$, we define 
$V_c[\{X_{A|c'}^{(t)}\}_{c'}]$ and $\kappa[\{X_{A|c}^{(t)}\}_c]$
as
\begin{align}
&V_c[\{X_{A|c'}^{(t)}\}_{c'}]
\nonumber \\
:=&\Tr_B \Big(
\log \Big(\sum_{c' \in {\cal C}} X_{A|c'}^{(t)} \otimes \rho_{B|c'}^*\Big)  
\nonumber \\
&\hspace{20ex} -\log \rho_{AB}\Big)
(I_A \otimes \rho_{B|c}^*)
\Big), \\
&\kappa[\{X_{A|c}^{(t)}\}_c]
:=
\sum_c\Tr \exp(X_{A|c}^{(t)}-V_c[\{X_{A|c'}^{(t)}\}_{c'}]).
\end{align}
Then, we choose the updated matrix ${X}_{A|c}^{(t+1)}$
\begin{align} 
{X}_{A|c}^{(t+1)}
:=&
\frac{1}{\kappa[\{X_{A|c}^{(t)}\}_c]}
\exp(X_{A|c}^{(t)}-V_c[\{X_{A|c'}^{(t)}\}_{c'}]).
\end{align}
This algorithm is summarized as Algorithm \ref{AL2}.

\begin{figure}
\begin{algorithm}[H]
\caption{Minimization of $D\Big( \sum_{c}  P(c) \rho_{A|c}\otimes \rho_{B|c}^*
\Big\|\rho_{AB}\Big)$}
\Label{AL2}
\begin{algorithmic}[1]
\State{Choose the initial value $\{X_{A|c}^{(t)}\}_c$;} 
\Repeat
\State{Calculate ${X}_{A|c}^{(t+1)}
:=
\frac{1}{\kappa[\{X_{A|c}^{(t)}\}_c]}
\exp(X_{A|c}^{(t)}-V_c[\{X_{A|c'}^{(t)}\}_{c'}])$;}
\Until{convergence of 
$D\Big( \sum_{c}  X_{A|c}^{(t)}\otimes \rho_{B|c}^*
\Big\|\rho_{AB}\Big)$.} 
\end{algorithmic}
\end{algorithm}
\end{figure}
Then, we have the following theorem.

\begin{theorem}\Label{main}
We choose ${X}_{A|c}^{(1)}$ such that
$\sum_{c \in {\cal C}} |c\rangle \langle c| \otimes {X}_{A|c}^{(1)}$ 
is the completely mixed state $\rho_{AC,\mix}$.
Then, we have
\begin{align}
D\Big( \sum_{c}  X_{A|c}^{(t+1)}\otimes \rho_{B|c}^*
\Big\|\rho_{AB}\Big)
-E_{{\cal D}}^*(\rho_{AB})
\le 
\frac{ \log d_A |{\cal D}|}{t} \Label{XMN3}.
\end{align}
\end{theorem}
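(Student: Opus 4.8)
The plan is to recognize Algorithm~\ref{AL2} as an entropic mirror-descent iteration --- the quantum Arimoto--Blahut scheme of \cite{RISB} --- on the joint classical-quantum register $\mathcal{H}_C\otimes\mathcal{H}_A$ with $\dim\mathcal{H}_C=|\mathcal{D}|$, and then to run the standard $O(1/t)$ telescoping argument. Writing $X_c:=P(c)\rho_{A|c}$, the feasible set $\{\{X_c\}_c:X_c\ge 0,\ \sum_c\Tr X_c=1\}$ is affinely identified with the density matrices $\tilde X:=\sum_c|c\rangle\langle c|\otimes X_c$, and the objective is $F(\{X_c\}):=D(\sigma[\{X_c\}]\|\rho_{AB})$ with $\sigma[\{X_c\}]:=\sum_c X_c\otimes\rho_{B|c}^*$, so $E_{\mathcal{D}}^*(\rho_{AB})=\min F$. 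Since $\{X_c\}\mapsto\sigma[\{X_c\}]$ is affine and $\sigma\mapsto D(\sigma\|\rho_{AB})$ convex, $F$ is convex, and differentiating $\Tr\sigma\log\sigma$ (the residual trace term vanishes on feasible, hence traceless, directions) identifies $V_c[\{X_c\}]$ with the partial gradient $\partial F/\partial X_c$. A Lagrange computation for $\min_{\{X\}}\{\sum_c\Tr X_cV_c[\{X^{(t)}\}]+D(\{X\}\|\{X^{(t)}\})\}$, where $D(\{X\}\|\{Y\}):=D(\tilde X\|\tilde Y)=\sum_c\Tr X_c(\log X_c-\log Y_c)$ is the Bregman divergence of the matrix negative entropy, then gives that its (global, by convexity) minimizer is $X_c^{(t+1)}\propto\exp(\log X_c^{(t)}-V_c[\{X^{(t)}\}])$, i.e.\ the update of Algorithm~\ref{AL2}; the non-commutativity of $\log X_c^{(t)}$ and $V_c[\{X^{(t)}\}]$ causes no trouble, the exponential being simply that of their sum.

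The structural fact I expect to be the crux is a ``relative smoothness with constant one'': for feasible $\{X\},\{Y\}$,
\begin{align}
F(\{Y\})-F(\{X\})-\sum_c\Tr(Y_c-X_c)V_c[\{X\}]=D\bigl(\sigma[\{Y\}]\big\|\sigma[\{X\}]\bigr)\le D(\{Y\}\|\{X\}).
\end{align}
The equality is a one-line rearrangement of the definitions; the inequality is the monotonicity of relative entropy under partial trace: the block-diagonal states $\sum_c|c\rangle\langle c|\otimes Y_c\otimes\rho_{B|c}^*$ and $\sum_c|c\rangle\langle c|\otimes X_c\otimes\rho_{B|c}^*$ have relative entropy $D(\{Y\}\|\{X\})$ and map to $\sigma[\{Y\}]$, $\sigma[\{X\}]$ under the partial trace over $\mathcal{H}_C$. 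Plain convexity, $F(\{Y\})\ge F(\{X\})+\sum_c\Tr(Y_c-X_c)V_c[\{X\}]$, is the same identity together with $D(\sigma[\{Y\}]\|\sigma[\{X\}])\ge0$. This is the only genuine inequality in the proof and the source of the clean constant $\log d_A|\mathcal{D}|$ (with no smoothness parameter); everything after it is bookkeeping in the entropic Bregman geometry.

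Combining relative smoothness (at $X=X^{(t)},\,Y=X^{(t+1)}$), convexity (at $X=X^{(t)}$ with reference $Y=Z$), and the three-point identity
\begin{align}
D(\{Z\}\|\{X^{(t)}\})-D(\{Z\}\|\{X^{(t+1)}\})=D(\{X^{(t+1)}\}\|\{X^{(t)}\})+\sum_c\Tr(X_c^{(t+1)}-Z_c)V_c[\{X^{(t)}\}]
\end{align}
(from the optimality characterization of the mirror step, the Lagrange term being annihilated because $\sum_c\Tr(X_c^{(t+1)}-Z_c)=0$), one obtains the one-step bound $F(\{X^{(t+1)}\})-F(\{Z\})\le D(\{Z\}\|\{X^{(t)}\})-D(\{Z\}\|\{X^{(t+1)}\})$ for every feasible $Z$. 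Taking $Z=X^{(t)}$ yields the monotone decrease $F(\{X^{(t+1)}\})\le F(\{X^{(t)}\})$ via $D(\{X^{(t)}\}\|\{X^{(t+1)}\})\ge0$; taking $Z$ to be a minimizer $X^*$ yields $F(\{X^{(t+1)}\})-E_{\mathcal{D}}^*(\rho_{AB})\le D(\{X^*\}\|\{X^{(t)}\})-D(\{X^*\}\|\{X^{(t+1)}\})$.

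Finally I would telescope this from $s=1$ to $t$, getting $\sum_{s=1}^{t}\bigl(F(\{X^{(s+1)}\})-E_{\mathcal{D}}^*(\rho_{AB})\bigr)\le D(\{X^*\}\|\{X^{(1)}\})$. With the prescribed initialization $\tilde X^{(1)}=\rho_{AC,\mix}=\frac{1}{d_A|\mathcal{D}|}I$ one has $D(\{X^*\}\|\{X^{(1)}\})=D\bigl(\tilde X^*\big\|\frac{1}{d_A|\mathcal{D}|}I\bigr)=\Tr\tilde X^*\log\tilde X^*+\log(d_A|\mathcal{D}|)\le\log(d_A|\mathcal{D}|)$, the last inequality because $\tilde X^*$ is a density matrix. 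Since $F(\{X^{(s)}\})$ is non-increasing, $t\bigl(F(\{X^{(t+1)}\})-E_{\mathcal{D}}^*(\rho_{AB})\bigr)\le\sum_{s=1}^{t}\bigl(F(\{X^{(s+1)}\})-E_{\mathcal{D}}^*(\rho_{AB})\bigr)\le\log(d_A|\mathcal{D}|)$, which is \eqref{XMN3}. The remaining points are routine: the statement is vacuous when $E_{\mathcal{D}}^*(\rho_{AB})=\infty$, and when $\rho_{AB}$ or the iterates are singular one restricts all spectral calculus to $\mathrm{supp}\,\rho_{AB}$ and passes to the limit, as in \cite{RISB}.
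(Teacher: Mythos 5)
Your proposal is correct and follows essentially the same route as the paper: the paper invokes \cite[Theorem 3.3]{RISB} as a black box and verifies its two hypotheses, which are exactly your two key facts --- the ``relative smoothness with constant one'' inequality $D(\sigma[\{Y\}]\|\sigma[\{X\}])\le D(\{Y\}\|\{X\})$ proved by monotonicity of relative entropy under tracing out the classical register (the paper's condition \eqref{BK1+} with $\gamma=1$), and nonnegativity of the Bregman remainder (condition \eqref{BK2+}). The only difference is that you re-derive the $O(1/t)$ convergence bound of the cited theorem via the standard three-point/telescoping argument rather than citing it, and the initialization bound $D(\{X^*\}\|\{X^{(1)}\})\le\log(d_A|\mathcal{D}|)$ is handled identically.
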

The above theorem is shown in Appendix \ref{AP3} by 
using \cite[Theorem 3.3]{RISB}.

\subsection{Algorithm for $E_\alpha^*(\rho_{AB})$}
Next, we consider the algorithm for $E_\alpha^*(\rho_{AB})$.
Similarly, instead of $E_\alpha^*(\rho_{AB})$, we calculate
\begin{align}
E_{\alpha,{\cal D}}^*(\rho_{AB}):=
\min_{ X_{A|c}} 
D_\alpha\Big( \sum_{c}  X_{A|c}\otimes \rho_{B|c}^*
\Big\|\rho_{AB}\Big). \Label{NMT4}
\end{align}

The following algorithm minimizes
$\sgn (\alpha-1)\Tr \Big( \rho_{AB}^{\frac{1-\alpha}{2\alpha}}
\Big(\sum_{c' \in {\cal C}} X_{A|c'} \otimes \rho_{B|c'}^*\Big)  
\rho_{AB}^{\frac{1-\alpha}{2\alpha}}\Big)^{\alpha}$.
For this aim,
we iteratively choose matrices
$X_{A|c}$ as follows.
Given the set $\{X_c^{(t)}\}_c$, we define 
$V_{\alpha,c}[\{X_{A|c'}^{(t)}\}_{c'}]$ and $\kappa_\alpha[\{X_{A|c}^{(t)}\}_c]$
as
\begin{align}
&V_{\alpha,c}[\{X_{A|c'}^{(t)}\}_{c'}]
\nonumber \\
:=&\Tr_B \Big(
\rho_{AB}^{\frac{1-\alpha}{2\alpha}}
\Big( \rho_{AB}^{\frac{1-\alpha}{2\alpha}}
\Big(\sum_{c' \in {\cal C}} X_{A|c'}^{(t)} \otimes \rho_{B|c'}^*\Big)  
\rho_{AB}^{\frac{1-\alpha}{2\alpha}}\Big)^{\alpha-1}
\nonumber \\
&\cdot \rho_{AB}^{\frac{1-\alpha}{2\alpha}}\Big)
(I_A \otimes \rho_{B|c}^*),
 \\
&\kappa_\alpha[\{X_{A|c}^{(t)}\}_c]\nonumber \\
:=&
\sum_c\Tr \exp(X_{A|c}^{(t)}-
\frac{\sgn (\alpha-1)}{\gamma}
V_{\alpha,c}[\{X_{A|c'}^{(t)}\}_{c'}]).
\end{align}
Then, we choose the updated matrix ${X}_{A|c}^{(t+1)}$
\begin{align} 
&{X}_{A|c}^{(t+1)}\nonumber \\
:=&
\frac{1}{\kappa[\{X_{A|c}^{(t)}\}_c]}
\exp(X_{A|c}^{(t)}
-\frac{\sgn (\alpha-1)}{\gamma}
 V_{\alpha,c}[\{X_{A|c'}^{(t)}\}_{c'}]).
\end{align}

\begin{figure}
\begin{algorithm}[H]
\caption{Minimization of 
$\sgn(\alpha-1) e^{(\alpha-1)D_\alpha\Big( \sum_{c}  X_{A|c}\otimes \rho_{B|c}^*
\Big\|\rho_{AB}\Big)}$}
\Label{AL5}
\begin{algorithmic}[1]
\State{Choose the initial value $\{X_{A|c}^{(t)}\}_c$;} 
\Repeat  
\State{Calculate ${X}_{A|c}^{(t+1)}
:=
\frac{1}{\kappa[\{X_{A|c}^{(t)}\}_c]}
\exp(X_{A|c}^{(t)}
-\frac{\sgn (\alpha-1)}{\gamma}
 V_{\alpha,c}[\{X_{A|c'}^{(t)}\}_{c'}])$;}
\Until{convergence of 
$D_\alpha\Big( \sum_{c}  X_{A|c}^{(t)}\otimes \rho_{B|c}^*
\Big\|\rho_{AB}\Big)$.} 
\end{algorithmic}
\end{algorithm}
\end{figure}

Then, we have the following theorem.
\begin{theorem}\Label{mainB}
We choose ${X}_{A|c}^{(1)}$ as 
$\sum_{c \in {\cal C}} |c\rangle \langle c| \otimes {X}_{A|c}^{(1)}$ 
as the completely mixed state $\rho_{AC,\mix}$.
For $\alpha \in (0,1)$, and a sufficiently large real number $\gamma>0$, we have
\begin{align}
&\sgn (\alpha-1)\Tr \Big( \rho_{AB}^{\frac{1-\alpha}{2\alpha}}
\Big(\sum_{c' \in {\cal C}} X_{A|c'}^{(t+1)} \otimes \rho_{B|c'}^*\Big)  
\rho_{AB}^{\frac{1-\alpha}{2\alpha}}\Big)^{\alpha}
\nonumber \\
&-
\sgn(\alpha-1) e^{(\alpha-1)E_{\alpha,{\cal D}}(\rho_{AB})}
\le 
\frac{ \gamma \log d_A |{\cal D}|}{t} \Label{XMN3H}.
\end{align}
\end{theorem}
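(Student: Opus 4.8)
The plan is to mirror the proof of Theorem~\ref{main}: recast the minimization \eqref{NMT4} as an instance of the convex optimization solved by the quantum Arimoto--Blahut algorithm of \cite{RISB}, and then invoke \cite[Theorem 3.3]{RISB} to get the $O(1/t)$ bound \eqref{XMN3H}. First I would reduce the $D_\alpha$-minimization to the minimization of the quantity that Algorithm~\ref{AL5} actually updates. For $\alpha\in(0,1)$ we have $\sgn(\alpha-1)=-1$, $e^{(\alpha-1)D_\alpha(\sigma\|\rho_{AB})}=\Tr(\rho_{AB}^{\frac{1-\alpha}{2\alpha}}\sigma\rho_{AB}^{\frac{1-\alpha}{2\alpha}})^\alpha$, and $t\mapsto\frac{1}{\alpha-1}\log t$ is strictly decreasing, so minimizing $D_\alpha(\sum_c X_{A|c}\otimes\rho_{B|c}^*\|\rho_{AB})$ over the feasible set $\{X_{A|c}\ge0,\ \sum_c\Tr X_{A|c}=1\}$ is equivalent to minimizing $f(\{X_{A|c}\}):=-\Tr(\rho_{AB}^{\frac{1-\alpha}{2\alpha}}(\sum_c X_{A|c}\otimes\rho_{B|c}^*)\rho_{AB}^{\frac{1-\alpha}{2\alpha}})^\alpha$, which is exactly the objective of Algorithm~\ref{AL5}. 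Identifying the feasible set with block-diagonal states $\sum_c|c\rangle\langle c|\otimes X_{A|c}$ on $\cH_A\otimes\cH_C$ with $\dim\cH_C=|{\cal D}|$, this is a convex program: since $t\mapsto t^\alpha$ is operator concave for $\alpha\in(0,1)$, the map $\sigma\mapsto\Tr(\rho_{AB}^{\frac{1-\alpha}{2\alpha}}\sigma\rho_{AB}^{\frac{1-\alpha}{2\alpha}})^\alpha$ is concave, hence $f$ is convex on the compact convex feasible set.

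Next I would match Algorithm~\ref{AL5} to the exponential-family (mirror-descent) iteration of \cite{RISB}. Differentiating $\Tr(\rho_{AB}^{c}\sigma\rho_{AB}^{c})^\alpha$ with $c=\frac{1-\alpha}{2\alpha}$ gives the gradient $\alpha\,\rho_{AB}^{c}(\rho_{AB}^{c}\sigma\rho_{AB}^{c})^{\alpha-1}\rho_{AB}^{c}$ with respect to $\sigma$, whose $c$-block $\Tr_B[(\cdot)(I_A\otimes\rho_{B|c}^*)]$ is, up to the scalar $\alpha$ absorbed into $\gamma$, exactly $V_{\alpha,c}[\{X_{A|c'}\}_{c'}]$; hence the update $X_{A|c}^{(t+1)}\propto\exp(X_{A|c}^{(t)}-\frac{\sgn(\alpha-1)}{\gamma}V_{\alpha,c})$ is precisely the quantum Arimoto--Blahut step of \cite{RISB} with step parameter $1/\gamma$, $\kappa_\alpha$ being the normalization. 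The initialization $\sum_c|c\rangle\langle c|\otimes X_{A|c}^{(1)}=\rho_{AC,\mix}$ makes the relative-entropy radius of the feasible set equal to $\max_\sigma D(\sigma\|\rho_{AC,\mix})\le\log(d_A|{\cal D}|)$. Invoking \cite[Theorem 3.3]{RISB} (as in Appendix~\ref{AP3} for Theorem~\ref{main}) then yields \eqref{XMN3H}, with $\gamma$ entering the numerator as the step size and $\log(d_A|{\cal D}|)$ as the Bregman-radius constant.

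The step I expect to be the main obstacle is proving that ``$\gamma$ sufficiently large'' is exactly the hypothesis that makes the descent lemma underlying \cite[Theorem 3.3]{RISB} applicable: one must bound the non-commutative second-order remainder of $\sigma\mapsto-\Tr(\rho_{AB}^{c}\sigma\rho_{AB}^{c})^\alpha$ by the relative entropy used in the mirror step, uniformly over the feasible set, and then take any $\gamma$ exceeding that smoothness constant. This is where the operator-convexity estimates tied to the exponent $\alpha$ (and the extra care needed when $\rho_{AB}$ sits in the outer slots, cf.\ the remark after Lemma~\ref{L4}) enter. A secondary technicality is that $(\rho_{AB}^{c}\sigma\rho_{AB}^{c})^{\alpha-1}$ carries a negative power, so all computations should be carried out on the support of $\rho_{AB}$ (or with generalized inverses), which is the natural setting of \cite{RISB} and of the definition of $D_\alpha$. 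Once the smoothness bound is in hand, the $1/t$ rate and the explicit constant $\gamma\log(d_A|{\cal D}|)$ follow exactly as for Theorem~\ref{main}.
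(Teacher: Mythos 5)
Your plan follows the paper's proof of Theorem~\ref{mainB} essentially verbatim: the paper also encodes the feasible set as block-diagonal states $\sum_c|c\rangle\langle c|\otimes X_{A|c}$ on $\cH_A\otimes\cH_C$, defines an $\Omega_{\alpha,{\cal D}}$ whose $c$-block is exactly your $V_{\alpha,c}$ (normalized so that $\Tr\rho_{AC}\Omega_{\alpha,{\cal D}}[\rho_{AC}]$ equals the objective, which is why the factor $\alpha$ from the gradient is dropped rather than ``absorbed into $\gamma$''), and then invokes \cite[Theorem 3.3]{RISB} with $D(\rho_*\|\rho_{AC,\mix})\le\log(d_A|{\cal D}|)$.

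Two points where your outline is thinner than the paper's argument. First, \cite[Theorem 3.3]{RISB} has \emph{two} hypotheses, and you only discuss the smoothness condition \eqref{BK1+}; the first-order condition \eqref{BK2+}, $D_{\Omega_{\alpha,{\cal D}}}(\rho_*\|\rho)\ge 0$, does not follow from convexity alone because $\Omega_{\alpha,{\cal D}}$ is $1/\alpha$ times the gradient of a degree-$\alpha$ (not degree-$1$) homogeneous functional. The paper verifies it by a direct computation, writing $D_{\Omega_{\alpha,{\cal D}}}=\Tr Z_*Z^{\alpha-1}-\Tr Z_*^\alpha$ and bounding it below via the optimality $\Tr Z_*^\alpha\ge\Tr Z^\alpha$ together with a H\"older-type trace inequality $\Tr(Z_*^\alpha/\Tr Z_*^\alpha)^{1/\alpha}(Z^\alpha/\Tr Z^\alpha)^{(\alpha-1)/\alpha}\ge 1$; alternatively, your operator-concavity observation combined with the fact that $\rho_*$ is the global minimizer does yield \eqref{BK2+}, but that step must be made explicit. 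Second, for \eqref{BK1+} the paper does not prove a uniform smoothness bound over the feasible set as you propose; it argues locally that both $D_{\Omega_{\alpha,{\cal D}}}(\cdot\|\cdot)$ and $D(\cdot\|\cdot)$ are $O(\|Z'-Z\|^2)$ near the limit point (using that the first derivative of $\Tr Z^\alpha$ vanishes at the optimum) and that the iterates converge, so the ratio stays bounded and a sufficiently large $\gamma$ exists. Your uniform-smoothness route would be a stronger statement than what the paper establishes and would need additional care where $(\rho_{AB}^{c}\sigma\rho_{AB}^{c})^{\alpha-1}$ degenerates; otherwise the approach and the resulting constant $\gamma\log(d_A|{\cal D}|)/t$ coincide with the paper's.
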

The above theorem is shown in Appendix \ref{AP3} by 
using \cite[Theorem 3.3]{RISB}.

\section{Application of our algorithm}\Label{S7}
\subsection{Membership problem}\Label{S7-1}
The previous section presents algorithms to calculate 
the proposed measures.
The special case of deciding if the entanglement measure is zero can be used for testing separability in a mathematical sense, which is called 
the membership problem of the set ${\cal S}_{AB}$.
The aim of this section 
is the application of our algorithm to the membership problem of the set ${\cal S}_{AB}$.
That is, we employ the quantity $E^*(\rho_{AB})$ for the membership problem.
This measure will be used for constructing the separation hyperplane.
In the following, we consider how to distinguish the following two cases.
\begin{align}
H_0:& E^*(\rho_{AB})=0 \\
H_1:& E^*(\rho_{AB}) \ge \epsilon_1 .
\end{align}

Due to \eqref{NGR},
when $E_{}^*(\rho_{AB})=0$, we have
\begin{align}
E_{{\cal D}}^*(\rho_{AB})
\le 
\epsilon_2({\cal D})( \| \log \rho_{AB} \| +\log d_A d_B)
+\eta_0 (\epsilon_2({\cal D})).
\end{align}
Also, when $E_{}^*(\rho_{AB})> \epsilon$, we have
\begin{align}
E_{{\cal D}}^*(\rho_{AB}) >\epsilon.
\end{align}
Therefore, if the following two cases are distinguished,
$H_0$ and $H_1$ are distinguished.
\begin{align}
H_0': E_{{\cal D}}^*(\rho_{AB}) \le &
\epsilon_2({\cal D})( \| \log \rho_{AB} \| +\log d_A d_B)
\nonumber \\
&+\eta_0 (\epsilon_2({\cal D})) ,
\\
H_1': E_{{\cal D}}^*(\rho_{AB}) > & \epsilon_1 
\end{align}
under the assumption that 
$\epsilon_2({\cal D})( \| \log \rho_{AB} \| +\log d_A d_B)
+\eta_0 (\epsilon_2({\cal D})) \le \epsilon_1 $.

We choose the initial state according to the assumption of Theorem \ref{main}.
Then, when $H_0'$ holds, i.e., $E_{{\cal D}}^*(\rho_{AB}) \le
\epsilon_2({\cal D})( \| \log \rho_{AB} \| +\log d_A d_B)
+\eta_0 (\epsilon_2({\cal D})) $, we have
\begin{align}
& D\Big( \sum_{c}  X_{A|c}^{(t+1)}\otimes \rho_{B|c}^*
\Big\|\rho_{AB}\Big)
\nonumber \\
\le & \epsilon_2({\cal D})( \| \log \rho_{AB} \| +\log d_A d_B)
+\eta_0 (\epsilon_2({\cal D})) 
+\frac{ \log d_A |{\cal D}|}{t} .
\end{align}
Also, when $E_{{\cal D}}^*(\rho_{AB})> \epsilon$, we have
\begin{align}
D\Big( \sum_{c}  X_{A|c}^{(t+1)}\otimes \rho_{B|c}^*
\Big\|\rho_{AB}\Big)>\epsilon.
\end{align}
Therefore, if the following two cases are distinguished,
$H_0'$ and $H_1'$, i.e. 
$H_0$ and $H_1$ are distinguished.
\begin{align}
H_0'':&
D\Big( \sum_{c}  X_{A|c}^{(t+1)}\otimes \rho_{B|c}^*
\Big\|\rho_{AB}\Big) \nonumber \\
&\le
\epsilon_2({\cal D})( \| \log \rho_{AB} \| +\log d_A d_B)
+\eta_0 (\epsilon_2({\cal D})) \nonumber \\
&\quad +\frac{ \log d_A |{\cal D}|}{t} ,
\\
H_1'':& \Tr \rho_{AC}^{(t+1)}\Omega_F[\rho_{AC}^{(t+1)}]
> \epsilon_1 
\end{align}
under the assumption that 
$\epsilon_2({\cal D})( \| \log \rho_{AB} \| +\log d_A d_B)
+\eta_0 (\epsilon_2({\cal D})) 
+\frac{ \log d_A |{\cal D}|}{t} \le \epsilon_1 $.
That is, the number $t$ of iterations needs to satisfy
$t \ge \log d_A |{\cal D}|/
\Big(\epsilon_1 - 
(\epsilon_2({\cal D})( \| \log \rho_{AB} \| +\log d_A d_B)
+\eta_0 (\epsilon_2({\cal D})) )\Big)$.

Next, we evaluate the computation complexity to distinguish $H_0$ and $H_1$
under the above method
when we choose 
${\cal D}_{n,d_B}$ as ${\cal D}$ in the way as Appendix \ref{AP1}
and we fix $\epsilon_1$ and move $d_A,d_B$ with $d_A \ge d_B$.

In this case, we have
$|{\cal D}_{n,d_B}|\le
 (\frac{2 d_B^{1/2} \pi}{\epsilon_2})^{d_B-1}$ with
$\epsilon_2({\cal D}_{n,d_B})= \epsilon_2 $.
The above condition is rewritten as
\begin{align}
t \ge \frac{\log d_A + (d_B-1)\log \frac{2 d_B^{1/2} \pi}{\epsilon_2}
}
{\epsilon_1 - 
(\epsilon_2( \| \log \rho_{AB} \| +\log d_A d_B)
+\eta_0 (\epsilon_2 ) )}.
\end{align}
Since $\epsilon_1$ is fixed,
we need to choose $\epsilon_2$ to be 
$O(\frac{\epsilon_1}{\log d_Ad_B})$, which realizes
$\epsilon_1 - 
(\epsilon_2( \| \log \rho_{AB} \| +\log d_A d_B)
+\eta_0 (\epsilon_2 ) ) = O(\epsilon_1)$.
The above condition is rewritten as
\begin{align}
t \ge &
\frac
{\log d_A + (d_B-1)\log \frac{2 d_B^{1/2} \pi}{O(\frac{\epsilon_1}{\log d_Ad_B})}
}{O(\epsilon_1)}\nonumber \\
=&
O\Big(\frac
{\log d_A
+(d_B-1)\log \frac{d_B^{1/2} \log d_A d_B}{\epsilon_1}
}{\epsilon_1}\Big)\nonumber \\
=&
O\Big(\frac
{\log d_A
+d_B \log d_B + d_B (\log \log d_A -\log \epsilon_1)
}{\epsilon_1}\Big).
\end{align}

However, each step has various calculations. 
In each step, we need to calculate ${\cal F}[\rho_{AC}]$
from $\rho_{AC}$.
For this calculation,
we need to calculate 
$ \log \rho_{AC}$ and $\Omega_{R,D}[\rho_{AC}]$.
$\rho_{AC}$ is a block diagonal matrix.
That is, it is sufficient to calculate every component.
Every component has 
a matrix function on a $d_Ad_B$-dimensional space,
and a factor whose complexity is linear in 
$|{\cal D}_{n,d_B}| $.
But, the latter part is common for all $c$-components.
The former can be done by diagonalization.
The complexity of the diagonalization on $d$-dimensional matrix 
is $O(d^3)$ with error $\epsilon$ \cite{diag,diag2,diag3}.
Therefore, the calculation complexity of 
${\cal F}[\rho_{AC}]$ from $\rho_{AC}$ is $O(|{\cal D}_{n,d_B}| (d_A^3 d_B^3))
\le O((\frac{2 d_B^{1/2} \pi}{\epsilon_2})^{d_B-1} (d_A^3 d_B^3))
= O((\frac{d_B^{1/2} \log d_A d_B}{\epsilon_1})^{d_B-1} (d_A^3 d_B^3))
$.
The total complexity is 
\begin{align}
O\Big(& \frac
{\log d_A
+d_B \log d_B + d_B (\log \log d_A -\log \epsilon_1)
}{\epsilon_1}\nonumber \\
&\cdot
(\frac{d_B^{1/2} \log d_A d_B}{\epsilon_1})^{d_B-1} (d_A^3 d_B^3)\Big).
\end{align}

On the other hand, the method by \cite[Eq. (38)]{MOP} has the following calculation complexity
\begin{align}
O(d_A^6 \epsilon^{-4d_B}).
\end{align}
In the above evaluation, $\epsilon$ takes the same role as $\epsilon_1$.
For $d_B$, our bound has the factor $d_B^{(d_B-1)/2}$, which is larger than 
$\frac{1}{\epsilon^{4d_B}}$.
However, when $d_B$ is also fixed and only $d_A$ increases,
our bound is simplified to 
$ O(   \max(\log d_A ,-\log\epsilon_1) (\log d_A)^{d_B-1} d_A^3 \epsilon_1^{-d_B})$.
When $\epsilon_1$ and $\epsilon$ are fixed,
our bound is simplified to $O((\log d_A)^{d_B} d_A^3)$,
 and their bound is simplified to
$O(d_A^6)$. 
In this case, our bound is smaller than their bound.

\subsection{Separation hyperplane, i.e.,
entanglement witness}\Label{S7-2}
A linear function to determine a separation hyperplane is called an entanglement witness
in the context of entanglement theory
when the cone is the separable cone. 
Next, we discuss how to obtain a separation hyperplane
by using the above approximation and 
the geometrical characterization by Theorem \ref{TH33}.
This is because 
it is a key step for conic linear programming with a general cone
to find a separation hyperplane for an element that does not belong to the cone \cite{GLS,Goberna,Kalantari,Nueda}.
For this aim, we employ 
the minimizer of $E_{{\cal D}}^*(\rho_{AB})$, i.e., the state
\begin{align}
\sigma_{AB}^*({\cal D}):=
\argmin_{ \sigma_{AB} \in {\cal S}({\cal D})} 
D(\sigma_{AB}\|\rho_{AB}). \Label{NMP9}
\end{align}
Then, the following theorem means that 
the operator $(\log \sigma_{AB}^*({\cal D})- \log \rho_{AB})$ gives
a separation hyperplane
to distinguish $\rho_{AB} $ and $ {\cal S}$.
\begin{theorem}\Label{THBG}
(i) For any separable state $\sigma_{AB}\in {\cal S}$,
we have  
\begin{align}
&\Tr \sigma_{AB} (\log \sigma_{AB}^*({\cal D})- \log \rho_{AB})
\nonumber \\
\ge & D(\sigma_{AB}^*({\cal D}) \|\rho_{AB})
- \|\log \sigma_{AB}^*({\cal D})- \log \rho_{AB}\|
\epsilon_2 ({\cal D}).\Label{NM9}
\end{align}
(ii) Also, we have
\begin{align}
\Tr \rho_{AB} (\log \sigma_{AB}^*({\cal D})- \log \rho_{AB})
=- D(\rho_{AB}\| \sigma_{AB}^*({\cal D}) )<0.\Label{NM6H}
\end{align}
\end{theorem}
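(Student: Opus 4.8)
The plan is to mirror the proof of Theorem \ref{TH33} but to carry along an error term that accounts for the fact that $\sigma_{AB}^*({\cal D})$ is the minimizer over the restricted set ${\cal S}({\cal D})$ of separable states built from the finite net ${\cal D}$, rather than over all of ${\cal S}_{AB}$. Part (ii) is immediate and identical to part (ii) of Theorem \ref{TH33}: since $\sigma_{AB}^*({\cal D})$ is a state, expanding the trace gives $\Tr \rho_{AB}(\log \sigma_{AB}^*({\cal D}) - \log \rho_{AB}) = -\Tr \rho_{AB}(\log \rho_{AB} - \log \sigma_{AB}^*({\cal D})) = -D(\rho_{AB}\|\sigma_{AB}^*({\cal D}))$, and this is strictly negative because $\rho_{AB}\notin {\cal S}_{AB}$ forces $\rho_{AB}\neq \sigma_{AB}^*({\cal D})$, hence the relative entropy is positive (it is finite since $\sigma_{AB}^*({\cal D})$ realizes a finite value of $E_{{\cal D}}^*$, so its support is contained in that of $\rho_{AB}$).

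For part (i), the first step is to observe that within the restricted set ${\cal S}({\cal D})$ the minimizer $\sigma_{AB}^*({\cal D})$ enjoys the same variational inequality as in Theorem \ref{TH33}: for every $\sigma_{AB}\in {\cal S}({\cal D})$ one has $\Tr \sigma_{AB}(\log \sigma_{AB}^*({\cal D}) - \log \rho_{AB}) \ge D(\sigma_{AB}^*({\cal D})\|\rho_{AB})$. This is proved exactly as before: ${\cal S}({\cal D})$ is convex (it consists of states of the form $\sum_c P(c)\rho_{A|c}\otimes\rho_{B|c}^*$ with $\rho_{B|c}^*\in{\cal D}$ fixed and $P,\rho_{A|c}$ free), so if the inequality failed for some $\sigma_{AB}\in{\cal S}({\cal D})$, a small convex combination $\lambda\sigma_{AB}+(1-\lambda)\sigma_{AB}^*({\cal D})$ would lie strictly inside the divergence sphere ${\cal S}_D$ and give a smaller value of $D(\cdot\|\rho_{AB})$ while staying in ${\cal S}({\cal D})$, contradicting minimality. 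Equivalently, one invokes the Pythagorean relation \eqref{BVI} with $X = -(\log\sigma_{AB}^*({\cal D}) - \log\rho_{AB})$ and $\tilde\rho_0 = \sigma_{AB}^*({\cal D})$, which shows the hyperplane ${\cal M}(\sigma_{AB}^*({\cal D}),\rho_{AB})$ is the tangent plane to the divergence sphere at $\sigma_{AB}^*({\cal D})$.

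The second step is the approximation argument that passes from ${\cal S}({\cal D})$ to all of ${\cal S}_{AB}$. Given an arbitrary separable state $\sigma_{AB}\in{\cal S}_{AB}$, write it (using a product-state decomposition and discretizing the ${\cal H}_B$ factors against the net ${\cal D}$) and compare it in trace norm to its best approximant $\sigma_{AB}'\in{\cal S}({\cal D})$; by the net property feeding into \eqref{NNU} one has $\|\sigma_{AB} - \sigma_{AB}'\|_1 \le \epsilon_2({\cal D})$. Then
\begin{align}
\Tr \sigma_{AB}(\log\sigma_{AB}^*({\cal D}) - \log\rho_{AB})
&= \Tr \sigma_{AB}'(\log\sigma_{AB}^*({\cal D}) - \log\rho_{AB}) \notag\\
&\quad + \Tr(\sigma_{AB} - \sigma_{AB}')(\log\sigma_{AB}^*({\cal D}) - \log\rho_{AB}) \notag\\
&\ge D(\sigma_{AB}^*({\cal D})\|\rho_{AB}) - \|\log\sigma_{AB}^*({\cal D}) - \log\rho_{AB}\|\,\epsilon_2({\cal D}),
\end{align}
where the first term is bounded using step one and the second using $|\Tr X \Delta| \le \|X\|\,\|\Delta\|_1$. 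This gives \eqref{NM9}.

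The main obstacle I anticipate is the bookkeeping in step two: one must be careful that a general $\sigma_{AB}\in{\cal S}_{AB}$ admits a decomposition whose ${\cal H}_B$-marginals can be replaced, one pure state at a time, by elements of ${\cal D}$ while keeping the total trace-norm perturbation controlled by $\epsilon_2({\cal D})$ — this is precisely the content already packaged in \eqref{NNU}, so the proof should cite that inequality rather than re-derive it. A secondary subtlety is ensuring $\log\sigma_{AB}^*({\cal D})$ is well-defined (finite-norm), i.e. that the minimizer has full support or that the expression is read on the support of $\rho_{AB}$; this follows because $E_{{\cal D}}^*(\rho_{AB})$ is finite at the minimizer, so one may restrict all operators to $\mathrm{supp}\,\rho_{AB}$ throughout.
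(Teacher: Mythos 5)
Your proposal is correct and follows essentially the same route as the paper: part (ii) is immediate, part (i) first establishes the variational inequality \eqref{NM92} for states in the convex set ${\cal S}({\cal D})$ exactly as in Theorem \ref{TH33}, and then passes to all of ${\cal S}_{AB}$ by combining it with the trace-norm approximation \eqref{NNU} via $|\Tr X\Delta|\le \|X\|\,\|\Delta\|_1$. The paper states this last combination in one line; your write-up simply makes that step explicit.
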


\begin{proof}
Since (ii) is trivial, we show only (i) by contradiction.
Since the set ${\cal S}({\cal D}) $ is the convex set,
we have the following in the same way as Theorem \ref{TH33}.
(i) For any separable state $\sigma_{AB}\in {\cal S}({\cal D})$,
we have  
\begin{align}
\Tr \sigma_{AB} (\log \sigma_{AB}^*({\cal D})- \log \rho_{AB})
\ge D(\sigma_{AB}^*({\cal D}) \|\rho_{AB}).\Label{NM92}
\end{align}
Due to the combination of \eqref{NM92} and \eqref{NNU},
any separable state $\sigma_{AB}\in {\cal S}$ satisfies 
\eqref{NM9}.
\end{proof}

\section{Numerical analysis}\Label{S8}
Next, we numerically demonstrate our algorithm
when 
$\cH_A$ is two qubits and $\cH_B$ is one qubit.
We parametrize a qubit pure state as
$|\phi_\theta\rangle:=
\frac{1}{\sqrt{2}}(|0\rangle+e^{\sqrt{-1}\theta }|1\rangle)$.
Consider the entangled state 
$|\Psi_p\rangle:=
\sqrt{p}|\phi_0\rangle|\phi_0\rangle|0\rangle
+\sqrt{1-p}|\phi_{\pi/2}\rangle|\phi_{\pi/2}\rangle
|1\rangle)$.
Given a real number $\lambda$ with $0 <\lambda < 1$,
we focus on the phase-damping channel
${\cal E}_\lambda$ defined as
\begin{align}
{\cal E}_\lambda
\left(\left(
\begin{array}{cc}
a_{0,0} & a_{0,1} \\
a_{1,0} & a_{1,1} 
\end{array}
\right)\right):=
\left(
\begin{array}{cc}
a_{0,0} & \lambda a_{0,1} \\
{\lambda} a_{1,0} & a_{1,1} 
\end{array}
\right).
\end{align}
This channel acts on the state $|\phi_\theta\rangle$ as 
\begin{align}
{\cal E}_\lambda(\rho)=
\frac{1+\lambda}{2}\rho
+
\frac{1-\lambda}{2}
\left(
\begin{array}{cc}
1 & 0 \\
0 & -1
\end{array}
\right)
\rho
\left(
\begin{array}{cc}
1 & 0 \\
0 & -1
\end{array}
\right).
\end{align}
That is, the phase-damping channel is replaced by the above 
over the state $|\Psi_p\rangle$.
Then, we define the state with two parameters 
$\lambda$ and $\delta$ as
\begin{align}
\rho_{p,\lambda,\delta}:=
(1-\delta)
({\cal E}_{\lambda} \otimes {\cal E}_{\lambda} \otimes Id)
(|\Psi_p\rangle
\langle \Psi_p|)
+\delta\rho_{mix},
\end{align}
where $\delta\rho_{mix}$ is the completely mixed state on 
$\cH_A\otimes \cH_B$.

Applying the unitary matrix
$V=\frac{1}{2}\left(
\begin{array}{cc}
1+\sqrt{-1} & 1-\sqrt{-1} \\
1-\sqrt{-1} & 1+\sqrt{-1}
\end{array}
\right)$
on both qubits in $\cH_A$, the above model is converted 
to the following model.
We define $|\tilde{\Psi}_p\rangle
:=\sqrt{p}|\phi_{\pi}\rangle|\phi_{\pi}\rangle|0\rangle
+\sqrt{1-p}|0\rangle|0\rangle|1\rangle)$.
We define 
the channel $\tilde{\cal E}_\lambda$ as
\begin{align}
\tilde{\cal E}_\lambda(\rho):=
\frac{1+\lambda}{2}\rho
+
\frac{1-\lambda}{2}
\left(
\begin{array}{cc}
0 & -1 \\
1 & 0
\end{array}
\right)
\rho
\left(
\begin{array}{cc}
0 & -1 \\
1 & 0
\end{array}
\right).
\end{align}
Then, we define the state with two parameters 
$\lambda$ and $\delta$ as
\begin{align}
\tilde{\rho}_{p,\lambda,\delta}:=
(1-\delta)
\big(\tilde{\cal E}_{\lambda} \otimes \tilde{\cal E}_{\lambda} \otimes Id\big)
(|\tilde{\Psi}_p\rangle\langle \tilde{\Psi}_p|)
+\delta\rho_{mix}.
\end{align}
That is, we have
$(V^{\otimes 2}\otimes I){\rho}_{p,\lambda,\delta}
(V^{\otimes 2}\otimes I)^\dagger=
\tilde{\rho}_{p,\lambda,\delta}$.
Hence, this model is essentially written with real matrix elements.

We make numerical experiments when $n$ is set to be $16$.
That is, $|{\cal D}_{n,d_B}|$ is $16^2$.
By setting $p=1/2$,
Fig. \ref{fig:2} plots the minimum relative entropy
$E_{{\cal D}}^*({\rho}_{1/2,\lambda,\delta})$ 
with $\delta=10^{-1},10^{-2},10^{-4}$.
When $\lambda$ is close to zero,
the state ${\rho}_{p,\lambda,\delta}$ approaches to
the set ${\cal S}_{AB}$ of separable states.

\begin{figure}[ht]
    \centering
    \includegraphics[scale=0.59]{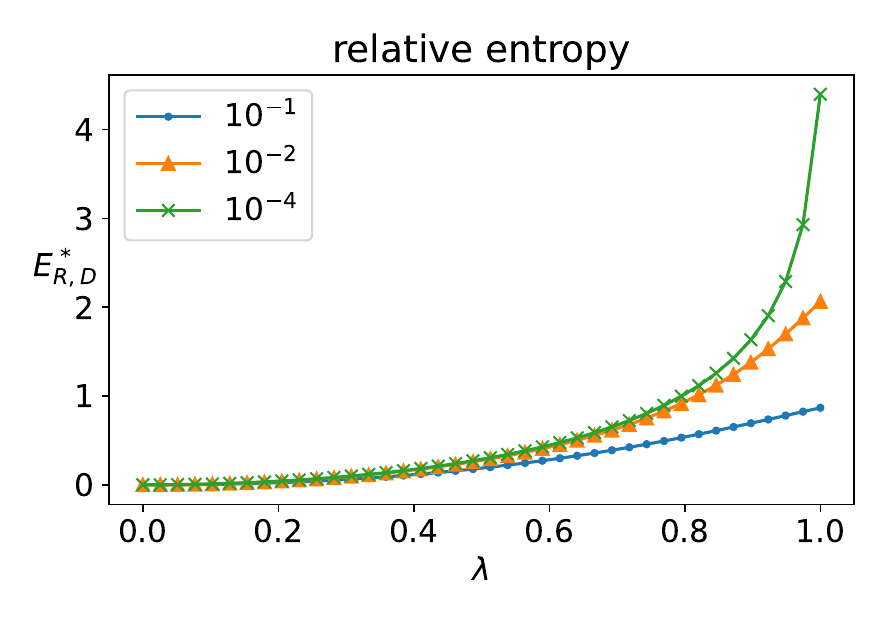}
    \caption{Minimum relative entropy
$E_{{\cal D}}^*({\rho}_{1/2,\lambda,\delta})$:
The vertical axis shows the minimum relative entropy.
The horizontal axis shows the parameter $\lambda$.
Blue (orange, green) curve presents the case with 
$\delta=10^{-1}$, ($10^{-2},10^{-4}$).}
    \label{fig:2}
\end{figure}

Next, we apply our algorithm to the membership problem
of separability by choosing $p=0.1$.
For this aim, ideally, it is needed to plot 
the boundary to satisfy the condition 
$E_{{\cal D}}^*({\rho}_{0.1,\lambda,\delta})=0$.
However, due to the limitation of the performance of our computer, we cannot handle a too small number.
Hence, instead of the ideal value $0$, we set the same small number $10^{-8}$.
That is, Figure \ref{fig:threshold} plots the boundary
to satisfy the condition 
$E_{{\cal D}}^*({\rho}_{0.1,\lambda,\delta})=10^{-8}$.
To compare our method with 
the partial transpose negative test.
This test also has the same problem to handle 
the case with zero eigenvalue.
Instead of the ideal value $0$, we set the same small number $-10^{-8}$ in the same way.
That is, this figure plots the boundary to satisfy the condition that
the minimum eigenvalue of the partial transposed state
is $-10^{-8}$.

These comparisons show that our method can detect entangled states that cannot be detected by 
the partial transpose negative test
even when all the matrix components of the density matrix 
of a $4 \times 2 $ system are written in real numbers. 

\begin{figure}[ht]
    \centering
    \includegraphics[scale=0.54]{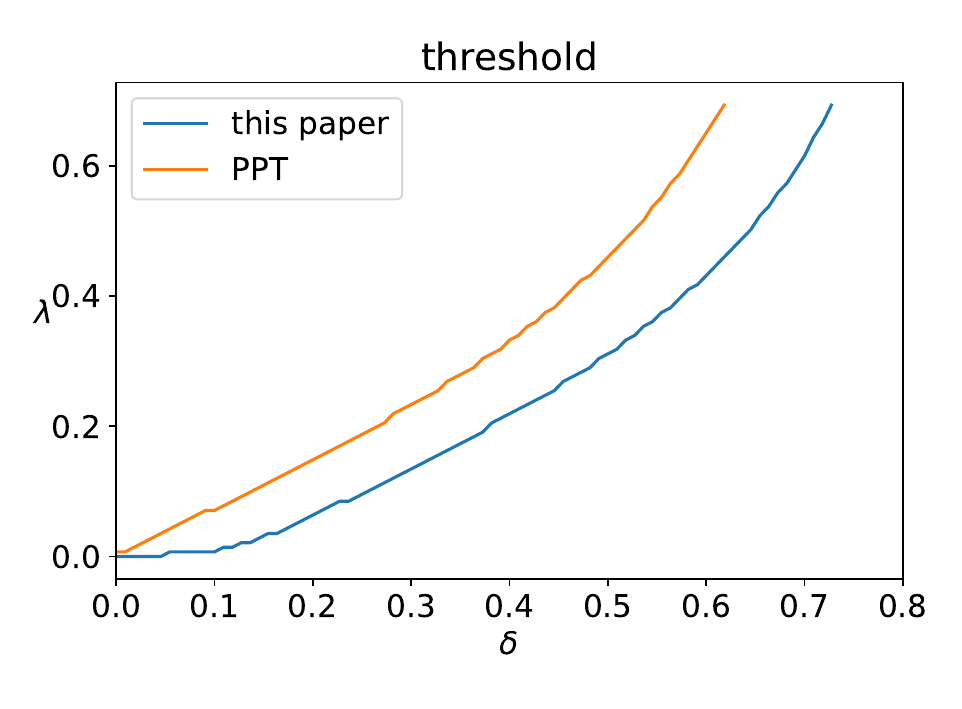}
    \caption{Comparison between our method and the partial transpose negative test: 
    The vertical axis shows the parameter $\lambda$.
    The horizontal axis shows the parameter $\delta$.
    Blue (Orange) curve shows our method (the partial transpose negative test). 
    The upper regions of both curves are composed of states that are guaranteed to be entangled by the respective methods.  
    }
    \label{fig:threshold}
\end{figure}

\section{Exponential quantum Sanov theorem}\Label{ESanov}
Finally, to clarify our contribution on 
quantum Sanov theorem,
we state our obtained result for quantum Sanov theorem in this section.
For this aim,
we consider a $d$-dimensional Hilbert space ${\cal H}$ and a state $\rho$ on $\cH$.
We assume the $n$-fold independently and identical distributed (iid) condition
and consider the following two hypotheses.
\begin{align}
H_0:& \hbox{The state is }\rho^{\otimes n},\\
H_1:& \hbox{The state is }\sigma^{\otimes n} \hbox{ with }\sigma \in S.
\end{align}
To address this problem, we focus on the following quantity.
\begin{align}
\beta_{\epsilon,n}(\rho):=
\min_{0\le T\le I}
\Big\{\Tr T \rho^{\otimes n} \Big|
\max_{\sigma\in S} \Tr (I-T)\sigma^{\otimes n}\le \epsilon
\Big\}.
\end{align}
The relation 
\begin{align}
\lim_{n \to \infty}-\frac{1}{n} \log \beta_{\epsilon,n}(\rho)
=E^*(\rho):=\min_{\sigma \in S}D(\sigma\|\rho)\Label{BVDY}
\end{align}
is known as quantum Sanov theorem \cite{Bjelakovic,Notzel}.
To analyze the case when the value $\epsilon$ goes to zero 
exponentially,
as the extension of $E^*(\rho)$,
we define the minimum sandwiched relative entropy as
\begin{align}
E_{\alpha}^*(\rho)
:=\min_{\sigma \in S}D_{\alpha}(\sigma\|\rho).
\end{align}

As an exponential version of quantum Sanov theorem, the following theorem holds.
\begin{theorem}\Label{thm6}
We have
\begin{align}
\lim_{n \to \infty}
-\frac{1}{n}\log \beta_{e^{-nr},n}(\rho)
&\ge 
\max_{0\le \alpha \le 1}\frac{
(1-\alpha) E_{\alpha}^*(\rho) - \alpha r}{1-\alpha}\Label{NBTY} \\
-\frac{1}{n} \log \beta_{1-e^{-nr},n}(\rho)
&\le 
\min_{\alpha \ge 1}\frac{
(1-\alpha) E_{\alpha}^*(\rho) - \alpha r}{1-\alpha}\Label{NBT2Y}.
\end{align}
More precisely, we have
\begin{align}
&-\frac{1}{n}\log \beta_{e^{-nr},n}(\rho)
\nonumber \\
\ge & 
-\frac{(d+2)(d-1)}{n}\log (n+1)
\nonumber \\
&+\max_{0 \le \alpha \le 1}\frac{(1-\alpha) E_{\alpha}^* (\rho) - \alpha r}{1-\alpha}\Label{NBTX}.
\end{align}
\end{theorem}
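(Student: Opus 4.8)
The plan is to reduce Theorem~\ref{thm6} to a single finite-$n$ estimate of $\beta_{e^{-nr},n}(\rho)$ against the \emph{composite} alternative $\{\sigma^{\otimes n}\}_{\sigma\in S}$, and then take limits. The key structural fact I would exploit is that the set of iid states $\{\sigma^{\otimes n}:\sigma\in S\}$ is contained in the convex hull of the permutation-invariant states on ${\cal H}^{\otimes n}$, and that on the symmetric subspace the number of inequivalent irreducible components is polynomial in $n$ — this is exactly where the factor $(n+1)^{(d+2)(d-1)}$ will come from, via a universal (state-independent) test built from a pinching onto the $O\big((n+1)^{d-1}\big)$ Schur--Weyl blocks together with the $O\big((n+1)^{d-1}\big)$ types inside each block, giving a polynomial bound of degree $(d+2)(d-1)$ after a crude accounting. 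So the first step is to record this ``universal symmetrization'' lemma: there is a projection-valued test $T_n$, invariant under $S_n$, such that controlling $\Tr(I-T_n)\sigma^{\otimes n}$ for the single worst-case $\sigma$ suffices, at the cost of the stated polynomial prefactor.

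Second, I would prove the achievability bound \eqref{NBTX}. Fix $\alpha\in[0,1)$ and let $\sigma_\alpha\in S$ be the minimizer defining $E^*_\alpha(\rho)$, i.e. $D_\alpha(\sigma_\alpha\|\rho)=E^*_\alpha(\rho)$. Against the \emph{simple} alternative $\sigma_\alpha^{\otimes n}$ versus $\rho^{\otimes n}$, the Audenaert--Nussbaum--Szko\l a--Mosonyi-type bound on the Hoeffding exponent gives, for the optimal test, $\Tr T\rho^{\otimes n}$ decaying like $e^{-n\frac{(1-\alpha)E^*_\alpha(\rho)-\alpha r}{1-\alpha}}$ while $\Tr(I-T)\sigma_\alpha^{\otimes n}\le e^{-nr}$; concretely one takes $T=\{\rho^{\otimes n}\ge e^{n a}\sigma_\alpha^{\otimes n}\}$ for the appropriate threshold $a$ and bounds both errors by a single application of the inequality $\Tr\big(P^{1-\alpha}Q^{\alpha}\big)\ge \Tr Q\,\{Q\ge cP\} \cdot c^{-\alpha}$ (and its companion), where $P=\rho^{\otimes n}$, $Q=\sigma_\alpha^{\otimes n}$. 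Because $D_\alpha$ is additive on tensor powers, $e^{(\alpha-1)D_\alpha(\sigma_\alpha^{\otimes n}\|\rho^{\otimes n})}=e^{n(\alpha-1)E^*_\alpha(\rho)}$, so the single-copy optimization already yields the $n$-copy exponent. Then combining this simple-hypothesis test with the universal symmetrization of the first step upgrades it to a test valid against all of $S$ simultaneously, picking up the $(d+2)(d-1)\log(n+1)$ correction; optimizing over $\alpha\in[0,1)$ gives \eqref{NBTX}, and letting $n\to\infty$ gives \eqref{NBTY}. Here one should note $\alpha=1$ is included by continuity since $\lim_{\alpha\to1}\frac{(1-\alpha)E^*_\alpha-\alpha r}{1-\alpha}=E^*(\rho)-$(nothing divergent only when $r=0$); for $r>0$ the supremum is attained in the open interval, so writing $\max_{0\le\alpha\le1}$ is harmless.

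Third, for the strong-converse bound \eqref{NBT2Y} I would argue in the contrapositive using the data-processing/monotonicity of $D_\alpha$ for $\alpha>1$. Suppose a test $T$ achieves $\Tr T\rho^{\otimes n}\ge 1-e^{-nr}$ but $\Tr(I-T)\sigma^{\otimes n}$ is extremely small for \emph{every} $\sigma\in S$; apply the two-outcome measurement $\{T,I-T\}$ and the monotonicity inequality $D_\alpha(\sigma^{\otimes n}\|\rho^{\otimes n})\ge D_\alpha\big(\{\Tr(I-T)\sigma^{\otimes n},\Tr T\sigma^{\otimes n}\}\,\big\|\,\{\Tr(I-T)\rho^{\otimes n},\Tr T\rho^{\otimes n}\}\big)$. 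Lower-bounding the binary sandwiched R\'enyi divergence on the right by keeping only the dominant term gives, after dividing by $n$ and using additivity $D_\alpha(\sigma^{\otimes n}\|\rho^{\otimes n})=nD_\alpha(\sigma\|\rho)$, a bound of the form $-\tfrac1n\log\beta_{1-e^{-nr},n}(\rho)\le\frac{\alpha-1}{\alpha}\cdot\frac{\alpha}{\alpha-1}\big(E^*_\alpha(\rho)-\tfrac{\alpha}{\alpha-1}\cdot\ldots\big)$; rearranged this is precisely $\frac{(1-\alpha)E^*_\alpha(\rho)-\alpha r}{1-\alpha}$, and since $\alpha>1$ is arbitrary we may take the infimum. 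Crucially the strong-converse direction needs no symmetrization — a single well-chosen $\sigma\in S$ suffices — so no polynomial prefactor appears there.

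The main obstacle I anticipate is the bookkeeping in the universal symmetrization step: getting the exponent of $(n+1)$ to be exactly $(d+2)(d-1)$ rather than a weaker bound, and making sure the universal test does not degrade the achievable exponent at all (only an additive $\frac1n$-order loss), which forces one to be careful that the Schur-type pinching commutes appropriately with the likelihood-ratio test $\{\rho^{\otimes n}\ge e^{na}\sigma^{\otimes n}\}$ — one wants to pinch first, or equivalently work within each isotypic block where both $\rho^{\otimes n}$ and the worst-case $\sigma^{\otimes n}$ are block-diagonal up to a controlled error. A secondary subtlety is justifying that it suffices to consider the single worst-case $\sigma$ inside each block uniformly in $n$; this is where a covering/net argument on the compact set $S$ (or a minimax exchange using compactness of $S$ and quasi-convexity of $\sigma\mapsto\Tr(I-T)\sigma^{\otimes n}$) enters, and one must check the net's cardinality contributes only sub-exponentially.
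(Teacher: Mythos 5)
Your converse argument for \eqref{NBT2Y} is essentially the paper's: Nagaoka's method, i.e., data processing of the sandwiched $D_\alpha$ ($\alpha>1$) under the two-outcome measurement $\{T,I-T\}$, keeping the dominant term, and then minimizing over $\sigma\in S$; no symmetrization is needed there, and that part of your proposal is sound (the intermediate algebra you write is garbled but the conclusion $E_\alpha^*(\rho)+\frac{\alpha r}{\alpha-1}$ is correct).

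The achievability direction has a genuine gap, and it sits exactly at the step you label ``universal symmetrization.'' You claim that, after projecting onto Schur--Weyl blocks, controlling $\Tr(I-T_n)\sigma^{\otimes n}$ for the \emph{single} worst-case $\sigma$ suffices, and you then build the test from the likelihood ratio against the single minimizer $\sigma_\alpha$ of $D_\alpha(\cdot\|\rho)$. This reduction is false: the constraint in $\beta_{e^{-nr},n}$ requires $\Tr(I-T)\sigma^{\otimes n}\le e^{-nr}$ for \emph{every} $\sigma\in S$ simultaneously, and a test tuned to $\sigma_\alpha$ gives no control over $\Tr(I-T)\sigma'^{\otimes n}$ for a far-away $\sigma'\in S$ (e.g.\ two nearly orthogonal states in $S$); permutation symmetrization cannot repair this, since it does not relate $\sigma^{\otimes n}$ and $\sigma'^{\otimes n}$ for distinct $\sigma,\sigma'$. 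The fallback you mention --- a covering net on $S$ --- also fails in this regime: to transfer an $e^{-nr}$ bound from net points to all of $S$ one needs the approximation error $n\delta$ to be of order $e^{-nr}$, i.e.\ an exponentially fine net, which destroys the union bound. The paper's actual construction is the missing idea: it takes $T_{n,\alpha,r}=\bigcup_{\sigma\in S}\{\Gamma_{\rho,n}(\sigma^{\otimes n})\ge e^{nR_{n,\alpha,r}}\rho^{\otimes n}\}$, a union over \emph{all} of $S$ of projections that commute because the pinching $\Gamma_{\rho,n}$ (defined by the Schur--Weyl blocks together with the eigenbasis of $\rho_\lambda$ inside each block) makes every $\Gamma_{\rho,n}(\sigma^{\otimes n})$ diagonal in one fixed basis of total dimension $d_n\le(n+1)^{(d+2)(d-1)/2}$. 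The union therefore collapses to a union of at most $d_n$ representative projections; each individual $T_n(\sigma,R_{n,\alpha,r})$ is contained in the union, so the type-I constraint holds for every $\sigma$ automatically, while the union bound over $d_n$ terms plus one further factor of $d_n$ from the pinching inequality (converting the pinched Petz-type trace into the sandwiched R\'enyi quantity) yields exactly the $d_n^2\le(n+1)^{(d+2)(d-1)}$ prefactor in \eqref{NBTX}. Without this commuting-union device, your plan does not produce a single test valid against the whole composite alternative.
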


Our bound is the optimal 
when the number of elements of $S$ is one and 
the unique element of $S$ is commutative with $\rho$.
Although the paper \cite{Notzel} derived an exponential 
evaluation for both error probabilities,
its exponent does not equal to the optimal bound 
even when the number of elements of $S$ is one and 
the unique element of $S$ is commutative with $\rho$.
In this sense, our evaluation is better than the upper bound by 
\cite{Notzel}.

The proof is given in Appendix \ref{AP4}.
Theorem \ref{thm6} proves Theorem \ref{TH1A}.
Under the limit $r\to 0$,
the RHS of \eqref{NBT} goes to 
$E_{1}^*(\rho)=E^*(\rho)$.
Hence, the relation \eqref{NBT} recovers the inequality $\ge$ in \eqref{BVD}
while the opposite inequality in \eqref{BVD} follows from the converse part of quantum Stein's lemma.

\section{Discussion}\Label{S11}
We have proposed new entanglement measures based on 
the distinguishability of the given entangled state from  
the set of separable states with two kinds of formulations
by using 
quantum relative entropy and sandwiched R\'{e}nyi relative entropy.
That is, we have proposed two kinds of measures based on 
quantum relative entropy.
This measure is different from the relative entropy of entanglement
\cite{Vedral1,Vedral2} because the choice of two input states are opposite to each other.

In the first formulation, the possible separable states are limited to tensor product states in this discrimination problem.
This case can be considered as a special case of quantum Sanov theorem \cite{Bjelakovic,Notzel}. 
Our analysis realizes an exponential evaluation for both kinds of error probabilities, which is better than the existing results \cite{Notzel}.
In the second formulation, 
the possible separable states are not limited to tensor product states.
This problem setting cannot be considered as a special case of
quantum Sanov theorem. 
Even in this setting, we have derived 
an exponential evaluation for both kinds of error probabilities.
Based on these two formulations, we have proposed two kinds of entanglement measures.
The former measure is not smaller than the latter measure.

Also, the proposed quantity is useful for finding entanglement witness
based on its information-geometrical structure.
In the next step, we have derived several useful calculation formulas for these measures.
In particular, we have shown that 
these two kinds of measures coincide with each other
when the entangled state is maximally correlated.
In particular, 
the entanglement measure based on the quantum relative entropy has been concretely derived. 
However, it is not easy to calculate these measures for general entangled states.
To resolve this problem, we have proposed algorithms to calculate 
these measures.
In the first step of this calculation, we have approximated 
these values by introducing the $\delta$-net on the second quantum system, which was introduced to derive an entanglement witness \cite[Section 3]{Ioannou}.
Then, we apply the quantum Arimoto-Blahut algorithm \cite{RISB}
to these approximating values.
Then, we have obtained an efficient algorithm 
for these approximating values.
These algorithms are shown to converge to the global minimum,
and their errors are evaluated in a simple form.  
 
Finally, we have applied our algorithm to the membership problem 
of separability.
Although it is known to be NP-hard 
when the dimensions of both Hilbert spaces increase \cite{Gurvits1,Gurvits2,Gharibian,Ioannou},
the calculation complexity is not so large when 
the dimension of the second Hilbert space is fixed \cite{MOP}.
In this case, our algorithm has a smaller calculation complexity
than the algorithm by \cite{MOP}.
Also, we have applied our algorithm to find an entanglement witness.
 
As a future problem, we can consider the following algorithm. 
We choose a $\delta$ net on the second system $\cH_B$, and apply our algorithm.
 Then, by fixing the obtained states on the system $\cH_A$,
our algorithm exchanges the roles of $\cH_A$ and $\cH_B$ so that we derive states on $\cH_B$.
In this way, we iteratively improve the states on $\cH_A$ and $\cH_B$.
We might expect that this algorithm works well even when  
the $\delta$ is not so small.
However, this algorithm solves 
the membership problem 
of separability, which is known to be NP-hard 
when the dimensions of both Hilbert spaces increase.
Hence, it is impossible to choose 
so small $\delta$ that the calculation time complexity is 
polynomial for the local dimension. 
It is an interesting problem whether such an iterative algorithm can improve
our proposed algorithm. 

\section*{Acknowledgement}
MH was supported in part by the National Natural Science Foundation of China under Grant 62171212.
MH is thankful to Professor Hiroshi Hirai for helpful discussions on the membership problem for separability and
the importance of separation hyperplane,
and informing the references \cite{Gurvits2,DPS2,GLS,MOP,HNW,Fawzi}. 
MH is thankful to Professor Marco Tomamichel and Mr. Roberto Rubboli for 
informing the references \cite{BCY}, \cite{EAP}, \cite{RT24}.
MH is grateful to Dr. Baichu Yu for 
informing the references \cite{TWKK,SCC}.
MH and YI are grateful for Professor Masaki Owari 
to a helpful discussion on the membership problem of separability
and informing the reference \cite{Ioannou}.

\appendices
\section{Construction of discrete subset}\Label{AP1}
\subsection{Real discrete subset}

First, similar to \cite[Section 5.4]{H-O},
we construct the real discrete subset ${\cal D}_{n,d,R}
\subset {\cal S}_d$, where ${\cal S}_d$ is the $d$-dimensional sphere in $\mathbb{R}^{d+1}$.
We define ${\cal D}_{n,1,R}$ as
\begin{align}
{\cal D}_{n,1,R}:=
\Big\{
\Big(\cos \frac{2\pi j}{n},\sin \frac{2\pi j}{n}\Big)\Big\}_{j=0}^{n-1}.
\end{align}
Then, we inductively define ${\cal D}_{n,d,R}$ as
\begin{align}
&{\cal D}_{n,d,R}\notag\\
:=&
\Big\{
\Big( \cos \frac{2\pi j}{n} v ,\sin \frac{2\pi j}{n}\Big)\Big|
v \in {\cal D}_{n,d-1,R},
j=0, \ldots, n-1
\Big\}.
\end{align}

We have
\begin{align}
F({\cal D}_{n,1,R})
:=& \min_{\psi_1\in {\cal S}_2 } \max_{\psi_2\in {\cal D}_{n,2,R}}
|\langle \psi_1|\psi_2\rangle |
= \cos \frac{\pi }{n}.
\end{align}
In general, we have
\begin{align}
F({\cal D}_{n,d,R})
=&\min_{\psi_1\in {\cal S}_d } \max_{\psi_2\in {\cal D}_{n,d,R}}
|\langle \psi_1|\psi_2\rangle|=
\cos^d \frac{\pi }{n}.\Label{ZBX}
\end{align}
The relation \eqref{ZBX} can be inductively shown as \eqref{E105} in the next page.
\begin{figure*}[!t]
\begin{align}
&F({\cal D}_{n,d}) 
= \min_{\psi_1\in {\cal S}_d } \max_{\psi_2\in {\cal D}_{n,d}}
|\langle \psi_1|\psi_2\rangle |\nonumber \\
=& \min_{\theta}\min_{\psi_1\in {\cal S}_{d-1} } 
\max_{j=0,\ldots, n-1}
\max_{\psi_2\in {\cal D}_{n,d-1}}
\Big|
\langle \psi_1|\psi_2\rangle \cos \theta \cos \frac{2\pi j}{n}
+
\sin \theta \sin \frac{2\pi j}{n}\Big|
\nonumber \\
=& \min_{\theta}\min_{\psi_1\in {\cal S}_{d-1} } 
\max_{j=0,\ldots, n-1}
\max_{\psi_2\in {\cal D}_{n,d-1}}
\Big|
\langle \psi_1|\psi_2\rangle 
\Big(\cos \theta \cos \frac{2\pi j}{n}
+
\frac{1}{\langle \psi_1|\psi_2\rangle }
\sin \theta \sin \frac{2\pi j}{n}
\Big)\Big|
\nonumber \\
=& \min_{\theta}\min_{\psi_1\in {\cal S}_{d-1} } 
\max_{\psi_2\in {\cal D}_{n,d-1}}
|\langle \psi_1|\psi_2\rangle |
\Big(\max_{j=0,\ldots, n-1}
\Big|
\cos \theta \cos \frac{2\pi j}{n}
+
\frac{1}{\langle \psi_1|\psi_2\rangle }
\sin \theta \sin \frac{2\pi j}{n}
\Big|\Big)
\nonumber \\
\ge & \min_{\theta}\min_{\psi_1\in {\cal S}_{d-1} } 
\max_{\psi_2\in {\cal D}_{n,d-1}}
|\langle \psi_1|\psi_2\rangle |
\nonumber \\
&
\cdot \Big(\max_{j=0,\ldots, n-1}
\max \Big(
\Big|
\cos \theta \cos \frac{2\pi j}{n}
+
\sin \theta \sin \frac{2\pi j}{n}
\Big|, 
\Big|
\cos \theta \cos \frac{2\pi j}{n}
-
\sin \theta \sin \frac{2\pi j}{n}
\Big|
\Big)
\Big)
\nonumber \\
= & \min_{\theta}\min_{\psi_1\in {\cal S}_{d-1} } 
\max_{\psi_2\in {\cal D}_{n,d-1}}
|\langle \psi_1|\psi_2\rangle |
\Big(\max_{j=0,\ldots, n-1}
\max \Big(
\Big|
\sin (\frac{2\pi j}{n}+\theta )
\Big|,
\Big|
\sin (\frac{2\pi j}{n}-\theta)
\Big|
\Big)
\Big)
\nonumber \\
= & 
\Big(\min_{\theta}
\max_{j=0,\ldots, n-1}
\max \Big(
\Big|
\sin (\frac{2\pi j}{n}+\theta )
\Big|,
\Big|
\sin (\frac{2\pi j}{n}-\theta)
\Big|
\Big)
\Big(\min_{\psi_1\in {\cal S}_{d-1} } 
\max_{\psi_2\in {\cal D}_{n,d-1}}
|\langle \psi_1|\psi_2\rangle|\Big)
\nonumber \\
\ge & \cos \frac{\pi}{n} F({\cal D}_{n,d-1})
= \cos^d \frac{\pi}{n}.\Label{E105}
\end{align}
\end{figure*}

\subsection{Complex discrete subset}
Next, we construct the complex discrete subset ${\cal D}_{n,d}
\subset \mathbb{C}^{d}$, where
all elements of ${\cal D}_{n,d}$ are normalized vectors.
Using ${\cal D}_{n,2d-2,R}$,
we define ${\cal D}_{n,d}$ as
\begin{align}
{\cal D}_{n,d}:=
\{&(x_1, x_2+x_3i,x_4+x_5i,\ldots, x_{2d-2}+x_{2d-1}i)|
\nonumber \\
&x=(x_1,\ldots, x_{2d-1}) \in {\cal D}_{n,2d-2,R}
\}.
\end{align}
The number of elements $|{\cal D}_{n,d}|$
is $ n^{2d-2}$.

Then, we have
\begin{align}
&\frac{1}{4}\epsilon_2({\cal D}_{n,d})^2
=1- \min_{\psi_1\in \mathbb{C}_n^{d}} \max_{\psi_2\in {\cal D}_{n,d}}
|\langle \psi_1|\psi_2\rangle|^2
\nonumber \\
\le &
1- 
(\min_{\psi_1\in \mathbb{C}_n^{d}} \max_{\psi_2\in {\cal D}_{n,d}}
\Re | \langle \psi_1|\psi_2\rangle| )^2
\nonumber \\
=& 1- 
(\min_{\psi_1\in S_{2d-2}} \max_{\psi_2\in {\cal D}_{n,2d-2,R}}
 \langle \psi_1|\psi_2\rangle)^2
\nonumber \\
=& 1- 
(\min_{\psi_1\in S_{2d-2}} \max_{\psi_2\in {\cal D}_{n,2d-2,R}}
 \langle \psi_1|\psi_2\rangle)^2 \nonumber \\
\le & 1-\cos^{2d} \frac{\pi}{n}
= 1-(1-\sin^{2} \frac{\pi}{n})^d
\le d \sin^{2} \frac{\pi}{n}
\le d \frac{\pi^2}{n^2}.
\end{align}

When we choose $n$ as
$ 4d \frac{\pi^2}{n^2}= \epsilon_2^2$, i.e., 
$n=\sqrt{\frac{4 d \pi^2}{\epsilon_2^2}}$,
the number of elements $|{\cal D}_{n,d}|$
is $ (\frac{2 d^{1/2} \pi}{\epsilon_2})^{d-1}$.

\section{Proof of Lemma \ref{L1}}\Label{AP2}
Since the relation \eqref{L1A} follows from 
\eqref{L1B} with taking the limit $\alpha\to 1$, we show only \eqref{L1B}.
Define the projection $P:=
\sum_j |u_{j,A}\rangle|u_{j,B}\rangle \langle u_{j,A}|\langle u_{j,B}|
$
and the TP-CP map
\begin{align}
\Lambda^*(\rho'):=
\sum_{j}
P\rho'P
+(I-P)\rho'(I-P).
\end{align}
We choose a product pure state $|v_A\rangle |v_B\rangle$.
The relation $\|(I-P)|v_A\rangle |v_B\rangle\|=0$
holds if and only if
the relations 
\begin{align}
\begin{split}
&|\{j: |\langle u_{j,A}|v_A\rangle | \neq 0\}
|=1 , \\
&\{j: |\langle u_{j,A}|v_A\rangle | \neq 0\}
= \{j: |\langle u_{j,B}|v_B\rangle | \neq 0\}
\end{split}
\Label{BN5}
\end{align}
hold.
When the above condition holds,
the relation 
$P|v_A\rangle |v_B\rangle
=|u_{j,A}\rangle|u_{j,B}\rangle$
holds for 
$j\in \{j: |\langle u_{j,A}|v_A\rangle | \neq 0\}
\cap \{j: |\langle u_{j,B}|v_B\rangle | \neq 0\}$.

Also, we have
$P|v_A\rangle |u_{j,B}\rangle
=(\langle u_{j,A}|v_A\rangle)
|u_{j,A}\rangle|u_{j,B}\rangle$.
Hence, we have
\begin{align}
&P(\sigma_A \otimes I_B) P=
\sum_{j}
P(\sigma_A \otimes |u_{j,B}\rangle \langle u_{j,B}|) P
\nonumber \\
=&
\sum_{j}
\langle u_{j,A}|\sigma_A  |u_{j,A}\rangle) |u_{j,A}\rangle|u_{j,B}\rangle \langle u_{j,A}|\langle u_{j,B}|.\Label{BM7}
\end{align}
Therefore,
\begin{align}
& C_\alpha^* \Big(\sum_{i,j}\theta_{i,j}|j\rangle \langle i|\Big)
\nonumber \\
=& \min_{p}D_\alpha\Big(\sum_{j}p_{j}|j\rangle \langle j|
\Big\| \sum_{i,j}\theta_{i,j}
|u_{j,A}\rangle|u_{j,B}\rangle \langle u_{i,A}|\langle u_{j,B}|\Big)\nonumber \\
\ge &
E_\alpha^*\Big(
\sum_{i,j}\theta_{i,j}
|u_{j,A}\rangle|u_{j,B}\rangle \langle u_{i,A}|\langle u_{j,B}|\Big).
\Label{BGR}
\end{align}

Now, we fix $\alpha \in (0,1]$.
For any separable state $\sigma_{AB} \in {\cal S}_{AB}$, we have
\begin{align}
\sigma_{AB}
=\sum_{j}p_j \rho_{A,j} \otimes \rho_{B,j}
\le \sum_{j}p_j \rho_{A,j} \otimes I_{B}
=\sigma_A \otimes I_B,
\end{align}
where $\sigma_A:=\Tr_B \sigma_{AB}$.
Hence, 
we have
\begin{align}
\rho_{AB}^{\frac{1-\alpha}{2\alpha}}
\sigma_{AB} \rho_{AB}^{\frac{1-\alpha}{2\alpha}}
\le
\rho_{AB}^{\frac{1-\alpha}{2\alpha}}
(\sigma_A \otimes I_B) \rho_{AB}^{\frac{1-\alpha}{2\alpha}}.
\end{align}
Since $x \mapsto x^\alpha$ is matrix monotone,
we have
\begin{align}
\Tr (\rho_{AB}^{\frac{1-\alpha}{2\alpha}}
\sigma_{AB} \rho_{AB}^{\frac{1-\alpha}{2\alpha}})^\alpha
\le
\Tr (\rho_{AB}^{\frac{1-\alpha}{2\alpha}}
(\sigma_A \otimes I_B) \rho_{AB}^{\frac{1-\alpha}{2\alpha}})^\alpha.
\end{align}
Thus, 
\begin{align}
E_\alpha^*(\rho_{AB})
\ge 
\min_{\sigma_{A} }D_{\alpha}(\sigma_{A} \otimes I_B\|\rho_{AB}).
\Label{BGR2}
\end{align}
We have
\begin{align}
&D_{\alpha} \Big(\sigma_{A} \otimes I_B\Big\|
\sum_{i,j}\theta_{i,j}
|u_{j,A}\rangle|u_{j,B}\rangle \langle u_{i,A}|\langle u_{i,B}|\Big)
\nonumber \\
\ge &
D_{\alpha}\Big(
\Lambda^*(\sigma_{A} \otimes I_B)\Big\|
\sum_{i,j}\theta_{i,j}
|u_{j,A}\rangle|u_{j,B}\rangle \langle u_{i,A}|\langle u_{i,B}|\Big) \nonumber \\
=&
D_{\alpha} \Big(
P(\sigma_A \otimes I_B) P
\Big\|
\sum_{i,j}\theta_{i,j}
|u_{j,A}\rangle|u_{j,B}\rangle \langle u_{i,A}|\langle u_{i,B}|\Big) \nonumber \\
=&
D_{\alpha} \Big(
\sum_{j}
\langle u_{j,A}|\sigma_A  |u_{j,A}\rangle) |u_{j,A}\rangle|u_{j,B}\rangle \langle u_{j,A}|\langle u_{j,B}|
\nonumber \\
&\Big\|
\sum_{i,j}\theta_{i,j}
|u_{j,A}\rangle|u_{j,B}\rangle \langle u_{i,A}|\langle u_{i,B}|\Big) \nonumber \\
=&
D_{\alpha} \Big(
\sum_{j}
\langle u_{j,A}|\sigma_A  |u_{j,A}\rangle) |j\rangle \langle j|
\Big\|
\sum_{i,j}\theta_{i,j}
|j\rangle \langle i|\Big).
\end{align}
Taking the minimum for $\sigma_A$, we have
\begin{align}
&\min_{\sigma_{A} }
D_{\alpha} \Big(\sigma_{A} \otimes I_B\Big\|
\sum_{i,j}\theta_{i,j}
|u_{j,A}\rangle|u_{j,B}\rangle \langle u_{i,A}|\langle u_{i,B}|\Big)
\nonumber \\
\ge &
C_\alpha^* \Big(\sum_{i,j}\theta_{i,j}|j\rangle \langle i|\Big).
\Label{BGR3}
\end{align}
The combination of \eqref{BGR}, \eqref{BGR2}, and \eqref{BGR3} yields \eqref{L1B} with $\alpha \in (0,1]$.

Next, we fix $\alpha>1$.
When a separable state $\sigma_{AB} \in {\cal S}_{AB}$
satisfies $ (I-P)\sigma_{AB}(I-P)\neq 0$,
$
D_{\alpha} \Big(\sigma_{AB}\Big\|
\sum_{i,j}\theta_{i,j}
|u_{j,A}\rangle|u_{j,B}\rangle \langle u_{i,A}|\langle u_{i,B}|\Big)
=\infty$.
Hence, we have
\begin{align}
&E_\alpha^*\Big(\sum_{i,j}\theta_{i,j}
|u_{j,A}\rangle|u_{j,B}\rangle \langle u_{i,A}|\langle u_{i,B}|\Big) \nonumber \\
=&
\min
\Big\{D_{\alpha} \Big(\sigma_{AB}\Big\|
\sum_{i,j}\theta_{i,j}
|u_{j,A}\rangle|u_{j,B}\rangle \langle u_{i,A}|\langle u_{i,B}|\Big) 
\nonumber \\
&\hspace{10ex}\Big|
\sigma_{AB} \in {\cal S}_{AB}, (I-P)\sigma_{AB}(I-P)= 0
\Big\}
\nonumber \\
\stackrel{(a)}{=}&
\min_{p}
D_{\alpha} \Big(
\sum_{j}p_j
|u_{j,A}\rangle|u_{j,B}\rangle \langle u_{j,A}|\langle u_{j,B}|
\nonumber \\
&\hspace{10ex}\Big\|
\sum_{i,j}\theta_{i,j}
|u_{j,A}\rangle|u_{j,B}\rangle \langle u_{i,A}|\langle u_{i,B}|\Big) .
\Label{BNY6}
\end{align}
The relation $(a)$ follows from the condition \eqref{BN5}.
The combination of \eqref{BGR} and \eqref{BNY6} yields \eqref{L1B} with $\alpha >1$.

\section{Proof of Theorems \ref{main} and \ref{mainB}}\Label{AP3}
\subsection{Generalized Arimoto-Blahut algorithm}\Label{}
Extending Arimoto-Blahut algorithm \cite{Arimoto,Blahut},
the paper \cite{RISB} proposed a general algorithm.  
Recently, this general algorithm was generalized to another form
\cite{Iterative}. 
We consider a finite-dimensional Hilbert space ${\cal H}$
and focus on the set ${\cal S}({\cal H})$ of density matrices on ${\cal H}$.
In this section, as a preparation, we consider a minimization problem over 
the set ${\cal S}({\cal H})$.
Given a continuous function $\Omega$ from the set ${\cal S}({\cal H})$ to 
the set of Hermitian matrices on ${\cal H}$, $B({\cal H})$, 
we consider the minimization
$\min_{\rho \in {\cal S}({\cal H})} {\cal G}(\rho)$;
\begin{align}
{\cal G}(\rho):= \Tr \rho \Omega[\rho].
\end{align}
In the following, we discuss the calculation of the following two problems;
\begin{align}
\overline{{\cal G}}(a):=\min_{\rho \in {\cal S}({\cal H})} {\cal G}(\rho), \quad
\rho_{*}:=\argmin_{\rho \in {\cal S}({\cal H})} {\cal G}(\rho).
\end{align}
For this aim, with $\gamma >0$,
we define the density matrix ${\cal F}[\sigma] $ as
${\cal F}[\sigma]:= \frac{1}{\kappa[\sigma]}
\exp( \log \sigma - \frac{1}{\gamma}\Omega[\sigma])$,
where $\kappa[\sigma]$ is the normalization factor
$\Tr \exp( \log \sigma -\frac{1}{\gamma}\Omega[\sigma])$.
The paper \cite{RISB} proposed Algorithm \ref{AL1}.

\begin{figure}
\begin{algorithm}[H]
\caption{Minimization of ${\cal G}(\rho)$}
\Label{AL1}
\begin{algorithmic}[1]
\State{Choose the initial value $\rho^{(1)} \in \mathcal{M}$;} 
\Repeat
\State{Calculate $\rho^{(t+1)}:={\cal F}[\rho^{(t)}]
$;}
\Until{convergence of ${\cal G}(\rho^{(t)})$.} 
\end{algorithmic}
\end{algorithm}
\end{figure}

For any two densities $\rho$ and $\sigma$, we define
\begin{align}
D_\Omega(\rho\|\sigma):=
\Tr \rho (\Omega[\rho]- \Omega[\sigma]).
\end{align}

The paper \cite[Theorem 3.3]{RISB} showed
the following theorem, which discusses the convergence to the global minimum 
and the convergence speed.

\begin{theorem}\Label{TH1}
When any two densities $\rho^{(t)}$ and $\rho^{(t+1)}$ in 
satisfy the condition 
\begin{align}
D_\Omega(\rho^{(t+1)}\|\rho^{(t)})
\le \gamma D(\rho^{(t+1)}\|\rho^{(t)})
\Label{BK1+} ,
\end{align}
and
a state $\rho_* $ satisfies 
\begin{align}
0 \le D_\Omega(\rho_*\|\rho)
\Label{BK2+} 
\end{align}
with any state $\rho$,
Algorithm \ref{AL1} satisfies the condition
\begin{align}
{\cal G}(\rho^{(t_0+1)})
-{\cal G}(\rho_{*})
\le 
\frac{\gamma D(\rho_{*}\| \rho^{(1)}) }{t_0} \Label{XME}
\end{align}
with any initial state $\rho^{(1)}$.
\end{theorem}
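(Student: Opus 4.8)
The plan is to realize every iteration $\rho^{(t+1)}={\cal F}[\rho^{(t)}]$ of Algorithm~\ref{AL1} as the \emph{exact} minimizer of a convex surrogate functional, and then to run a telescoping (mirror-descent style) estimate against the reference point $\rho_*$. For each $t$ I would introduce
\begin{align}
J_t(\rho):=\Tr\rho\,\Omega[\rho^{(t)}]+\gamma D(\rho\|\rho^{(t)}),
\end{align}
and first observe that, because $\rho^{(t)}$ has full support, $J_t$ is finite and strictly convex on ${\cal S}({\cal H})$; writing the first-order (Lagrange) condition for the constraint $\Tr\rho=1$ gives $\log\rho=\log\rho^{(t)}-\frac1\gamma\Omega[\rho^{(t)}]+\text{const}\cdot I$, which is exactly the definition of ${\cal F}[\rho^{(t)}]$. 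Hence $\rho^{(t+1)}=\argmin_{\rho\in{\cal S}({\cal H})}J_t(\rho)$ and, being an exponential of a Hermitian matrix, it again has full support.

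The central algebraic step is an exact three-point identity. Starting from the elementary relation $D(\rho\|\rho^{(t)})=D(\rho\|\rho^{(t+1)})+D(\rho^{(t+1)}\|\rho^{(t)})+\Tr(\rho-\rho^{(t+1)})(\log\rho^{(t+1)}-\log\rho^{(t)})$, inserting $\log\rho^{(t+1)}-\log\rho^{(t)}=-\frac1\gamma\Omega[\rho^{(t)}]-(\log\kappa[\rho^{(t)}])I$, and using $\Tr(\rho-\rho^{(t+1)})=0$, I obtain
\begin{align}
J_t(\rho)=J_t(\rho^{(t+1)})+\gamma D(\rho\|\rho^{(t+1)})\qquad\text{for every }\rho\in{\cal S}({\cal H}).
\end{align}
Applying this with $\rho=\rho_*$ and rearranging turns the ``potential drop'' $\gamma D(\rho_*\|\rho^{(t)})-\gamma D(\rho_*\|\rho^{(t+1)})$ into $\Tr\rho^{(t+1)}\Omega[\rho^{(t)}]+\gamma D(\rho^{(t+1)}\|\rho^{(t)})-\Tr\rho_*\Omega[\rho^{(t)}]$. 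Now the two hypotheses enter: by \eqref{BK1+}, $\gamma D(\rho^{(t+1)}\|\rho^{(t)})\ge\Tr\rho^{(t+1)}(\Omega[\rho^{(t+1)}]-\Omega[\rho^{(t)}])$, so the first two terms are at least $\Tr\rho^{(t+1)}\Omega[\rho^{(t+1)}]={\cal G}(\rho^{(t+1)})$; by \eqref{BK2+}, $\Tr\rho_*(\Omega[\rho_*]-\Omega[\rho^{(t)}])\ge0$, so $\Tr\rho_*\Omega[\rho^{(t)}]\le{\cal G}(\rho_*)$. Therefore $\gamma D(\rho_*\|\rho^{(t)})-\gamma D(\rho_*\|\rho^{(t+1)})\ge{\cal G}(\rho^{(t+1)})-{\cal G}(\rho_*)$. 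The same two bounds applied to $J_t(\rho^{(t)})\ge J_t(\rho^{(t+1)})$ (which holds since $\rho^{(t+1)}$ minimizes $J_t$) show ${\cal G}(\rho^{(t)})\ge{\cal G}(\rho^{(t+1)})$, i.e. the sequence $\{{\cal G}(\rho^{(t)})\}_t$ is non-increasing.

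Finally I would sum the per-step inequality over $t=1,\dots,t_0$: the left-hand side telescopes to $\gamma D(\rho_*\|\rho^{(1)})-\gamma D(\rho_*\|\rho^{(t_0+1)})\le\gamma D(\rho_*\|\rho^{(1)})$, while the right-hand side is at least $t_0\big({\cal G}(\rho^{(t_0+1)})-{\cal G}(\rho_*)\big)$ by the monotonicity just established, and dividing by $t_0$ gives exactly \eqref{XME}. The only genuinely delicate point is making the three-point relation an identity rather than merely an inequality — that is, knowing $\rho^{(t+1)}$ is the \emph{exact} interior minimizer of $J_t$ — which is precisely where the full support of each iterate is used; the rest is bookkeeping with the two hypotheses and the monotonicity. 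This is the scheme of \cite[Theorem 3.3]{RISB} cited in the statement, and I would present it in that form.
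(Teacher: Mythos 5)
Your proof is correct: the three-point identity $J_t(\rho)=J_t(\rho^{(t+1)})+\gamma D(\rho\|\rho^{(t+1)})$ follows by the direct computation you indicate, hypothesis \eqref{BK1+} converts $J_t(\rho^{(t+1)})$ into a lower bound by ${\cal G}(\rho^{(t+1)})$, hypothesis \eqref{BK2+} bounds $\Tr \rho_*\Omega[\rho^{(t)}]$ by ${\cal G}(\rho_*)$, and the monotonicity plus telescoping sum yield \eqref{XME} exactly as claimed. The paper does not prove Theorem \ref{TH1} itself but quotes it from \cite[Theorem 3.3]{RISB}, and your argument is precisely the standard mirror-descent scheme underlying that reference, so the two approaches coincide.
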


\subsection{Proof of Theorem \ref{main}}\Label{Ap3B}
We choose a classical-quantum state $\rho_{AC}$ as
\begin{align}
\rho_{AC}:= 
\sum_{ c \in {\cal C}}
|c\rangle \langle c| \otimes X_{A|c}.
\end{align}
We choose the function $\Omega_{R,{\cal D}}$ as
\begin{align}
&\Omega_{{\cal D}}[\rho_{AC}]
\notag\\
:=&
\sum_{ c \in {\cal C}}
 |c\rangle \langle c| \otimes 
 \Big( 
\Tr_B \Big(
\log \Big(\sum_{c' \in {\cal C}}X_{A|c'} \otimes \rho_{B|c'}^*\Big)  
\notag\\
&-
\log \rho_{AB}\Big)
(I_A \otimes \rho_{B|c}^*)
\Big).
\end{align}
Then, we have
\begin{align}
&\Tr \rho_{AC} \Omega_{\cal D}[\rho_{AC}]\notag\\
=&
\Tr_{AC} \rho_{AC} \sum_{ c \in {\cal C}}
 |c\rangle \langle c| \otimes 
 \Big( 
\Tr_B \Big(
\log \Big(\sum_{c' \in {\cal C}} X_{A|c'} \otimes \rho_{B|c'}^*\Big)  \notag\\
&-
\log \rho_{AB}\Big)
(I_A \otimes \rho_{B|c}^*)
\Big)\nonumber \\
=&
\Tr_{AC} 
(\sum_{ c'' \in {\cal C}}
 |c''\rangle \langle c''| \otimes X_{A|c''})
\sum_{ c \in {\cal C}}
 |c\rangle \langle c| \otimes \notag\\
& \Big( 
\Tr_B \Big(
\log \Big(\sum_{c' \in {\cal C}}X_{A|c'} \otimes \rho_{B|c'}^*\Big)  
-
\log \rho_{AB}\Big)
(I_A \otimes \rho_{B|c}^*)
\Big)\nonumber \\
=&
\Tr_{A} 
\sum_{ c \in {\cal C}}
X_{A|c}
 \Big( 
\Tr_B \Big(
\log \Big(\sum_{c' \in {\cal C}} X_{A|c'} \otimes \rho_{B|c'}^*\Big)  
-
\log \rho_{AB}\Big)\notag\\
&(I_A \otimes \rho_{B|c}^*)
\Big)\nonumber \\
=&
\Tr_{AB} 
\sum_{ c \in {\cal C}}
X_{A|c} \otimes \rho_{B|c}^*
\Big(
\log \Big(\sum_{c' \in {\cal C}} X_{A|c'} \otimes \rho_{B|c'}^*\Big)  \notag\\
&-
\log \rho_{AB}\Big)\notag\\
=& D\Big(\sum_{ c \in {\cal C}}
X_{A|c} \otimes \rho_{B|c}^*\Big\|
 \rho_{AB}\Big).
\end{align}
That is, we can apply 
Algorithm \ref{AL1}.
In this case, when we set 
$\sigma_{AC}[\{X_{A|c}\}_c]$ as
\begin{align}
\sigma_{AC}[\{X_{A|c}\}_c]:= 
\sum_{ c \in {\cal C}}
 |c\rangle \langle c| \otimes X_{A|c},\Label{NMI}
\end{align}
${\cal F}[\sigma_{AC}[\{X_{A|c}\}_c]]$ is written as
\begin{align}
&{\cal F}[\sigma_{AC}[\{X_{A|c}\}_c]]\notag \\
=&
\frac{1}{\kappa[\{X_{A|c}\}_c]}
\sum_{ c \in {\cal C}}
 |c\rangle \langle c| \otimes 
\exp(X_{A|c}-V_c[\{X_{A|c'}\}_{c'}]).
\end{align}
Algorithm \ref{AL2} is the same as the special case of Algorithm \ref{AL1}
with the above choice and $\gamma=1$.

Also, we have
\begin{align}
& D_{\Omega_{{\cal D}}}(\rho_{AC}\|\rho_{AC}')\nonumber \\
=&
\Tr_{AC} \rho_{AC} \sum_{ c \in {\cal C}}
 |c\rangle \langle c| \otimes 
 \Big( 
\Tr_B 
\Big(
\log \Big(\sum_{c' \in {\cal C}}X_{A|c'} \otimes \rho_{B|c'}^*\Big)  \notag\\
&-
\Big(
\log \Big(\sum_{c'' \in {\cal C}} X_{A|c''}' \otimes \rho_{B|c''}^*\Big)  
\Big)
(I_A \otimes \rho_{B|c}^*)
\Big)\nonumber \\
=&
\Tr_{AB} 
(\sum_{c \in {\cal C}}P(c) \rho_{A|c} \otimes \rho_{B|c}^*)
\Big(
\log \Big(\sum_{c' \in {\cal C}} X_{A|c'} \otimes \rho_{B|c'}^*\Big)  \notag\\
&-
\Big(
\log \Big(\sum_{c'' \in {\cal C}} X_{A|c''}' \otimes \rho_{B|c''}^*\Big)  
\Big)\nonumber \\
=& D\Big(\sum_{c' \in {\cal C}} X_{A|c'} \otimes \rho_{B|c'}^*
\Big\|
\sum_{c'' \in {\cal C}} X_{A|c''}' \otimes \rho_{B|c''}^*
\Big) \Label{NBI}\\
\le & D\Big(\sum_{c' \in {\cal C}}
|c'\rangle \langle c'| \otimes 
 X_{A|c'} \otimes \rho_{B|c'}^*
\Big\|
\sum_{c'' \in {\cal C}}
|c''\rangle \langle c''| \otimes \notag\\
& X_{A|c''}' \otimes \rho_{B|c''}^*
\Big)\nonumber \\
= & D\Big(\sum_{c' \in {\cal C}}
|c'\rangle \langle c'| \otimes 
 X_{A|c'} 
\Big\|
\sum_{c'' \in {\cal C}}
|c''\rangle \langle c''| \otimes X_{A|c''}' 
 \Big)\notag\\
=&
D (\rho_{AC}\| \rho_{AC}').
\end{align}
Also, \eqref{NBI} guarantees that the relation 
$D_{\Omega_{R,{\cal D}}}(\rho_{AC}\|\rho_{AC}')\ge 0$ holds.
Hence, the condition of Theorem \ref{TH1} holds.
That is, Theorem \ref{main} holds.

\subsection{Proof of Theorem \ref{mainB}}\Label{Ap3C}
We choose the function $\Omega_{\alpha,{\cal D}}$ as
\begin{align}
&\Omega_{\alpha,{\cal D}}[\rho_{AC}]\notag \\
:=&
\sgn (\alpha-1)\sum_{ c \in {\cal C}}
 |c\rangle \langle c| \otimes \notag\\
&\Tr_{B} 
 \Big( 
\rho_{AB}^{\frac{1-\alpha}{2\alpha}}
\Big( \rho_{AB}^{\frac{1-\alpha}{2\alpha}}
\Big(\sum_{c' \in {\cal C}} X_{A|c'} \otimes \rho_{B|c'}^*\Big)  
\rho_{AB}^{\frac{1-\alpha}{2\alpha}}\Big)^{\alpha-1}
\rho_{AB}^{\frac{1-\alpha}{2\alpha}}\Big)\notag\\
&(I_A \otimes \rho_{B|c}^*)
\Big).
\end{align}
Then, we have
\begin{align}
&\Tr \rho_{AC} \Omega_{\alpha,\cal D}[\rho_{AC}]\nonumber \\
=&
\sgn (\alpha-1)\Tr_{AC} \rho_{AC} 
\sum_{ c \in {\cal C}}
 |c\rangle \langle c| \otimes \notag\\
&\Tr_{B}  \Big( 
\rho_{AB}^{\frac{1-\alpha}{2\alpha}}
\Big( \rho_{AB}^{\frac{1-\alpha}{2\alpha}}
\Big(\sum_{c' \in {\cal C}} X_{A|c'} \otimes \rho_{B|c'}^*\Big)  
\rho_{AB}^{\frac{1-\alpha}{2\alpha}}\Big)^{\alpha-1}
\rho_{AB}^{\frac{1-\alpha}{2\alpha}}\Big)\notag\\
&(I_A \otimes \rho_{B|c}^*)
\Big)
\nonumber \\
=&
\sgn (\alpha-1)\Tr_{AC} 
(\sum_{ c'' \in {\cal C}}
 |c''\rangle \langle c''| \otimes X_{A|c''})
\sum_{ c \in {\cal C}}
 |c\rangle \langle c| \nonumber \\
& \otimes
\Tr_{B} 
 \Big( 
\rho_{AB}^{\frac{1-\alpha}{2\alpha}}
\Big( \rho_{AB}^{\frac{1-\alpha}{2\alpha}}
\Big(\sum_{c' \in {\cal C}} X_{A|c'} \otimes \rho_{B|c'}^*\Big)  
\rho_{AB}^{\frac{1-\alpha}{2\alpha}}\Big)^{\alpha-1}
\rho_{AB}^{\frac{1-\alpha}{2\alpha}}\Big)\notag\\
&(I_A \otimes \rho_{B|c}^*)
\Big)
\nonumber \\
=&
\sgn (\alpha-1)\Tr
(\sum_{ c \in {\cal C}}
 X_{A|c}\otimes \rho_{B|c}^*)\notag\\
& \Big( 
\rho_{AB}^{\frac{1-\alpha}{2\alpha}}
\Big( \rho_{AB}^{\frac{1-\alpha}{2\alpha}}
\Big(\sum_{c' \in {\cal C}} X_{A|c'} \otimes \rho_{B|c'}^*\Big)  
\rho_{AB}^{\frac{1-\alpha}{2\alpha}}\Big)^{\alpha-1}
\rho_{AB}^{\frac{1-\alpha}{2\alpha}}\Big)
\Big)
\nonumber \\
=&
\sgn (\alpha-1)\Tr 
\Big( \rho_{AB}^{\frac{1-\alpha}{2\alpha}}
\Big(\sum_{c' \in {\cal C}} X_{A|c'} \otimes \rho_{B|c'}^*\Big)  
\rho_{AB}^{\frac{1-\alpha}{2\alpha}}\Big)^{\alpha}
\Big).
\end{align}
That is, we can apply 
Algorithm \ref{AL1}.
In this case, when we set 
$\sigma_{AC}[\{X_{A|c}\}_c]$ as \eqref{NMI},
${\cal F}[\sigma_{AC}[\{X_{A|c}\}_c]]$ is written as 
$\frac{1}{\kappa[\{X_{A|c}\}_c]}
\exp(X_{A|c}
-\frac{\sgn (\alpha-1)}{\gamma} V_{\alpha,c}[\{X_{A|c'}^{(t)}\}_{c'}])$.
Algorithm \ref{AL5} is the same as the special case of Algorithm \ref{AL1}
with the above choice.

We denote the minimizer by 
$\sigma_{AC}[\{X_{A|c}^*\}_c]$.
For $\alpha \in [0,1]$,
We choose 
$Z_*=\Big( \rho_{AB}^{\frac{1-\alpha}{2\alpha}}
\Big(\sum_{c \in {\cal C}} X_{A|c}^* \otimes \rho_{B|c}^*\Big)  
\rho_{AB}^{\frac{1-\alpha}{2\alpha}}\Big)
$
and $Z=\Big( \rho_{AB}^{\frac{1-\alpha}{2\alpha}}
\Big(\sum_{c' \in {\cal C}} X_{A|c'} \otimes \rho_{B|c'}^*\Big)  
\rho_{AB}^{\frac{1-\alpha}{2\alpha}}\Big)
$. Since $\sigma_{AC}[\{X_{A|c}^*\}_c]$ is the minimizer,
we have
\begin{align}
\Tr Z_*^\alpha \ge \Tr Z^\alpha.
\end{align}
Since 
$Z_*^\alpha /\Tr Z_*^\alpha$ and 
$(Z^\alpha /\Tr Z^\alpha)$ are density matrices, 
we have
\begin{align}
\Tr(Z_*^\alpha /\Tr Z_*^\alpha)^{\frac{1}{\alpha}} 
(Z^\alpha /\Tr Z^\alpha)^{\frac{\alpha-1}{\alpha}}
\ge 1.
\end{align}
Thus, we have
\begin{align}
& D_{\Omega_{\alpha,{\cal D}}}(\sigma_{AC}[\{X_{A|c}^*\}_c]\|
\sigma_{AC}[\{X_{A|c}\}_c])\nonumber \\
=&
-\Tr_{AC} \sigma_{AC}[\{X_{A|c}^*\}_c] \sum_{ c \in {\cal C}}
 |c\rangle \langle c| \otimes \notag\\
& \Big( 
\Tr_B  \Big( 
 \Big( 
\rho_{AB}^{\frac{1-\alpha}{2\alpha}}
\Big( \rho_{AB}^{\frac{1-\alpha}{2\alpha}}
\Big(\sum_{c \in {\cal C}} X_{A|c}^* \otimes \rho_{B|c}^*\Big)  
\rho_{AB}^{\frac{1-\alpha}{2\alpha}}\Big)^{\alpha-1}
\rho_{AB}^{\frac{1-\alpha}{2\alpha}}\Big) \nonumber \\
&-
 \Big( 
\rho_{AB}^{\frac{1-\alpha}{2\alpha}}
\Big( \rho_{AB}^{\frac{1-\alpha}{2\alpha}}
\Big(\sum_{c' \in {\cal C}} X_{A|c'} \otimes \rho_{B|c'}^*\Big)  
\rho_{AB}^{\frac{1-\alpha}{2\alpha}}\Big)^{\alpha-1}
\rho_{AB}^{\frac{1-\alpha}{2\alpha}}\Big)
\Big)\notag\\
&(I_A \otimes \rho_{B|c}^*)
\Big)\nonumber \\
=&
-\Tr
\Big( \rho_{AB}^{\frac{1-\alpha}{2\alpha}}
\Big(\sum_{c \in {\cal C}} X_{A|c}^* \otimes \rho_{B|c}^*\Big)  
\rho_{AB}^{\frac{1-\alpha}{2\alpha}}\Big)\notag\\
&\Bigg(
\Big( \rho_{AB}^{\frac{1-\alpha}{2\alpha}}
\Big(\sum_{c \in {\cal C}} X_{A|c}^* \otimes \rho_{B|c}^*\Big)  
\rho_{AB}^{\frac{1-\alpha}{2\alpha}}\Big)^{\alpha-1}\nonumber \\
&-
\Big( \rho_{AB}^{\frac{1-\alpha}{2\alpha}}
\Big(\sum_{c' \in {\cal C}} X_{A|c'} \otimes \rho_{B|c'}^*\Big)  
\rho_{AB}^{\frac{1-\alpha}{2\alpha}}\Big)^{\alpha-1}
\Bigg)\nonumber \\
=&\Tr Z_* Z^{\alpha-1} -\Tr Z_*^\alpha \nonumber \\
=& 
( \Tr Z_* Z^{\alpha-1})
(\Tr Z_*^\alpha)^{\frac{\alpha-1}{\alpha}}
((\Tr Z_*^\alpha)^{\frac{1-\alpha}{\alpha}} -
(\Tr Z^\alpha)^{\frac{1-\alpha}{\alpha}} )\notag\\
&+
\Tr Z_*^\alpha (
\Tr(Z_*^\alpha /\Tr Z_*^\alpha)^{\frac{1}{\alpha}} 
(Z^\alpha /\Tr Z^\alpha)^{\frac{\alpha-1}{\alpha}}
-1) \nonumber \\
\ge & 0.
\end{align}
Hence, the condition \eqref{BK2+} holds.

We choose 
$Z'=\Big( \rho_{AB}^{\frac{1-\alpha}{2\alpha}}
\Big(\sum_{c \in {\cal C}} X_{A|c}' \otimes \rho_{B|c}^*\Big)  
\rho_{AB}^{\frac{1-\alpha}{2\alpha}}\Big)$.
Then, we have
\begin{align}
& D_{\Omega_{\alpha,{\cal D}}}(\sigma_{AC}[\{X_{A|c}'\}_c]\|
\sigma_{AC}[\{X_{A|c}\}_c])\\
=& 
( \Tr Z' Z^{\alpha-1})
(\Tr {Z'}^\alpha)^{\frac{\alpha-1}{\alpha}}
((\Tr {Z'}^\alpha)^{\frac{1-\alpha}{\alpha}} -
(\Tr Z^\alpha)^{\frac{1-\alpha}{\alpha}} ) \\
&+
\Tr {Z'}^\alpha (
\Tr({Z'}^\alpha /\Tr {Z'}^\alpha)^{\frac{1}{\alpha}} 
(Z^\alpha /\Tr Z^\alpha)^{\frac{\alpha-1}{\alpha}}
-1) .
\end{align}
When $Z' $ is close to $Z$,
$\Tr({Z'}^\alpha /\Tr {Z'}^\alpha)^{\frac{1}{\alpha}} 
(Z^\alpha /\Tr Z^\alpha)^{\frac{\alpha-1}{\alpha}}
-1
=O(\|Z'-Z\|^2)$
and
$(\Tr {Z'}^\alpha)^{\frac{1-\alpha}{\alpha}} -
(\Tr Z^\alpha)^{\frac{1-\alpha}{\alpha}} =O(\|Z'-Z\|)$.
However, when ${Z'}$ and $Z$ are close to $X_*$,
the first order derivative of $\Tr {Z}^\alpha$ is zero.
Hence, 
$(\Tr {Z'}^\alpha)^{\frac{1-\alpha}{\alpha}} -
(\Tr Z^\alpha)^{\frac{1-\alpha}{\alpha}} =O(\|Z'-Z\|^2)$.
Since $D(\sigma_{AC}[\{X_{A|c}'\}_c]\|
\sigma_{AC}[\{X_{A|c}\}_c])=O(\|Z'-Z\|^2)$,
the value
$\frac{D_{\Omega_{\alpha,{\cal D}}}(\sigma_{AC}[\{X_{A|c}'\}_c]\|
\sigma_{AC}[\{X_{A|c}\}_c])}{D(\sigma_{AC}[\{X_{A|c}'\}_c]\|
\sigma_{AC}[\{X_{A|c}\}_c])}$
takes a finite value when $\sigma_{AC}[\{X_{A|c}'\}_c]$ and $\sigma_{AC}[\{X_{A|c}\}_c]$ are close to $\sigma_{AC}[\{X_{A|c}^*\}_c]$.
Since $\{\sigma_{AC}[\{X_{A|c}^{(t)}\}_c]\}$ converges to an element,
$\frac{D_{\Omega_{\alpha,{\cal D}}}(\sigma_{AC}[\{X_{A|c}^{(t+1)}\}_c]\|
\sigma_{AC}[\{X_{A|c}^{(t)}\}_c])}{D(\sigma_{AC}[\{X_{A|c}^{(t+1)}\}_c]\|
\sigma_{AC}[\{X_{A|c}^{(t)}\}_c])}$
can be bounded by a finite value.
That is, when $\gamma>0$ is taken to be a sufficiently large value,
the condition \eqref{BK1+} holds.

\section{Proof of Theorem \ref{thm6}}\Label{AP4}
\subsection{Proof of \eqref{NBTY}}
Similar to the papers \cite{H-01,H-02,Notzel},
we employ Schur duality of $\cH^{\otimes n}$ as
\begin{align}
\cH^{\otimes n}=
\bigoplus_{\lambda \in Y_d^n}
{\cal U}_\lambda \otimes {\cal V}_\lambda,
\end{align}
${\cal U}_\lambda$ expresses the irreducible space of $\SU(d)$
and ${\cal V}_\lambda$ expresses the irreducible space of  
the representation $\pi$ of the permutation group $\frS_n$.
Then, for any state $\rho$, we have
\begin{align}
\rho^{\otimes n}=
\bigoplus_{\lambda \in Y_d^n}
\rho_\lambda \otimes \rho_{\lambda,mix}, \Label{BBR}
\end{align}
where $\rho_{\lambda,mix}$ is the completely mixed state on ${\cal V}_\lambda$.
Since the system ${\cal V}_\lambda$ has no information, it is sufficient to handle the states
$\oplus_{\lambda \in Y_d^n}
\rho_\lambda ,
\oplus_{\lambda \in Y_d^n}
\sigma_\lambda 
$ on $\oplus_{\lambda \in Y_d^n}
{\cal U}_\lambda$.
Here, we define the map $\kappa_n$ as
\begin{align}
\kappa_n(\rho'):= \sum_{\lambda \in Y_d^n} 
\Tr_{{\cal V}_\lambda} 
P({\cal U}_\lambda \otimes {\cal V}_\lambda)  \rho'P({\cal U}_\lambda \otimes {\cal V}_\lambda)
\end{align}
for a general state on $\cH^{\otimes n}$,
where $P({\cal K})$ expresses the projection to the subspace ${\cal K}$.
We choose a basis 
$\{|u_{\lambda,j}\rangle\}_{j=1}^{d_\lambda}$ 
of ${\cal U}_\lambda$
such that $|u_{\lambda,j}\rangle$ is an eigenvector of $\rho_\lambda $,
where $d_\lambda:= \dim {\cal U}_\lambda$.
We define the pinching map $\Gamma_{\rho,n}$ as
\begin{align}
&\Gamma_{\rho,n}(\rho')\nonumber \\
:=&\sum_{\lambda\in Y_d^n, j}
(|u_{\lambda,j}\rangle \langle u_{\lambda,j}| \otimes P( {\cal V}_\lambda))
\rho'
(|u_{\lambda,j}\rangle \langle u_{\lambda,j}| \otimes P( {\cal V}_\lambda))
\end{align}
for a general state $\rho'$ on $\cH^{\otimes n}$.
Due to the above mentioned structure,
states $\Gamma_{\rho,n}(\sigma^{\otimes n})$ and $\rho^{\otimes n}$
are commutative with each other.
We denote the dimension of 
$\oplus_{\lambda \in Y_d^n}
{\cal U}_\lambda$ by $d_n$.
Since $\dim {\cal U}_\lambda \le (n+1)^{d(d-1)/2}$
and $|Y_d^n|\le (n+1)^{d-1} $ \cite[(6.16) and (6.18)]{H-q-text},
we have 
\begin{align}
d_n \le  (n+1)^{(d+2)(d-1)/2}.\Label{BVT}
\end{align}
We have the pinching inequality:
\begin{align}
\rho'\le d_n \Gamma_{\rho,n}(\rho').\Label{NTD}
\end{align}

We choose 
\begin{align}
T_n(\sigma,R):= 
\{  \Gamma_{\rho,n}(\sigma^{\otimes n}) \ge e^{n R} \rho^{\otimes n}
\}.\Label{VCX}
\end{align}
Given $\alpha \in [0,1]$,
we choose $R_{n,\alpha,r}:= 
\frac{-\max_{\sigma \in S}
\log \Tr \Gamma_{\rho,n}(\sigma^{\otimes n})^{\alpha} (\rho^{\otimes n})^{1-\alpha}-n r}{n(1-\alpha)}
$ so that
$-n r= \max_{\sigma \in S}
\log \Tr \Gamma_{\rho,n}(\sigma^{\otimes n})^{\alpha} (\rho^{\otimes n})^{1-\alpha}
+(1-\alpha) n R_{n,\alpha,r}$.
Here, we address several projections that are commutative with each other.
For two commutative projections $P_1$ and  $P_2$, 
we define 
\begin{align}
P_1\cup P_2:= I-(I-P_1)(I-P_2).\Label{ASR}
\end{align}
Since it is natural to define $(I-P_1)\cap (I-P_2):=(I-P_1)(I-P_2)$,
the above definition for $P_1\cup P_2$ is also natural.
When $\{|u_j\rangle\}_j$ is a common eigenvector system of 
$P_1$ and $P_2$, they can be written as
$P_k=\sum_{j \in J_k}|u_j \rangle \langle u_j|$.
Then, we have 
$P_1\cup P_2=\sum_{j \in J_1 \cup J_2}|u_j \rangle \langle u_j|$.
Based in the definition \eqref{ASR},
we define the operator 
$T_{n,\alpha,r}$ as
\begin{align}
T_{n,\alpha,r}:=\bigcup_{\sigma \in S} T_n(\sigma,R_{n,\alpha,r}).
\end{align}

Although $T_n(\sigma,R_{n,\alpha,r})$ is an operator on $\cH^{\otimes n}$,
it can be written by using a projection operator 
$P_{\sigma,n,\lambda}$
on 
${\cal U}_\lambda$
as
\begin{align}
T_n(\sigma,R_{n,\alpha,r})
=\bigoplus_{\lambda \in Y_d^n}P_{\sigma,n,\lambda}\otimes P( {\cal V}_\lambda).
\end{align}
That is, we choose the projection operator 
$P_{\sigma,n,\lambda}$ as the above.
Hence, we have
\begin{align}
T_{n,\alpha,r}=\bigoplus_{\lambda \in Y_d^n}
\Big(\bigcup_{\sigma \in S} P_{\sigma,n,\lambda}\Big)\otimes 
P( {\cal V}_\lambda).
\Label{BSO}
\end{align}
Since 
all projections $P_{\sigma,n,\lambda}$ are diagonalized with 
the basis $\{|u_{\lambda,j}\rangle\}_{j=1}^{d_\lambda}$ 
of ${\cal U}_\lambda$,
it is sufficient to clarify whether 
the range of $P_{\sigma,n,\lambda}$ contains
$|u_{\lambda,j}\rangle$.
If the range of $P_{\sigma,n,\lambda}$ contains $|u_{\lambda,j}\rangle$,
there exists an element $\sigma_{\lambda,j} \in S$ such that 
$P_{\sigma_{\lambda,j},n,\lambda}$ contains $|u_{\lambda,j}\rangle$.
If the range of $P_{\sigma,n,\lambda}$ does not contain $|u_{\lambda,j}\rangle$,
no element $\sigma \in S$ satisfy the condition 
$P_{\sigma,n,\lambda}$ contains $|u_{\lambda,j}\rangle$.
Therefore, we can choose $d_\lambda$ elements $\sigma_{\lambda,1}, \ldots, 
\sigma_{\lambda,d_\lambda}$
in $S$ such that
\begin{align}
\bigcup_{\sigma \in S} P_{\sigma,n,\lambda}
=\bigcup_{j\in [d_\lambda]} P_{\sigma_{\lambda,j},n,\lambda},
\end{align}
where $[d_\lambda]:=\{1, \ldots, d_\lambda\} $.
Here, the elements $\sigma_{\lambda,j}$ are not necessarily distinct, in general.

Therefore, we have
\begin{align}
&T_{n,\alpha,r}
=\bigoplus_{\lambda \in Y_d^n}
\Big(\bigcup_{j\in [d_\lambda]} P_{\sigma_{\lambda,j},n,\lambda}\Big)
\otimes P( {\cal V}_\lambda)
\nonumber \\
=& \bigcup_{\lambda' \in Y_d^n, j\in [d_\lambda]}
\bigoplus_{\lambda \in Y_d^n}
( P_{\sigma_{\lambda',j},n,\lambda})\otimes P( {\cal V}_\lambda)
\nonumber \\
=& \bigcup_{\lambda' \in Y_d^n, j\in [d_\lambda]} T_n(\sigma_{\lambda',j},R_{n,\alpha,r}).\Label{BCK}
\end{align}
We have
\begin{align}
& \Tr \rho^{\otimes n} T_{n,\alpha,r}
\stackrel{(a)}{\le }
\Tr \rho^{\otimes n} 
\sum_{\lambda' \in Y_d^n, j\in [d_\lambda]} T_n(\sigma_{\lambda',j},
R_{n,\alpha,r})\nonumber \\
= & \sum_{\lambda' \in Y_d^n, j\in [d_\lambda]}
\Tr \rho^{\otimes n} 
T_n(\sigma_{\lambda',j},R_{n,\alpha,r}) \nonumber \\
\stackrel{(b)}{\le} &\sum_{\lambda' \in Y_d^n, j\in [d_\lambda]}
\Tr \Gamma_{\rho,n}(\sigma_{\lambda',j}^{\otimes n})^{\alpha} (\rho^{\otimes n})^{1-\alpha}
e^{-n\alpha R_{n,\alpha,r}} 
\nonumber \\
\stackrel{(c)}{\le} & d_{n}
\max_{\sigma \in S}
\Tr \Gamma_{\rho,n}(\sigma^{\otimes n})^{\alpha} (\rho^{\otimes n})^{1-\alpha}
e^{-n \alpha R_{n,\alpha,r}} \nonumber \\
\stackrel{(d)}{=} & d_{n}
\big(\max_{\sigma \in S} \Tr \Gamma_{\rho,n}(\sigma^{\otimes n})^\alpha 
(\rho^{\otimes n})^{1-\alpha} \big)
\nonumber \\
&\cdot 
\big(\max_{\sigma \in S} \Tr \Gamma_{\rho,n}(\sigma^{\otimes n})^\alpha (\rho^{\otimes n})^{1-\alpha}
e^{nr}\big)^{\frac{\alpha}{1-\alpha}}
\nonumber \\
= & d_{n}
(\max_{\sigma \in S} 
\Tr \Gamma_{\rho,n}(\sigma^{\otimes n})^{\alpha} 
(\rho^{\otimes n})^{1-\alpha})^{\frac{1}{1-\alpha}}
e^{n\frac{\alpha r}{1-\alpha} }.\Label{NMR1}
\end{align}
Each step can be shown as follows.
Step $(a)$ follows from \eqref{BCK}.
Step $(b)$ can be shown as follows.
Since the condition $\Gamma_{\rho,n}(\sigma^{\otimes n}) \ge e^{n R} \rho^{\otimes n}$ implies
$\Gamma_{\rho,n}(\sigma^{\otimes n})^{\alpha} \ge e^{\alpha n R} 
(\rho^{\otimes n})^{\alpha}$,
the definition \eqref{VCX} guarantees that
the inequality
\begin{align*}
&T_n(\sigma_{\lambda',j},R_{n,\alpha,r})\rho^{\otimes n}\\
=&T_n(\sigma_{\lambda',j},R_{n,\alpha,r})
(\rho^{\otimes n})^{1-\alpha}
(\rho^{\otimes n})^{\alpha} \\
\le&
\Tr T_n(\sigma_{\lambda',j},R_{n,\alpha,r})
\Gamma_{\rho,n}(\sigma_{\lambda',j}^{\otimes n})^{\alpha} (\rho^{\otimes n})^{1-\alpha}
e^{-n\alpha R_{n,\alpha,r}} \\
=&\Tr 
\Gamma_{\rho,n}(\sigma_{\lambda',j}^{\otimes n})^{\alpha} (\rho^{\otimes n})^{1-\alpha}
e^{-n\alpha R_{n,\alpha,r}} .
\end{align*}
Step $(c)$ follows from the relation $
\sum_{\lambda \in Y_d^n} d_\lambda= \dim 
\oplus_{\lambda \in Y_d^n}{\cal U}_\lambda =d_n$.
Step $(d)$ follows from the definition of $R_{n,\alpha,r}$.

For 
$\sigma \in S$,
we have
\begin{align}
&\Tr \sigma^{\otimes n} (I-T_{n,\alpha,r})
=\Tr \sigma^{\otimes n} \Gamma_{\rho,n}(I-T_{n,\alpha,r})
\nonumber \\
=& \Tr \Gamma_{\rho,n}(\sigma^{\otimes n}) (I-T_{n,\alpha,r})
\nonumber \\
\stackrel{(a)}{\le} & \Tr \Gamma_{\rho,n}(\sigma^{\otimes n}) (I-
T_n(\sigma,R_{n,\alpha,r})) \nonumber \\
\stackrel{(b)}{\le} & \Tr \Gamma_{\rho,n}(\sigma^{\otimes n})^{\alpha} (\rho^{\otimes n})^{1-\alpha}
e^{n(1-\alpha) R_{n,\alpha,r}}
\nonumber \\
\le & \max_{\sigma \in S}
\Tr \Gamma_{\rho,n}(\sigma^{\otimes n})^{\alpha} (\rho^{\otimes n})^{1-\alpha}
e^{n(1-\alpha) R_{n,\alpha,r}}
=e^{-nr}.\Label{NBC1}
\end{align}
Each step can be shown as follows.
Step $(a)$ follows from the relation
$T_n(\sigma,R_{n,\alpha,r}) \le 
T_{n,\alpha,r}$, which is shown by the relation \eqref{BSO}.
Step $(b)$ follows from \eqref{VCX} in the same way as
Step $(b)$ of \eqref{NMR1}.
Thus, we have
\begin{align}
-\frac{1}{n}\log \beta_{e^{-nr},n}(\rho)
\ge -\frac{1}{n}\log  \Tr \rho^{\otimes n} T_{n,\alpha,r}.\Label{MM1}
\end{align}

In fact, applying the matrix monotone function to \eqref{NTD},
we have
\begin{align}
&d_n^{\alpha-1} 
((\rho^{\otimes n})^{\frac{1-\alpha}{2\alpha}}
\Gamma_{\rho,n}(\sigma^{\otimes n})
(\rho^{\otimes n})^{\frac{1-\alpha}{2\alpha}})^{\alpha-1}
\nonumber \\
=&
\Big(d_n
\Gamma_{\rho,n}\Big(
(\rho^{\otimes n})^{\frac{1-\alpha}{2\alpha}}
\sigma^{\otimes n}
(\rho^{\otimes n})^{\frac{1-\alpha}{2\alpha}} \Big)
\Big)^{\alpha-1} \nonumber \\
\le &
\Big(
(\rho^{\otimes n})^{\frac{1-\alpha}{2\alpha}}
\sigma^{\otimes n}
(\rho^{\otimes n})^{\frac{1-\alpha}{2\alpha}} 
\Big)^{\alpha-1}.
\Label{NTD5}
\end{align}
Thus, we have
\begin{align}
&\Tr \Gamma_{\rho,n}(\sigma^{\otimes n})^{\alpha} 
(\rho^{\otimes n})^{1-\alpha}
\nonumber \\
=&
\Tr 
((\rho^{\otimes n})^{\frac{1-\alpha}{2\alpha}}
\Gamma_{\rho,n}(\sigma^{\otimes n})
(\rho^{\otimes n})^{\frac{1-\alpha}{2\alpha}})^\alpha \nonumber \\
=&
\Tr 
\Gamma_{\rho,n}(
(\rho^{\otimes n})^{\frac{1-\alpha}{2\alpha}}
\sigma^{\otimes n}
(\rho^{\otimes n})^{\frac{1-\alpha}{2\alpha}}))^\alpha \nonumber \\
=&
\Tr
\big(
\Gamma_{\rho,n}(
(\rho^{\otimes n})^{\frac{1-\alpha}{2\alpha}}
\sigma^{\otimes n}
(\rho^{\otimes n})^{\frac{1-\alpha}{2\alpha}}) \big)
\nonumber \\
&\cdot
 \Big(
\Gamma_{\rho,n}(
(\rho^{\otimes n})^{\frac{1-\alpha}{2\alpha}}
\sigma^{\otimes n}
(\rho^{\otimes n})^{\frac{1-\alpha}{2\alpha}})\Big)^{\alpha-1} \nonumber \\
=&
\Tr 
((\rho^{\otimes n})^{\frac{1-\alpha}{2\alpha}}
\sigma^{\otimes n}
(\rho^{\otimes n})^{\frac{1-\alpha}{2\alpha}})
\nonumber \\
&\cdot \Big(\Gamma_{\rho,n}(
(\rho^{\otimes n})^{\frac{1-\alpha}{2\alpha}}
\sigma^{\otimes n}
(\rho^{\otimes n})^{\frac{1-\alpha}{2\alpha}})\Big)^{\alpha-1} \nonumber \\
\stackrel{(a)}{\le} &
d_n^{1-\alpha}
\Tr 
((\rho^{\otimes n})^{\frac{1-\alpha}{2\alpha}}
\sigma^{\otimes n}
(\rho^{\otimes n})^{\frac{1-\alpha}{2\alpha}})
\nonumber \\
& \cdot ((\rho^{\otimes n})^{\frac{1-\alpha}{2\alpha}}
\sigma^{\otimes n}
(\rho^{\otimes n})^{\frac{1-\alpha}{2\alpha}})^{\alpha-1}\nonumber \\
=&
d_n^{1-\alpha}
\Tr 
((\rho^{\otimes n})^{\frac{1-\alpha}{2\alpha}}
\sigma^{\otimes n}
(\rho^{\otimes n})^{\frac{1-\alpha}{2\alpha}})^{\alpha}\nonumber \\
=&
d_n^{1-\alpha}
(\Tr 
(\rho^{\frac{1-\alpha}{2\alpha}}
\sigma
\rho^{\frac{1-\alpha}{2\alpha}})^{\alpha})^n,\Label{NMR2}
\end{align}
where Step $(a)$ follows from \eqref{NTD5}.

Combining \eqref{NMR1} and \eqref{NMR2}, we have
\begin{align}
& \Tr \rho^{\otimes n} T_{n,\alpha,r}
\le 
d_{n}^2
(\max_{\sigma \in S} 
(\Tr 
(\rho^{\frac{1-\alpha}{2\alpha}}
\sigma
\rho^{\frac{1-\alpha}{2\alpha}})^{\alpha})^{\frac{n}{1-\alpha}}
e^{n\frac{\alpha r}{1-\alpha} } \nonumber \\
=&
d_{n}^2
\exp \Big(n \frac{-(1-\alpha) E_{\alpha}^* (\rho) + \alpha r}{1-\alpha}\Big)\Label{NBVE}.
\end{align}

Since an arbitrary number $\alpha \in [0,1]$ satisfies \eqref{MM1},
\eqref{NBVE} yields
\begin{align}
&-\frac{1}{n}\log \beta_{e^{-nr},n}(\rho)
\nonumber \\
\ge &
-\frac{1}{n}\log d_{n}^2+
\max_{0 \le \alpha \le 1}\frac{(1-\alpha) E_{\alpha}^* (\rho) - \alpha r}{1-\alpha}.
\end{align}
Hence, using \eqref{BVT}, 
we obtain \eqref{NBTX} and \eqref{NBTY}.

\begin{remark}
Although the above presented proof employs
several techniques that come from prior works, 
this proof is not simple combination of 
techniques that come from prior works.
Our test is given in \eqref{BSO},
and it has not been introduced in existing references
\cite{Bjelakovic,Notzel}.
In fact, the method by \cite{Bjelakovic,Notzel} cannot be applied to the case
when $H_0$ hypothesis is the set of all separable states over $n$ bipartite systems, as opposed to an IID $n$-fold tensor product of a separable state.
To extend quantum Sanov theorem to this case, 
our obtained evaluation is essentially needed.
To upper bound the error probability,
we have invented a bounding method different from 
the existing bounding method for the simple hypothesis \cite{simple-hypo}
because the existing bounding method does not work in this case and it is related to
Petz relative R\'{e}nyi entropy.
In fact, the above evaluation requires
new careful handling the parameter $\alpha$
due to the matrix convexity.
\end{remark}

\subsection{Proof of \eqref{NBT2Y}}
Here, we employ the idea invented in \cite{Nagaoka}, whose detail is explained in \cite[Section 3.8]{H-text}.
We choose a test $T_n$ such that
\begin{align}
 \Tr \sigma^{\otimes n}T_n \ge e^{-nr}
\Label{NMA}
\end{align}
for $\sigma \in S$.
For $\alpha \ge 1$ and $\sigma \in S$,
the information processing inequality implies
\begin{align}
& D_{\alpha}(\sigma^{\otimes n}\|\rho^{\otimes n})
\nonumber \\
\ge &
\frac{1}{\alpha-1}\log 
\Big((\Tr \sigma^{\otimes n} T_n)^{\alpha}
(\Tr \rho^{\otimes n} T_n)^{1-\alpha}
\nonumber \\
&+
(\Tr \sigma^{\otimes n} (I-T_n))^{\alpha}
(\Tr \rho^{\otimes n} (I-T_n))^{1-\alpha}
\Big) \nonumber \\
\ge &
\frac{1}{\alpha-1}\log 
\Big((\Tr \sigma^{\otimes n} T_n)^{\alpha}
(\Tr \rho^{\otimes n} T_n)^{1-\alpha}
\Big) \nonumber \\
=&
\frac{\alpha}{\alpha-1}\log \Tr \sigma^{\otimes n} T_n
-\log \Tr \rho^{\otimes n} T_n \nonumber \\
\ge &-\frac{\alpha}{\alpha-1}nr
-\log \Tr \rho^{\otimes n} T_n.
\end{align}
This inequality is converted to
\begin{align}
-\frac{1}{n}\log  \Tr \rho^{\otimes n} T_n
\le 
\frac{1}{n} D_{\alpha}(\sigma^{\otimes n}\|\rho^{\otimes n})
+\frac{\alpha}{\alpha-1}r
\Label{BVT8}.
\end{align}
Since any $T_n$ satisfying \eqref{NMA} satisfies 
\eqref{BVT8}, we have
\begin{align}
&-\frac{1}{n}\log \beta_{1-e^{-nr},n}(\rho)
\le 
\frac{1}{n} D_{\alpha}(\sigma^{\otimes n}\|\rho^{\otimes n})
+\frac{\alpha r}{\alpha-1} \nonumber \\
=& D_{\alpha}(\sigma\|\rho)
+\frac{\alpha r}{\alpha-1}
\Label{BVT2}.
\end{align}
Since any $\sigma\in S$ satisfies \eqref{BVT2}, we have
\begin{align}
&-\frac{1}{n}\log \beta_{1-e^{-nr},n}(\rho)
\le 
\min_{\sigma \in S} D_{\alpha}(\sigma\|\rho)
+\frac{\alpha r}{\alpha-1}
\nonumber \\
=&
E_{\alpha}^*(\rho)
+\frac{\alpha r}{\alpha-1}
\Label{BVT3},
\end{align}
which implies \eqref{NBT2Y}.

\section{Proof of Theorem \ref{TH7}}\Label{AP5}
\subsection{Proof of \eqref{NBT3}}
We employ notations and results
in Appendix \ref{AP4} when $\cH=\cH_A \otimes \cH_B$
and $\rho=\rho_{AB}$.
We define $\Gamma_{\rho_{AB},n}$
and $d_n$ in the same way.

We define the map ${\cal T}_n$ as
${\cal T}_n(\sigma):=\sum_{g \in \frS_n}
\frac{1}{|\frS_n|} \pi(g)\sigma\pi(g)^\dagger$
for a state on $\cH^{\otimes n}$.
We define the set of permutation-invariant separable states as
\begin{align}
{\cal S}_{AB}^{n,sym}:=
\{\sigma \in {\cal S}_{AB}^n| \pi(g)\sigma\pi(g)^\dagger =\sigma
, \quad g \in \frS_n\}.
\end{align}

Then, we have the following lemma.
\begin{lemma}
\begin{align}
&\overline{\beta}_{\epsilon,n}(\rho_{AB})
\nonumber \\
=&
\min_{0\le T\le I}
\Big\{\Tr T \rho_{AB}^{\otimes n} \Big|
\max_{\sigma_{AB} \in {\cal S}_{AB}^{n,sym}}
\Tr(I-T)\sigma_{AB}\le \epsilon
\Big\}.
\end{align}
\end{lemma}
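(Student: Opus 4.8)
The plan is to establish the two inequalities between $\overline{\beta}_{\epsilon,n}(\rho_{AB})$ and the symmetrized quantity on the right-hand side. The inequality ``$\ge$'' is immediate: since ${\cal S}_{AB}^{n,sym}\subseteq{\cal S}_{AB}^n$, any test $T$ with $\max_{\sigma\in{\cal S}_{AB}^n}\Tr(I-T)\sigma\le\epsilon$ a fortiori satisfies $\max_{\sigma\in{\cal S}_{AB}^{n,sym}}\Tr(I-T)\sigma\le\epsilon$, so the feasible set defining $\overline{\beta}_{\epsilon,n}(\rho_{AB})$ is contained in the feasible set of the right-hand minimization, and minimizing the same objective $\Tr T\rho_{AB}^{\otimes n}$ over a larger set can only decrease the value.

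For the reverse inequality, I would symmetrize a feasible test of the right-hand problem using the twirling map ${\cal T}_n$. Let $T$ satisfy $0\le T\le I$ and $\Tr(I-T)\sigma\le\epsilon$ for all $\sigma\in{\cal S}_{AB}^{n,sym}$, and set $T':={\cal T}_n(T)=\sum_{g\in\frS_n}\frac{1}{|\frS_n|}\pi(g)T\pi(g)^\dagger$. Then $0\le T'\le I$ because $T'$ is a convex combination of the operators $\pi(g)T\pi(g)^\dagger$, each lying between $0$ and $I$. Moreover ${\cal T}_n$ is unital, trace preserving, and self-adjoint with respect to the Hilbert--Schmidt inner product (the last point follows by re-indexing the sum over $\frS_n$ via $g\mapsto g^{-1}$), and $\rho_{AB}^{\otimes n}$ is permutation invariant, so $\Tr T'\rho_{AB}^{\otimes n}=\Tr T\,{\cal T}_n(\rho_{AB}^{\otimes n})=\Tr T\rho_{AB}^{\otimes n}$; the objective is unchanged. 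It then remains to verify that $T'$ is feasible for the $\overline{\beta}_{\epsilon,n}$ problem, i.e.\ $\Tr(I-T')\sigma\le\epsilon$ for every $\sigma\in{\cal S}_{AB}^n$. Using ${\cal T}_n(I)=I$ and self-adjointness, $\Tr(I-T')\sigma=\Tr{\cal T}_n(I-T)\sigma=\Tr(I-T){\cal T}_n(\sigma)$, so it suffices to show ${\cal T}_n(\sigma)\in{\cal S}_{AB}^{n,sym}$.

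The step I expect to require the most care is precisely the claim that ${\cal T}_n$ preserves separability across the cut $\cH_A^{\otimes n}\,|\,\cH_B^{\otimes n}$. Under the canonical identification $(\cH_A\otimes\cH_B)^{\otimes n}\cong\cH_A^{\otimes n}\otimes\cH_B^{\otimes n}$, the permutation representation factorizes as $\pi(g)=\pi_A(g)\otimes\pi_B(g)$, where $\pi_A(g)$ and $\pi_B(g)$ permute the tensor factors of $\cH_A^{\otimes n}$ and $\cH_B^{\otimes n}$ respectively; hence $\pi(g)$ sends a product state $\sigma_{A^n}\otimes\sigma_{B^n}$ to $(\pi_A(g)\sigma_{A^n}\pi_A(g)^\dagger)\otimes(\pi_B(g)\sigma_{B^n}\pi_B(g)^\dagger)$, which is again of product form across $A^n|B^n$. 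Taking convex combinations shows $\pi(g)$ maps ${\cal S}_{AB}^n$ into itself, and averaging over $g\in\frS_n$ gives ${\cal T}_n(\sigma)\in{\cal S}_{AB}^n$; since ${\cal T}_n(\sigma)$ is permutation invariant by construction, in fact ${\cal T}_n(\sigma)\in{\cal S}_{AB}^{n,sym}$. Therefore $\Tr(I-T')\sigma=\Tr(I-T){\cal T}_n(\sigma)\le\epsilon$, so $T'$ is feasible for the $\overline{\beta}_{\epsilon,n}$ problem with objective value $\Tr T\rho_{AB}^{\otimes n}$, which proves ``$\le$''. Combining the two inequalities yields the lemma.
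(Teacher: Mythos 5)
Your proof is correct and follows essentially the same route as the paper: the trivial inclusion ${\cal S}_{AB}^{n,sym}\subseteq{\cal S}_{AB}^{n}$ for one direction, and twirling a feasible test with ${\cal T}_n$ for the other, using permutation invariance of $\rho_{AB}^{\otimes n}$ and the identity $\Tr(I-{\cal T}_n(T))\sigma=\Tr(I-T){\cal T}_n(\sigma)$. Your explicit verification that $\pi(g)=\pi_A(g)\otimes\pi_B(g)$ preserves separability across the $A^n|B^n$ cut is a welcome detail that the paper leaves implicit.
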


\begin{proof}
Since ${\cal S}_{AB}^{n,sym} \subset {\cal S}_{AB}^{n}$,
we have
\begin{align}
&\min_{0\le T\le I}
\Big\{\Tr T \rho_{AB}^{\otimes n} \Big|
\max_{\sigma_{AB} \in {\cal S}_{AB}^{n} }
\Tr(I-T)\sigma_{AB}\le \epsilon
\Big\}\nonumber \\
\ge &
\min_{0\le T\le I}
\Big\{\Tr T \rho_{AB}^{\otimes n} \Big|
\max_{\sigma_{AB} \in {\cal S}_{AB}^{n,sym} }
\Tr(I-T)\sigma_{AB}\le \epsilon
\Big\}.
\end{align}
Hence, it is sufficient to prove the opposite inequality.

We choose the test $T_{\epsilon}$ such that
any element $\sigma_{AB}' \in {\cal S}_{AB}^{n,sym}$ satisfies 
\begin{align}
&\Tr (I-T_{\epsilon}) \sigma_{AB}' \le \epsilon, \\
&\Tr T_{\epsilon} \rho_{AB}^{\otimes n}
\nonumber \\
 =&
\min_{0\le T\le I}
\Big\{\Tr T \rho_{AB}^{\otimes n} \Big|
\max_{\sigma_{AB} \in {\cal S}_{AB}^{n,sym} }
\Tr(I-T) \sigma_{AB}\le \epsilon
\Big\}.
\end{align}
Due to the symmetric property, 
any element $\sigma_{AB}' \in {\cal S}_{AB}^{n,sym}$ satisfies 
\begin{align}
& \Tr (I-{\cal T}_n(T_{\epsilon})) \sigma_{AB}' \le \epsilon, \\
&\Tr {\cal T}_n(T_{\epsilon}) \rho_{AB}^{\otimes n}
\nonumber \\
 =&
\min_{0\le T\le I}
\Big\{\Tr T \rho_{AB}^{\otimes n} \Big|
\max_{\sigma_{AB} \in {\cal S}_{AB}^{n,sym} }
\Tr(I-T) \sigma_{AB}\le \epsilon
\Big\}.
\end{align}
For any element $\sigma_{AB} \in {\cal S}_{AB}^{n}$,
we have
$
\Tr (I-{\cal T}_n(T_{\epsilon})) \sigma_{AB}
=\Tr (I-{\cal T}_n(T_{\epsilon})) \pi(g)\sigma_{AB}\pi(g)^\dagger$.
Since ${\cal T}_n(\sigma_{AB})\in {\cal S}_{AB}^{n,sym}$, we have
$
\Tr (I-{\cal T}_n(T_{\epsilon})) \sigma_{AB}
=\Tr (I-{\cal T}_n(T_{\epsilon})) {\cal T}_n(\sigma_{AB})\le \epsilon$.
Thus,
\begin{align}
&\min_{0\le T\le I}
\Big\{\Tr T \rho_{AB}^{\otimes n} \Big|
\max_{\sigma_{AB} \in {\cal S}_{AB}^{n} }
\Tr(I-T)\sigma_{AB}\le \epsilon
\Big\}\nonumber \\
\le &
\Tr {\cal T}_n(T_{\epsilon}) \rho_{AB}^{\otimes n}
= \Tr T_{\epsilon} \rho_{AB}^{\otimes n} \nonumber \\
=&\min_{0\le T\le I}
\Big\{\Tr T \rho_{AB}^{\otimes n} \Big|
\max_{\sigma_{AB} \in {\cal S}_{AB}^{n,sym} }
\Tr(I-T)\sigma_{AB}\le \epsilon
\Big\}.
\end{align}
\end{proof}

Since any element $\sigma \in 
{\cal S}_{AB}^{n,sym}$ is written as the form
\begin{align}
\sigma =
\bigoplus_{\lambda \in Y_d^n}
\rho_\lambda \otimes \rho_{\lambda,mix}
\end{align}
in the same way as \eqref{BBR},
replacing the set $\{\sigma^{\otimes n}|\sigma \in S\}$ 
of states
by ${\cal S}_{AB}^{n,sym}$,
given $r >0$,
we define $T_{n,R}$ and $R_{n,\alpha,r}$.
In the same way as \eqref{NMR1}, we have
\begin{align}
&\Tr\rho_{AB}^{\otimes n}T_{n,\alpha,r}\nonumber \\
\le &
d_n (\max_{\sigma \in {\cal S}_{AB}^{n,sym}}
\Tr \Gamma_{\rho_{AB},n}(\sigma)^{\alpha}
(\rho_{AB}^{\otimes n})^{1-\alpha})^{\frac{1}{1-\alpha}}
e^{n\frac{\alpha r}{1-\alpha}} \Label{NMR1B}
\end{align}
for $\alpha \in [0,1]$.
For $\sigma \in {\cal S}_{AB}^{n,sym}$, 
in the same way as \eqref{NBC1},
we have
\begin{align}
&\Tr \sigma (I-T_{n,\alpha,r})
\le e^{-nr}.
\end{align}
Thus, we have
\begin{align}
 -\frac{1}{n}\log \overline{\beta}_{e^{-nr},n}(\rho_{AB})
\ge -\frac{1}{n}\log  \Tr \rho_{AB}^{\otimes n} T_{n,\alpha,r}.\Label{MM1B}
\end{align}

In the same way as \eqref{NMR2}, 
we have
\begin{align}
& \Tr \Gamma_{\rho_{AB},n}(\sigma)^{\alpha} 
(\rho_{AB}^{\otimes n})^{1-\alpha}
\nonumber \\
\le &
d_n^{1-\alpha}
\Tr 
((\rho_{AB}^{\otimes n})^{\frac{1-\alpha}{2\alpha}}
\sigma(\rho_{AB}^{\otimes n})^{\frac{1-\alpha}{2\alpha}})^{\alpha}.
\Label{NMR2B}
\end{align}
Combining \eqref{NMR1B} and \eqref{NMR2B}, we have
\begin{align}
& \Tr \rho_{AB}^{\otimes n} T_{n,\alpha,r}
\le 
d_{n}^2
\exp \Big( \frac{-(1-\alpha) 
{E}_{\alpha}^* (\rho_{AB}^{\otimes n}) + n \alpha r}{1-\alpha}\Big)\Label{NBVEB}
\end{align}
in the same way as \eqref{NBVE}.

Since an arbitrary real $\alpha \in [0,1]$ satisfies \eqref{MM1},
\eqref{NBVEB} yields
\begin{align}
&-\frac{1}{n}\log \beta_{e^{-nr},n}(\rho_{AB})
\nonumber \\
\ge &
-\frac{1}{n}\log d_{n}^2+
\max_{0 \le \alpha \le 1}\frac{\frac{1}{n}(1-\alpha) 
E_{\alpha}^* (\rho_{AB}^{\otimes n}) - \alpha r}{1-\alpha},
\end{align}
using \eqref{BVT}, we obtain \eqref{NBT3} and
\eqref{NBT3X} by taking the limit.

\subsection{Proof of \eqref{NBT4}}
For $\alpha \ge 1$ and $\sigma \in {\cal S}_{AB}^{n}$,
in the same way as \eqref{BVT2}, we have
\begin{align}
-\frac{1}{n}\log \overline{\beta}_{1-e^{-nr},n}(\rho)
\le 
\frac{1}{n} D_{\alpha}(\sigma\|\rho^{\otimes n})
+\frac{\alpha r}{\alpha-1}
\Label{BVT2T}
\end{align}
for $\sigma \in {\cal S}_{AB}^{n}$.
Since any $\sigma\in {\cal S}_{AB}^{n}$ satisfies \eqref{BVT2T}, we have
\begin{align}
-\frac{1}{n}\log \overline{\beta}_{1-e^{-nr},n}(\rho)
\le &
\min_{\sigma \in {\cal S}_{AB}^{n}} \frac{1}{n}D_{\alpha}(\sigma\|\rho^{\otimes n})
+\frac{\alpha r}{\alpha-1}
\nonumber \\
=&
\frac{1}{n}E_{\alpha}^*(\rho^{\otimes n})
+\frac{\alpha r}{\alpha-1}
\Label{BVT3T},
\end{align}
which implies \eqref{NBT4X} and 
\eqref{NBT4} by taking the limit.


\begin{thebibliography}{99}

\bibitem{Guhne}
O. G\"{u}hne and G. T\'{o}th,  
``Entanglement detection,''
{\em Phys. Rep.}, vol. 474, 1–75 (2009).

\bibitem{BYE}
R. Blume-Kohout, J. O. S. Yin, and S. J. van Enk,
``Entanglement Verification with Finite Data,''
{\em Phys. Rev. Lett.} 105, 170501 (2010).

\bibitem{AGD}
J. M. Arrazola, O. Gittsovich, J. M. Donohue, 
J. L., K. J. Resch, and N. Lütkenhaus,
``Reliable entanglement verification,''
{\em Phys. Rev. A} 87, 062331 (2013)

\bibitem{Dimic}
A. Dimi\'{c} and B. Daki\'{c},  
``Single-copy entanglement detection,''
{\em npj Quantum Inf}, vol. 4, 11 (2018). 


\bibitem{Saggio}
V. Saggio, A. Dimi\'{c}, C. Greganti, et al. 
``Experimental few-copy multipartite entanglement detection,'' {\em Nat. Phys.}, vol. 15, 935–940 (2019). 

\bibitem{Saggio2}
V. Saggio and P. Walther, 
``Few-Copy Entanglement Detection in the Presence of Noise,'' 
{\em Annalen der Physik} 534, 2100597 (2022).

\bibitem{CDKKM}
P. Cieslinski, J. Dziewior, L. Knips, W. Klobus, 
J. Meinecke, T. Paterek, H. Weinfurter, and
W. Laskowski,
``High validity entanglement verification with finite copies of a quantum state,''
arXiv:2208.01983 (2022).

\bibitem{Vedral1}
V. Vedral, M.B. Plenio, M.A. Rippin, and P.L. Knight, 
``Quantifying entanglement,'' 
{\em Phys. Rev. Lett.}, vol. 78, 2275 (1997).

\bibitem{Vedral2}
V. Vedral, and M. B. Plenio, 
``Entanglement measures and purification procedures,'' 
{\em Phys. Rev. A}, vol. 57, 1619 (1998).

\bibitem{Berta}
M. Berta, F.G.S.L. Brand\~{a}o, and C. Hirche,  
``On Composite Quantum Hypothesis Testing,'' 
{\em Commun. Math. Phys.}, vol. 385, 55 -- 77 (2021). 

\bibitem{Q-cq-channel}
M. Hayashi and H. Nagaoka, ``General formulas for capacity of classical-quantum channels," 
{\em IEEE Transactions on Information Theory}, vol.49, no.7, pp.1753 -- 1768 (2003).

\bibitem{H-text}
M. Hayashi, 
{\em Quantum Information Theory: Mathematical Foundation}, Graduate Texts in Physics, Springer
(2017).
(First edition was published from Springer in 2006).

\bibitem{WR}
L. Wang and R. Renner,
``One-Shot Classical-Quantum Capacity and Hypothesis Testing,''
{\em Phys. Rev. Lett.} 108, 200501 (2012).

\bibitem{TH13-2}
M. Tomamichel and M. Hayashi, 
``A Hierarchy of Information Quantities for Finite Block Length Analysis of Quantum Tasks,''
{\em IEEE Transactions on Information Theory},
vol. 59, no. 11, 7693 -- 7710 (2013).

\bibitem{DMHB}
N. Datta, M. Mosonyi, M. -H. Hsieh and F. G. S. L. Brand\~{a}o, 
``A Smooth Entropy Approach to Quantum Hypothesis Testing and the Classical Capacity of Quantum Channels,'' 
{\em IEEE Transactions on Information Theory}, vol. 59, no. 12, pp. 8014-8026, Dec. (2013).

\bibitem{Vazquez-Vilar}
G. Vazquez-Vilar, 
``Multiple quantum hypothesis testing expressions and classical-quantum channel converse bounds,'' 
2016 IEEE International Symposium on Information Theory (ISIT), Barcelona, Spain, 2016, pp. 2854-2857.

\bibitem{CH18}
H.-C. Cheng and M.-H. Hsieh, 
``Moderate Deviation Analysis for Classical-Quantum Channels and Quantum Hypothesis Testing,''
{\em IEEE Transactions on Information Theory}, vol. 64, no. 2, page 1-19, Feb
(2018). 

\bibitem{AJW}
A. Anshu, R. Jain and N. A. Warsi, 
``A Hypothesis Testing Approach for Communication Over Entanglement-Assisted Compound Quantum Channel,'' 
{\em IEEE Transactions on Information Theory}, 
vol. 65, no. 4, pp. 2623-2636 (2019).

\bibitem{CH}
H. -C. Cheng and B. Nakibo\v{g}lu, 
``Refined Strong Converse for the Constant Composition Codes,'' 
2020 IEEE International Symposium on Information Theory (ISIT), Los Angeles, CA, USA, 2020, pp. 2149-2154.

\bibitem{BBGL}
M. Berta, F. G. S. L. Brand\~{a}o, G. Gour, L. Lami, M. B. Plenio, 
B. Regula, and M. Tomamichel,
``On a gap in the proof of the generalised quantum Stein’s lemma and its consequences for the reversibility of quantum resources,''
{\em Quantum} 7, 1103 (2023).

\bibitem{Bjelakovic}
I. Bjelakovi\'{c}, J.-D. Deuschel, T. Kr\"{u}ger, 
R. Seiler, R. Siegmund-Schultze, and A. Szko{{l}}a, 
``A Quantum Version of Sanov's Theorem,'' 
{\em Commun. Math. Phys.}, vol. 260, 659--671 (2005). 

\bibitem{Notzel}
J. N\"{o}tzel
``Hypothesis testing on invariant subspaces of the symmetric group: part I. Quantum Sanov's theorem and arbitrarily varying sources,''
{\em Journal of Physics A: Mathematical and Theoretical}, 
vol. 47, no. 23, 235303 (2014)

\bibitem{Amari-Nagaoka} 
S. Amari and H. Nagaoka, 
{\em Methods~of~Information~Geometry} (AMS and Oxford, 2000).

\bibitem{Gurvits1}
L. Gurvits, 
``Classical deterministic complexity of Edmonds’ problem and quantum entanglement,''
In {\em Proceedings of the thirty-fifth ACM symposium on Theory of computing},
pages 10–19, ACM Press, New York, (2003).
\bibitem{Gurvits2}
L. Gurvits,
``Classical complexity and quantum entanglement,''
{\em Journal of Computer and System Sciences},
vol. 69, no. 3, 448 -- 484 (2004)

\bibitem{Gharibian}
S. Gharibian,
``Strong NP-hardness of the quantum separability problem,''
{\em Quantum Info. Comput.}, vol. 10, no. 3, 343 -– 360 (2010). 

\bibitem{Ioannou}
L. M. Ioannou, ``Computational complexity of the quantum separability problem,''
{\em Quantum Inf. Comput.}, vol. 7, 335 (2007).

\bibitem{RISB}
N. Ramakrishnan, R. Iten, V. B. Scholz, and M. Berta,
``Computing Quantum Channel Capacities,''
{\em IEEE Trans. Inform. Theory}, 
vol. 67, 946 -- 960 (2021).

\bibitem{DPS1}
A. C. Doherty, P. A. Parrilo, and F. M. Spedalieri, 
``Distinguishing Separable and Entangled States,''
{\em Phys. Rev. Lett.}, vol. 88, 187904 (2002).

\bibitem{DPS2}
A. C. Doherty, P. A. Parrilo, and F. M. Spedalieri, 
``A complete family of separability criteria,''
{\em Phys. Rev. A}, vol. 69, 022308 (2004).

\bibitem{MOP}
M. Navascu\'{e}s, M. Owari, and M.B. Plenio,
``Power of symmetric extensions for entanglement detection,'' 
{\em Phys. Rev. A}, vol. 80, 052306 (2009).

\bibitem{BCY}
F.G.S.L. Brandao, M. Christandl, and J. Yard,  
``Faithful Squashed Entanglement,'' 
{\em Commun Math. Phys.}, 
vol. 306, 805–830 (2011). 

\bibitem{HNW}
A. W. Harrow, A. Natarajan, and X. Wu,  
``Limitations of Semidefinite Programs for Separable States and Entangled Games,'' 
{\em Commun. Math. Phys.}, vol. 366, 423 -– 468 (2019). 

\bibitem{Fawzi}
H. Fawzi, 
``The Set of Separable States has no Finite Semidefinite Representation Except in Dimension 3 $\times$ 2,''
{\em Commun. Math. Phys.} vol. 386, 1319 -– 1335 (2021). 

\bibitem{GLS}
M. Gr\"{o}tchel, L. Lov\'{a}sz, and A. Schrjver,
{\em Geometric Algorithms and Combinatorial Optimization}, Springer (1993).

\bibitem{Goberna}
M. A. Goberna, M. A. L\'{o}pez, and S.-Y. Wu, 
``Separation by Hyperplanes: A Linear Semi-Infinite Programming Approach,''
In: M. A. Goberna and M. A. L\'{o}pez (eds) 
{\em Semi-Infinite Programming}. Nonconvex Optimization and Its Applications, vol 57. Springer, Boston, MA.
(2001).

\bibitem{Kalantari}
B. Kalantari,
``An algorithmic separating hyperplane theorem and its applications,''
{\em Discrete Applied Mathematics},
vol. 256, 59 -- 82 (2019)

\bibitem{Nueda}
M. J. Nueda, C. Gand\'{i}a, and M. D. Molina,
``LPDA: A new classification method based on linear programming,''
{\em PLoS ONE} 17(7):e0270403.

\bibitem{H-O}
M. Hayashi and Y. Ouyang,
``Tight Cram\'{e}r-Rao type bounds for multiparameter quantum metrology through conic programming,''
{\em Quantum}, vol. 7, 1094 (2023).

\bibitem{Chitambar}
E. Chitambar, I. George, B. Doolittle, and M. Junge, 
``The communication value of a quantum channel,'' 
{\em IEEE Trans. Inform. Theory}, 
vol. 69, 1660 (2023).

\bibitem{Bruss}
D. Bru\ss, J.I. Cirac, P. Horodecki, F. Hulpke, B. Kraus, M. Lewenstein, and A. Sanpera, 
``Reflections upon separability and distillability,'' 
{\em J. Mod. Opt.} vol. 49, 1399 (2002).

\bibitem{Horodecki}
M. Horodecki, P. Horodecki, and R. Horodecki, 
``Separability of mixed states: Necessary and sufficient conditions,'' 
{\em Phys. Lett. A}, vol. 223, 1 (1996).

\bibitem{Terhal}
B.M. Terhal, 
``Bell inequalities and the separability criterion,'' 
{\em Phys. Lett. A}, vol. 271, 319 (2000).

\bibitem{Lewenstein}
M. Lewenstein, B. Kraus, J.I. Cirac, P. Horodecki, 
``Optimization of entanglement witnesses,'' 
{\em Phys. Rev. A}, vol. 62, 052310 (2000).

\bibitem{Terhal2}
B.M. Terhal, ``Detecting quantum entanglement,'' 
{\em J. Theor. Comput. Sci.}, vol. 287, 313  (2002).

\bibitem{TWKK}
G. T\'{o}th, W. Wieczorek, R. Krischek, N. Kiesel, P. Michelberger, and 
H. Weinfurter,
``Practical methods for witnessing genuine multi-qubit entanglement in the vicinity of symmetric states,''
{\em New J. Phys.} {\bf 11} 083002 (2009) 

\bibitem{SCC}
W. K. C. Sun, A. Cooper, and P. Cappellaro,
``Improved entanglement detection with subspace witnesses,''
{\em Phys. Rev. A} 101, 012319 (2020)

\bibitem{HST}
M. Hayashi, B.-S. Shi, A. Tomita, K. Matsumoto, Y. Tsuda, and Y.-K. Jiang, 
``Hypothesis testing for an entangled state produced by spontaneous parametric down conversion," 
{\em Phys. Rev. A}, Vol.74, 062321 (2006).

\bibitem{HMT}
M. Hayashi, K. Matsumoto, and Y. Tsuda, 
``A study of LOCC-detection of a maximally entangled state using hypothesis testing,'' 
{\em Journal of Physics A: Mathematical and General}, 
vol. 39, 14427 -- 14446 (2006).

\bibitem{H-ent}
M. Hayashi, 
``Group theoretical study of LOCC-detection of maximally entangled state using hypothesis testing,'' 
{\em New Journal of Physics}, vol. 11, no 4, 043028 (2009).

\bibitem{HM}
M. Hayashi and T. Morimae, 
``Verifiable measurement-only blind quantum computing with stabilizer testing,''
{\em Phys. Rev. Lett.}, vol. 115, 220502 (2015).

\bibitem{TM}
Y. Takeuchi and T. Morimae,  
``Verification of many-qubit states,'' 
{\em Phys. Rev. X}, vol. 8, 021060 (2018).

\bibitem{ZH}
H. Zhu and M. Hayashi, 
``Efficient Verification of Pure Quantum States in the Adversarial Scenario,'' 
{\em Phys. Rev. Lett.}, vol. 123, 260504 (2019).

\bibitem{Vedral}
V. Vedral and M. B. Plenio,
``Entanglement measures and purification procedures,''
{\em Phys. Rev. A} 57, 1619 (1998).

\bibitem{EAP}
J. Eisert, K. Audenaert and M. B. Plenio,
``Remarks on entanglement measures and non-local state distinguishability,''
{\em Journal of Physics A: Mathematical and General}, {\bf 36}, 5605 (2003).

\bibitem{RT24}
R. Rubboli and M. Tomamichel, 
``New Additivity Properties of the Relative Entropy of Entanglement and Its Generalizations,''
{\em Commun. Math. Phys.} {\bf 405} 162 (2024). 

\bibitem{VW01}
K.G.H. Vollbrecht and R.F. Werner, 
``Entanglement measures under symmetry,''
{\em Phys. Rev. A} 64(6), 062307 (2001). 

\bibitem{ZCH10}
H. Zhu, L. Chen, and M. Hayashi, 
``Additivity and non-additivity of multipartite entanglement measures,''
{\em New J. Phys.} {\bf 12}(8), 083002 (2010). 

\bibitem{Bertlmann1}
R. A. Bertlmann, H. Narnhofer, and W. Thirring, 
``Geometric picture of entanglement and Bell inequalities,'' 
{\em Phys. Rev. A} 66, 032319 (2002).

\bibitem{Bertlmann2}
R. A. Bertlmann and N. Friis, 
{\em Modern Quantum Theory - From Quantum Mechanics to Entanglement and Quantum Information}, 
Oxford University Press, Oxford, U.K., 2023.

\bibitem{Baichu}
B. Yu and M. Hayashi, ``Measurement-Device-Independent Detection of Beyond-Quantum State,''
arXiv:2312.06151 (2023).

\bibitem{Chitambar2}
E. Chitambar and G. Gour,  
``Comparison of incoherent operations and measures of coherence,''
{\em Phys. Rev. A}, vol. 94, 052336 (2016).

\bibitem{ZHC}
H. Zhu, M. Hayashi, and L. Chen, 
``Coherence and entanglement measures based on R\'{e}nyi relative entropies,''
{\em Journal of Physics A}, vol. 50, 475303 (2017).

\bibitem{diag}
T. Dekker and J. Traub,
``The shifted qr algorithm for hermitian matrices,'' 
{\em Linear Algebra Appl}, vol. 4, 137 –- 154 (1971).

\bibitem{diag2}
J. Banks, J. Garza-Vargas, A. Kulkarni, et al. 
``Pseudospectral Shattering, the Sign Function, and Diagonalization in Nearly Matrix Multiplication Time,'' 
{\em Found Comput Math} (2022).
https://doi.org/10.1007/s10208-022-09577-5

\bibitem{diag3}
J.J.M. Cuppen, 
``A divide and conquer method for the symmetric tridiagonal eigenproblem,''
{\em Numer. Math.} vol. 36, 177–195 (1980)

\bibitem{HT}
M. Hayashi and M. Tomamichel, 
``Correlation Detection and an Operational Interpretation of the Renyi Mutual
Information,'' 
{\em Journal of Physics A: Mathematical and General}, 
vol. 57, 102201 (2016).

\bibitem{Rains}
E.M. Rains, 
``Bound on distillable entanglement,'' 
{\em Phys. Rev. A} vol. 60, 179–184 (1999)

\bibitem{Hiroshima}
T. Hiroshima and M. Hayashi, 
``Finding amaximally correlated state-simultaneous Schmidt decomposition
of bipartite pure states,''
{\em Phys. Rev. A} vol. 70, 030302(R) (2004).

\bibitem{H-q-text}
M. Hayashi, 
{\em A Group Theoretic Approach to Quantum Information}, Springer (2017). 

\bibitem{Hiai}
F. Hiai, 
``Matrix analysis: matrix monotone functions, matrix means, and majorization,'' 
{\em Interdiscip. Inf. Sci.} vol. 16, 139 –- 248 (2010).

\bibitem{Arimoto} 
S. Arimoto, ``An algorithm for computing the capacity of arbitrary discrete memoryless channels,'' 
{\em IEEE Trans. Inform. Theory}, vol. 18, no. 1, 14 -- 20 (1972).

\bibitem{Blahut} 
R. Blahut, 
``Computation of channel capacity and rate-distortion functions,''
{\em IEEE Trans. Inform. Theory}, 
vol. 18, no. 4, 460 -- 473 (1972).

\bibitem{Iterative}
M. Hayashi
``Iterative minimization algorithm on mixture family,''
arXiv:2302.06905 (2023).

\bibitem{H-01}
M. Hayashi, ``Asymptotics of quantum relative entropy from a representation theoretical viewpoint," 
{\em Journal of Physics A: Mathematical and General}, 
vol. 34, no.16, 3413 -- 3419 (2001).

\bibitem{H-02}
M. Hayashi, ``Optimal sequence of quantum measurements in the sense of Stein's lemma in quantum hypothesis testing" 
{\em Journal of Physics A: Mathematical and General}, 
vol. 35, no. 50, 10759 -- 10773 (2002).

\bibitem{Nagaoka}
H. Nagaoka, 
``Strong converse theorems in quantum information theory,'' 
in {\em Proceedings of
ERATO Conference on Quantum Information Science (EQIS) 2001}, 
33 (2001). (also appeared
as Chap. 3 of Asymptotic Theory of Quantum Statistical Inference, M. Hayashi eds.)


\bibitem{simple-hypo}
M. Hayashi, 
``Error exponent in asymmetric quantum hypothesis testing and its application to classical-quantum
channel coding,'' 
{\em Physical Review A}, {\bf 76}, 062301 (2007).


\end{thebibliography}
\end{document}